\newcommand{\nocontentsline}[3]{}
\newcommand{\tocless}[2]{\bgroup\let\addcontentsline=\nocontentsline#1{#2}\egroup}
\newtheorem{theorem}{Theorem}
\newtheorem{lemma}[theorem]{Lemma}
\newtheorem{corollary}[theorem]{Corollary}
\newtheorem{proposition}[theorem]{Proposition}
\newtheorem{remark}[theorem]{Remark}
\newtheorem{definition}[theorem]{Definition}
\renewcommand{\O}{\mathcal{O}}
\newcommand{\R}{\mathbb{R}}
\newcommand{\C}{\mathbb{C}}
\newcommand{\Z}{\mathbb{Z}}
\newcommand{\MA}[2]{\mathcal{A}_{#2}^{(#1)}}
\newcommand{\SO}{\mathrm{SO}}
\newcommand{\SU}{\mathrm{SU}}
\newcommand{\U}{\mathrm{U}}
\newcommand{\FGU}{\mathrm{FGU}}
\newcommand{\NC}{\mathrm{NC}}
\newcommand{\Cl}{\mathrm{Cl}}
\newcommand{\Sym}{\mathrm{Sym}}
\newcommand{\Alt}{\mathrm{Alt}}
\newcommand{\Stab}{\mathrm{Stab}}
\DeclareMathOperator{\spn}{span}
\DeclareMathOperator{\E}{\mathbb{E}}
\DeclareMathOperator{\V}{Var}
\DeclareMathOperator{\tr}{tr}
\DeclareMathOperator{\loc}{loc}
\newcommand{\comb}[2]{\mathcal{C}_{#1, #2}}
\newcommand{\diags}[2]{\mathcal{D}_{#1, #2}}
\newcommand{\op}[2]{\ket{#1}\!\bra{#2}}
\newcommand{\ev}[3]{\langle #1 | #2 | #3 \rangle}
\newcommand{\ket}[1]{| #1 \rangle}
\newcommand{\bra}[1]{\langle #1 |}
\newcommand{\T}{\mathsf{T}}
\newcommand{\sns}[2]{\| #1 \|_{#2}}
\renewcommand{\l}[1]{\mathopen{}\left#1}
\renewcommand{\r}[1]{\right#1\mathclose{}}
\newcommand{\fs}{f_{\mathrm{samp}}}
\newcommand{\tl}{t_{\mathrm{load}}}
\begin{document}

\title{Fermionic partial tomography via classical shadows}

\author{Andrew Zhao}
\email{azhao@unm.edu}
\affiliation{Center for Quantum Information and Control, Department of Physics and Astronomy, University of New Mexico, Albuquerque, New Mexico 87106, USA}

\author{Nicholas C. Rubin}
\email{nickrubin@google.com}
\affiliation{Google Research, Mountain View, California, 94043, USA}

\author{Akimasa Miyake}
\email{amiyake@unm.edu}
\affiliation{Center for Quantum Information and Control, Department of Physics and Astronomy, University of New Mexico, Albuquerque, New Mexico 87106, USA}

\date{\today}

\begin{abstract}
	We propose a tomographic protocol for estimating any $ k $-body reduced density matrix ($ k $-RDM) of an $ n $-mode fermionic state, a ubiquitous step in near-term quantum algorithms for simulating many-body physics, chemistry, and materials. Our approach extends the framework of classical shadows, a randomized approach to learning a collection of quantum-state properties, to the fermionic setting. Our sampling protocol uses randomized measurement settings generated by a discrete group of fermionic Gaussian unitaries, implementable with linear-depth circuits. We prove that estimating all $ k $-RDM elements to additive precision $ \varepsilon $ requires on the order of $ \binom{n}{k} k^{3/2} \log(n) / \varepsilon^2 $ repeated state preparations, which is optimal up to the logarithmic factor. Furthermore, numerical calculations show that our protocol offers a substantial improvement in constant overheads for $ k \geq 2 $, as compared to prior deterministic strategies. We also adapt our method to particle-number symmetry, wherein the additional circuit depth may be halved at the cost of roughly 2--5 times more repetitions.
\end{abstract}

\maketitle

\emph{Introduction.}---One of the most promising applications of quantum computation is the study of strongly correlated systems such as interacting fermions. While quantum algorithms such as phase estimation~\cite{nielsen2002quantum,kitaev2002classical} allow for directly computing important quantities such as ground-state energies with quantum speedup~\cite{abrams1999quantum,somma2002simulating,aspuru2005simulated}, current hardware limitations~\cite{preskill2018quantum} have directed much attention toward variational methods. Of note is the variational quantum eigensolver (VQE)~\cite{peruzzo2014variational,mcclean2016theory}, where short-depth quantum circuits are repeatedly executed in order to estimate observable expectation values.

Initial bounds on the number of these circuit repetitions associated with fermionic two-body Hamiltonians were prohibitively high~\cite{wecker2015progress}, spurring on much recent work addressing this problem. We roughly classify these strategies into two categories:~those that specifically target energy estimates~\cite{mcclean2014exploiting,mcclean2016theory,kandala2017hardware,babbush2018low,rubin2018application,izmaylov2019revising,izmaylov2019unitary,huggins2019efficient,crawford2019efficient,zhao2020measurement,torlai2020precise,arrasmith2020operator,paini2019approximate,hadfield2020measurements,yen2020cartan,gonthier2020identifying,huang2021efficient,garcia2021learning,hillmich2021decision,hadfield2021adaptive,wu2021overlapped}, referred to as Hamiltonian averaging, and more general techniques that can learn the $ k $-body reduced density matrices ($ k $-RDMs) of a quantum state~\cite{aaronson2020shadow,*aaronson2018online,*aaronson2019gentle,yu2019quantum,*yu2020sample,verteletskyi2020measurement,jena2019pauli,yen2020measuring,gokhale2019minimizing,*gokhale2019on3,cotler2020quantum,bonet2019nearly,hamamura2020efficient,garcia2020pairwise,jiang2020optimal,evans2019scalable,huang2020predicting,smart2020lowering,tilly2021reduced}. (Not all works fit neatly into this dichotomy, e.g., Refs.~\cite{harrow2019low,wang2019accelerated,kubler2020adaptive,sweke2020stochastic,van2020measurement,wang2021minimizing}.) Hamiltonian averaging is ultimately interested in a single observable, allowing for heavy exploitation in its structure. In contrast, reconstructing an RDM requires estimating all the observables that parametrize it.

Though generally more expensive than Hamiltonian averaging, calculating the $ k $-RDM allows one to determine the expectation value of any $ k $-body observable~\cite{coleman1980reduced}. For example, the electronic energy of chemical systems is a linear functional of the 2-RDM, while in condensed-matter systems, effective models for electrons can require knowledge of the 3-RDM~\cite{tsuneyuki2008transcorrelated,PhysRevB.87.245129}. Beyond the energy, other important physical properties include pair-correlation functions and various order parameters~\cite{mazziotti2012two,jensen2017introduction}. The 2-RDM is also required for a host of error-mitigation techniques for near-term quantum algorithms~\cite{mcclean2017hybrid,rubin2018application,takeshita2020increasing}, which have been experimentally demonstrated to be crucial in obtaining accurate results~\cite{colless2018computation,sagastizabal2019experimental,mccaskey2019quantum,arute2020hartree}. Additionally, promising extensions to VQE such as adaptive ansatz construction~\cite{grimsley2019adaptive,ryabinkin2020iterative,tang2019qubit,wang2020resource} and multireference- and excited-state calculations~\cite{mcclean2017hybrid,parrish2019quantum,takeshita2020increasing,huggins2020non,stair2020multireference,urbanek2020chemistry} can require up to the 4-RDM.

Motivated by these considerations, in this work we focus on partial tomography for fermionic RDMs. While numerous works have demonstrated essentially optimal sample complexity for estimating qubit RDMs~\cite{cotler2020quantum,bonet2019nearly,jiang2020optimal,evans2019scalable,huang2020predicting}, such approaches necessarily underperform in the fermionic setting. Recognizing this fundamental distinction, Bonet-Monroig \emph{et al.}~\cite{bonet2019nearly} and Jiang \emph{et al.}~\cite{jiang2020optimal} developed measurement schemes that achieve optimal scaling for fermions. However, the former construction is not readily generalizable for $ k > 2 $, while the latter requires a doubling in the number of qubits and a specific choice of fermion-to-qubit mapping.

In this Letter, we propose a randomized scheme that is free from these obstacles. It is based on the theory of classical shadows~\cite{huang2020predicting}:~a protocol of randomly distributed measurements from which one acquires a partial classical representation of an unknown quantum state (its ``shadow''). Classical shadows are sufficient for learning a limited collection of observables, making this framework ideal for partial state tomography. Our key results identify efficient choices for the ensemble of random measurements, suitable for the structure of fermionic RDMs.

\emph{Fermionic RDMs.}---Consider a fixed-particle state $ \rho $ represented in second quantization on $ n $ fermion modes. The $ k $-RDM of $ \rho $, obtained by tracing out all but $ k $ particles, is typically represented as a $ 2k $-index tensor,
\begin{equation}
	{}^k D_{q_1 \cdots q_k}^{p_1 \cdots p_k} \coloneqq \tr(a_{p_1}^\dagger \cdots a_{p_k}^\dagger a_{q_k} \cdots a_{q_1} \rho),
\end{equation}
where $ a_p^\dagger, a_p $ are fermionic creation and annihilation operators, $ p \in \{0, \ldots, n-1\} $.
By linearity, these matrix elements may be equivalently expressed using Majorana operators, starting with the definitions
\begin{equation}
	\gamma_{2p} \coloneqq a_p + a_p^\dagger, \quad \gamma_{2p+1} \coloneqq -i (a_p - a_p^\dagger).
\end{equation}
Then for each $ 2k $-combination $ \bm{\mu} \equiv (\mu_1, \ldots, \mu_{2k}) $, where $ 0 \leq \mu_1 < \cdots < \mu_{2k} \leq 2n-1 $, we define a $ 2k $-degree Majorana operator
\begin{equation}\label{eq:majorana_defn}
	\Gamma_{\bm{\mu}} \coloneqq (-i)^{k} \, \gamma_{\mu_1} \cdots \gamma_{\mu_{2k}}.
\end{equation}
All unique $ 2k $-degree Majorana operators are indexed by the set of all $ 2k $-combinations of $ \{ 0, \ldots, 2n-1 \} $, which we shall denote by $ \comb{2n}{2k} $. Because Majorana operators possess the same algebraic properties as Pauli operators (Hermitian, self-inverse, and Hilbert--Schmidt orthogonal), any fermion-to-qubit encoding maps between the two in a one-to-one correspondence.

The commutativity structure inherited onto $ \comb{2n}{2k} $ constrains the maximum number of mutually commuting (hence simultaneously measurable) operators to be $ \O(n^k) $~\cite{bonet2019nearly}. As there are $ \O(n^{2k}) $ independent $ k $-RDM elements, this implies an optimal scaling of $ \O(n^k) $ measurement settings to account for all matrix elements.

\emph{Classical shadows and randomized measurements.}---We briefly review the framework of classical shadows introduced by Huang \emph{et al.}~\cite{huang2020predicting}, upon which we build our fermionic extension and prove sampling bounds. Let $ \rho $ be an $ n $-qubit state and $ \{ O_1, \ldots, O_L \} $ a set of $L$ traceless observables for which we wish to learn $ \tr(O_1 \rho), \ldots, \tr(O_L \rho)$. Classical shadows require a simple measurement primitive:~for each preparation of $ \rho $, apply the unitary map $ \rho \mapsto U \rho U^\dagger $, where $ U $ is randomly drawn from some ensemble $ \mathcal{U} $;~then perform a projective measurement in the computational basis, $ \{ \ket{z} \mid z \in \{ 0,1 \}^n \} $.

Suppose we have an efficient classical representation for inverting the unitary map on postmeasurement states, yielding $ U^\dagger \op{z}{z} U $. Then the process of repeatedly applying the measurement primitive and classically inverting the unitary may be viewed, in expectation, as the quantum channel
\begin{equation}
	\mathcal{M}_{\mathcal{U}}(\rho) \coloneqq \E_{U\sim\mathcal{U}, \ket{z} \sim U \rho U^\dagger} \l[ U^\dagger \op{z}{z} U \r],
\end{equation}
where $ \ket{z} \sim U \rho U^\dagger $ is defined by the usual probability distribution from Born's rule, $ \Pr[\ket{z} \mid U \rho U^\dagger] = \ev{z}{U \rho U^\dagger }{z} $. Informational completeness of $\mathcal{U}$  ensures that this channel is invertible, which allows us to define the classical shadow
\begin{equation}
	\hat{\rho}_{U,z} \coloneqq \mathcal{M}_{\mathcal{U}}^{-1}\l( U^\dagger \op{z}{z} U \r)
\end{equation}
associated with the particular copy of $ \rho $ for which $ U $ was applied and $ \ket{z} $ was obtained. Classical shadows form an unbiased estimator for $ \rho $, and so they can be used to estimate the expectation value of any observable $ O $:
\begin{equation}
	\E_{U\sim\mathcal{U}, \ket{z} \sim U \rho U^\dagger} \l[ \tr(O \hat{\rho}_{U,z}) \r] = \tr(O \rho).
\end{equation}

The number of repetitions $ M $ required to obtain an accurate estimate for each $ \tr(O_j \rho) $ is controlled by the estimator's variance, which may be upper bounded by
\begin{equation}\label{eq:var_shadows}
    \max_{\text{states } \sigma} \mathbb{E}_{\substack{U \sim \mathcal{U} \\ \ket{z} \sim U \sigma U^\dagger}} \l[ \ev{z}{U \mathcal{M}_{\mathcal{U}}^{-1}(O_j) U^\dagger}{z}^2 \r] \eqqcolon \sns{O_j}{\mathcal{U}}^2.
\end{equation}
This quantity is referred to as the (squared) shadow norm. Then by median-of-means estimation, one may show that
\begin{equation}\label{eq:cs_M}
	M = \O\l( \frac{\log L}{\varepsilon^2} \max_{1 \leq j \leq L} \sns{O_j}{\mathcal{U}}^2 \r)
\end{equation}
samples suffice to estimate all expectation values to within additive error $ \varepsilon $. To minimize Eq.~\eqref{eq:cs_M} for a fixed collection of observables, the only available freedom is in $ \mathcal{U} $. One must therefore properly choose the ensemble of unitaries, with respect to the target observables.

\emph{Naive application to fermionic observables.}---A natural ensemble for near-term considerations is the group of single-qubit Clifford gates, $ \Cl(1)^{\otimes n} $ (i.e., Pauli measurements). For an $ \ell $-local Pauli observable $ P $, Huang \emph{et al.}~\cite{huang2020predicting} showed that $ \sns{P}{\Cl(1)^{\otimes n}}^2 = 3^\ell $, similar to the results of prior approaches also based on Pauli measurements~\cite{cotler2020quantum,bonet2019nearly,jiang2020optimal,evans2019scalable}. Although optimal for qubit $ \ell $-RDMs, such strategies cannot achieve the desired $ \O(n^k) $ scaling in the fermionic setting due to the inherent nonlocality of fermion-to-qubit mappings. Indeed, assuming that the $ n $ fermion modes are encoded into $ n $ qubits, the 1-degree Majorana operators necessarily possess an average qubit locality of at least $ \log_3(2n) $~\cite{jiang2020optimal}. This implies that, under random Pauli measurements, the squared shadow norm maximized over all $ 2k $-degree Majorana operators cannot do better than $ 3^{2k \log_3 (2n)} = 4^k n^{2k} $. In fact, for commonly used mappings such as the Jordan--Wigner~\cite{jordanwigner} or Bravyi--Kitaev~\cite{bravyi2002fermionic,seeley2012bravyi,tranter2015bravyi,havlivcek2017operator} transformations, the scalings are poorer ($ 3^n $ and $ {\sim} \, 9^k n^{3.2k} $, respectively).

\emph{Randomized measurements with fermionic Gaussian unitaries.}---To obtain optimal scaling in the shadow norm for fermionic observables, we propose randomizing over a different ensemble:~the group of fermionic Gaussian Clifford unitaries. First, the group of fermionic Gaussian unitaries $ \FGU(n) $ comprises all unitaries of the form
\begin{equation}
	U(e^{A}) \coloneqq \exp\l( -\frac{1}{4} \sum_{\mu,\nu=0}^{2n-1} A_{\mu\nu} \gamma_\mu \gamma_\nu \r),
\end{equation}
where $ A = -A^\T \in \R^{2n \times 2n} $. This condition implies that $ \FGU(n) $ is fully characterized by the Lie group $ \SO(2n) $~\cite{sattinger1986lie}. In particular, the adjoint action
\begin{equation}\label{eq:gaussian_adjoint_action}
	U(Q)^\dagger \gamma_\mu U(Q) = \sum_{\nu=0}^{2n-1} Q_{\mu\nu} \gamma_\nu \quad \forall Q \in \SO(2n)
\end{equation}
allows for efficient classical simulation of this group~\cite{knill2001fermionic,terhal2002classical,bravyi2004lagrangian,divincenzo2005fermionic,jozsa2008matchgates}. Second, the Clifford group $ \Cl(n) $ is the set of all unitary transformations that permute $ n $-qubit Pauli operators among themselves. It also admits an efficient classical representation~\cite{gottesman1998heisenberg,aaronson2004improved}. 

Because Majorana operators are equivalent to Pauli operators, we may deduce from Eq.~\eqref{eq:gaussian_adjoint_action} that a unitary that is both Gaussian and Clifford corresponds to $ Q $ being a signed permutation matrix. Note that this defines the full group of Majorana swap circuits~\cite{bonet2019nearly}. As the signs are irrelevant for our purpose, we simply consider the group of $ 2n \times 2n $ permutation matrices with determinant 1, known as (the faithful matrix representation of) the alternating group, $ \Alt(2n) $.

Concretely, we set
\begin{equation}\label{eq:FGU_ensemble}
	\mathcal{U}_{\FGU} \coloneqq \{ U(Q) \in \FGU(n) \mid Q \in \Alt(2n) \}.
\end{equation}
Given the context of fermionic tomography, the motivation for studying $ \FGU(n) $ is clear, as it preserves the degree of Majorana operators. On the other hand, the restriction to the discrete Clifford elements is valuable for practical considerations. As we show in Sec.~\ref{sec:fgu_appendix} of the Supplemental Material (SM), the permutational property of Clifford transformations necessarily implies that $ \mathcal{M}_{\FGU} $, as a linear map on the algebra of fermionic observables, is diagonal in the Majorana-operator basis,
\begin{equation}
    \mathcal{M}_{\FGU}(\Gamma_{\bm{\mu}}) = \lambda_{\bm{\mu}} \Gamma_{\bm{\mu}} \quad \forall \bm{\mu} \in \comb{2n}{2k},
\end{equation}
with eigenvalues
\begin{equation}
    \lambda_{\bm{\mu}} = \l. \binom{n}{k} \middle/ \binom{2n}{2k} \r. \equiv \lambda_{n,k}.
\end{equation}
In this diagonal form, the channel is readily invertible. Thus one may obtain closed-form expressions for the classical shadows $ \hat{\rho}_{Q,z} $, and, importantly, their corresponding estimators for $ \tr(\Gamma_{\bm{\mu}} \rho) $:
\begin{equation}\label{eq:ev_estimator}
    \tr(\Gamma_{\bm{\mu}} \hat{\rho}_{Q,z}) = \lambda_{n,k}^{-1} \sum_{\bm{\nu} \in \comb{2n}{2k}} \ev{z}{\Gamma_{\bm{\nu}}}{z} \det[ Q_{\bm{\nu},\bm{\mu}} ].
\end{equation}
Here, $ Q_{\bm{\nu},\bm{\mu}} $ denotes the submatrix of $ Q $ formed from its rows and columns indexed by $ \bm{\nu} $ and $ \bm{\mu} $, respectively~\cite{chapman2018classical}. Because $ Q $ is a permutation matrix, for each $ \bm{\mu} $ there is exactly one $ \bm{\nu}' $ such that $ \det[ Q_{\bm{\nu}',\bm{\mu}} ] \neq 0 $. Thus Eq.~\eqref{eq:ev_estimator} is nonzero if and only if that $ \Gamma_{\bm{\nu}'} $ is diagonal (i.e., maps to a Pauli-$ Z $ operator under a fermion-to-qubit transformation). In other words, the Clifford operation $ U(Q) $ sends $ \Gamma_{\bm{\mu}} $ to $ \pm\Gamma_{\bm{\nu}'} $, which can be estimated only if it is diagonal in the computational basis.

From Eq.~\eqref{eq:var_shadows}, the eigenvalues $ \lambda_{n,k}^{-1} $ of the inverse channel $ \mathcal{M}^{-1}_{\FGU} $ determine the shadow norm. The sample complexity of our approach then follows from Eq.~\eqref{eq:cs_M}. We summarize this first key result with the following theorem.

\begin{theorem}\label{thm:fgu_performance}
	Consider all $2k$-degree Majorana operators $\Gamma_{\bm{\mu}}$ on $n$ fermionic modes, labeled by $ \bm{\mu} \in \comb{2n}{2k} $. Under the ensemble $ \mathcal{U}_{\FGU} $ defined in Eq.~\eqref{eq:FGU_ensemble}, the shadow norm satisfies
	\begin{equation}\label{eq:norm_fgu}
		\sns{\Gamma_{\bm{\mu}}}{\FGU}^2 = \l. \binom{2n}{2k} \middle/ \binom{n}{k} \r. \approx \binom{n}{k}\sqrt{\pi k}
	\end{equation}
	for all $ \bm{\mu} \in \comb{2n}{2k} $. Thus the method of classical shadows estimates the fermionic $k$-RDM of any state $\rho$, i.e., $ \tr(\Gamma_{\bm{\mu}} \rho) \  \forall \bm{\mu} \in \bigcup_{j=1}^k \comb{2n}{2j} $, to additive error $ \varepsilon $, given
	\begin{equation}
		M = \O\l[ \binom{n}{k} \frac{k^{3/2} \log n}{\varepsilon^2} \r]
	\end{equation}
	copies of $ \rho $. Additionally, there is no subgroup $ G \subset \FGU(n) \cap \Cl(n) $ for which $ \sns{\Gamma_{\bm{\mu}}}{G} < \sns{\Gamma_{\bm{\mu}}}{\FGU} $.
\end{theorem}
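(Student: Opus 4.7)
The plan is to exploit the diagonality of $\mathcal{M}_{\FGU}$ in the Majorana basis (already established in the excerpt) so that the shadow norm collapses to a single eigenvalue, then extract the Stirling asymptotic, convert to the sample-complexity bound via Eq.~\eqref{eq:cs_M}, and finally prove optimality through a double-counting argument on group orbits.

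First I substitute $\mathcal{M}_{\FGU}^{-1}(\Gamma_{\bm{\mu}}) = \lambda_{n,k}^{-1}\Gamma_{\bm{\mu}}$ into Eq.~\eqref{eq:var_shadows} to obtain
\[
  \sns{\Gamma_{\bm{\mu}}}{\FGU}^2 = \lambda_{n,k}^{-2}\,\max_\sigma \E_{Q,z}\!\bigl[\ev{z}{U(Q)\Gamma_{\bm{\mu}} U(Q)^\dagger}{z}^2\bigr].
\]
Since $Q \in \Alt(2n)$ acts as a permutation on Majorana indices, $U(Q)\Gamma_{\bm{\mu}} U(Q)^\dagger = \pm\Gamma_{Q\bm{\mu}}$ for a unique image, so $\ev{z}{U(Q)\Gamma_{\bm{\mu}} U(Q)^\dagger}{z}^2$ is the $\{0,1\}$-valued indicator that $\Gamma_{Q\bm{\mu}}$ is diagonal in the computational basis, independent of both $\sigma$ and $\ket{z}$. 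The expectation therefore collapses to $\Pr_Q[\Gamma_{Q\bm{\mu}}\text{ diagonal}]$. Because the diagonal $2k$-degree Majoranas correspond exactly to choosing $k$ of the $n$ mode-pairs $(2p,2p+1)$, this probability equals $\binom{n}{k}/\binom{2n}{2k} = \lambda_{n,k}$, yielding $\sns{\Gamma_{\bm{\mu}}}{\FGU}^2 = \lambda_{n,k}^{-1}$. The Stirling asymptotic follows from the factorization $\binom{2n}{2k}/\binom{n}{k} = \binom{n}{k}\cdot\binom{2n}{n}/\bigl[\binom{2k}{k}\binom{2n-2k}{n-k}\bigr]$, whose second factor tends to $4^k/\binom{2k}{k}\sim\sqrt{\pi k}$ in the regime $n \gg k$.

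For the sample-complexity bound I substitute $L = \sum_{j=1}^k \binom{2n}{2j} = \O(n^{2k})$ and $\max_{j\leq k} \sns{\Gamma_{\bm{\mu}}}{\FGU}^2 = \O\!\bigl(\binom{n}{k}\sqrt{k}\bigr)$ (maximized at $j=k$) into Eq.~\eqref{eq:cs_M}; since $\log L = \O(k\log n)$, this produces the claimed $M = \O\!\bigl[\binom{n}{k}k^{3/2}\log n/\varepsilon^2\bigr]$.

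The main obstacle is optimality. Any subgroup $G \subseteq \FGU(n)\cap\Cl(n)$ still acts on Majorana operators by signed permutations, so by repeating the derivation of the eigenvalue equation one sees that $\mathcal{M}_G(\Gamma_{\bm{\mu}}) = \lambda^{(G)}_{\bm{\mu}}\Gamma_{\bm{\mu}}$ with $\lambda^{(G)}_{\bm{\mu}} = \Pr_{Q\in G}[\Gamma_{Q\bm{\mu}}\text{ diagonal}]$, and hence $\sns{\Gamma_{\bm{\mu}}}{G}^2 = (\lambda^{(G)}_{\bm{\mu}})^{-1}$ whenever the inverse exists. Double-counting the pairs $(\bm{\mu},Q) \in \comb{2n}{2k}\times G$ for which $Q\bm{\mu}$ is diagonal yields the $G$-independent sum rule $\sum_{\bm{\mu}\in\comb{2n}{2k}}\lambda^{(G)}_{\bm{\mu}} = \binom{n}{k}$, because for each fixed $Q$ the bijectivity of its action gives exactly $\binom{n}{k}$ admissible $\bm{\mu}$. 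Consequently the average eigenvalue is always $\lambda_{n,k}$, forcing $\min_{\bm{\mu}}\lambda^{(G)}_{\bm{\mu}}\leq\lambda_{n,k}$ and therefore $\max_{\bm{\mu}}\sns{\Gamma_{\bm{\mu}}}{G}^2 \geq \lambda_{n,k}^{-1} = \sns{\Gamma_{\bm{\mu}}}{\FGU}^2$. Any subgroup that lowers the shadow norm on some $\bm{\mu}$ must raise it on another (or destroy informational completeness), establishing the claim.
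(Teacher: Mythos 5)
Your proposal is correct, and it reaches the theorem by a genuinely different route than the paper's. The paper's proof (SM, \cref{sec:fgu_appendix}) runs through finite-frame theory and representation theory: it relates the shadow norm to cumulative frame bounds (\cref{lem:shadow_norm_bound}), invokes the Vale--Waldron characterization of group-orbit tight frames (\cref{prop:tight_frames}), computes the frame bound via the frame-operator formula together with the orbit--stabilizer theorem (\cref{prop:frame_op}, \cref{thm:tight_frame_bound}), and finally certifies irreducibility of $\phi_{2k}$ restricted to $\Sym^{+}(2,2n)$ and $\Alt(2n)$ by an explicit character computation (\cref{thm:G_irrep}). You replace all of this with two elementary observations: (i) once the channel is diagonal, $\sns{\Gamma_{\bm{\mu}}}{\FGU}^2$ is exactly the reciprocal of the probability that a uniform $Q$ carries $\bm{\mu}$ into $\diags{2n}{2k}$, because the squared matrix element is a $z$-independent indicator; and (ii) the double-counting sum rule $\sum_{\bm{\mu} \in \comb{2n}{2k}} \lambda^{(G)}_{\bm{\mu}} = \binom{n}{k}$, valid for any subgroup (indeed any sub-ensemble) $G$, which pins the average eigenvalue at $\lambda_{n,k}$ and forces the worst-case shadow norm of any competitor to be at least $\lambda_{n,k}^{-1}$. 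Your sum rule is the trace of the paper's frame operator in disguise, but it bypasses Schur's lemma and character theory entirely, and it proves the optimality claim in greater generality (arbitrary subsets, not just subgroups). What the paper's heavier machinery buys in exchange is a classification: a subgroup attains the optimum if and only if the restricted representation is irreducible (equivalently, every orbit is a tight frame), which is how the paper certifies that $\Alt(2n)$ itself, and not merely the full group $\Sym^{+}(2,2n)$, saturates the bound.

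That certification is the one step you leave implicit: the assertion $\Pr_{Q}[\Gamma_{Q\bm{\mu}}\ \text{diagonal}] = \binom{n}{k}/\binom{2n}{2k}$ requires that $Q\bm{\mu}$ be \emph{uniformly} distributed over $\comb{2n}{2k}$ when $Q \sim \Alt(2n)$, not just that $\binom{n}{k}$ of the $\binom{2n}{2k}$ index sets are diagonal. This holds because $\Alt(2n)$ acts transitively on $2k$-subsets (any odd permutation carrying $\bm{\mu}$ to a given image can be corrected by a transposition inside the image set, possible since $2k \geq 2$), and the fibers of $Q \mapsto Q\bm{\mu}$ are cosets of the stabilizer, hence of equal size; alternatively, transitivity combined with your own sum rule forces all eigenvalues to equal $\lambda_{n,k}$ at once. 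With that one sentence supplied, your Stirling factorization, the $\log L = \O(k \log n)$ bookkeeping, and the sample-complexity conclusion all match the paper.
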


The proof is presented in the SM, Sec.~\ref{sec:fgu_appendix}. Furthermore, noting from Eq.~\eqref{eq:ev_estimator} that $ \l|\tr(\Gamma_{\bm{\mu}} \hat{\rho}_{Q,z})\r| \leq \lambda_{n,k}^{-1} $, we also show in the SM that Bernstein's inequality~\cite{boucheron2013concentration} guarantees the above sample complexity via standard sample-mean estimation, rather than requiring the median-of-means technique proposed in the original work on classical shadows~\cite{huang2020predicting}.

This result has an intuitive conceptual interpretation. In the computational basis, there are precisely $ \binom{n}{k} $ diagonal Majorana operators within $ \comb{2n}{2k} $, corresponding to the unique $ k $-fold products of occupation-number operators (e.g., $ \prod_{j=1}^k a_{p_j}^\dagger a_{p_j} $) on $ n $ modes. As a permutation on $ \comb{2n}{2k} $, each element of $ \mathcal{U}_{\FGU} $ defines a different basis in which some other subset of $ \binom{n}{k} $ operators are diagonal. Then, one may expect to account for all $ |\comb{2n}{2k}| = \binom{2n}{2k} $ Majorana operators by randomly selecting on the order of $ \binom{2n}{2k}/\binom{n}{k} $ such bases;~Theorem~\ref{thm:fgu_performance} makes this claim rigorous.

Fermionic Gaussian circuits have a well-studied compilation scheme based on a Givens-rotation decomposition~\cite{wecker2015solving,kivlichan2018quantum,jiang2018quantum}. For a general element of $ \mathcal{U}_{\FGU} $, we require a circuit depth of at most $ 2n $ with respect to this decomposition~\cite{jiang2018quantum}. Additionally, as pointed out in Ref.~\cite{bonet2019nearly}, Gaussian unitaries commute with the global parity operator $ \Gamma_{(0,\ldots,2n-1)} $, allowing for error mitigation via symmetry verification~\cite{bonet2018low,mcardle2019error}.

Such compilation schemes make use of a group homomorphism property, $ U(Q_1) U(Q_2) = U(Q_1 Q_2) $. Therefore, if the circuit preparing $ \rho $ itself features fermionic Gaussian operations at the end, then we may further compile the measurement unitary into the state-preparation circuit~\cite{takeshita2020increasing}. In the case of indefinite particle number, this concatenation is essentially free. However, rotations with particle-number symmetry have depth at most $ n $~\cite{kivlichan2018quantum,jiang2018quantum}, so they must be embedded into the larger Gaussian unitary of depth $ 2n $. This observation motivates us to explore classical shadows over the number-conserving (NC) subgroup of $ \FGU(n) $.

\emph{Modification based on particle-number symmetry.}---Fermionic Gaussian unitaries that preserve particle number are naturally parametrized by $ \U(n) $. We express an element of this NC subgroup as
\begin{equation}
	U(e^\kappa) \coloneqq \exp\l( \sum_{p,q=0}^{n-1} \kappa_{pq} a_p^\dagger a_q \r),
\end{equation}
where $ \kappa = -\kappa^\dagger \in \C^{n \times n} $, hence $ e^{\kappa} \in \U(n) $. Because the particle-number symmetry manifests as a global phase factor $ e^{\tr\kappa/2} \in \U(1) $, without loss of generality we may consider $ \tr\kappa = 0 $, or equivalently, $ e^{\kappa} \in \SU(n) $. Such unitaries are also called orbital-basis rotations, owing to their adjoint action,
\begin{equation}
	U(u)^\dagger a_p U(u) = \sum_{q=0}^{n-1} u_{pq} a_{q} \quad \forall u \in \SU(n).
\end{equation}
This action on Majorana operators follows by linear extension.

Taking the intersection with the Clifford group requires that $ u $ be an $ n \times n $ generalized permutation matrix, with nonzero elements taking values in $ \{ \pm 1, \pm i \} $. This corresponds to the group of fermionic swap circuits~\cite{bravyi2002fermionic,kivlichan2018quantum}. Again, the phase factors on the matrix elements are irrelevant, so we shall restrict to $ u \in \Alt(n) $. By itself, this ensemble is insufficient to perform tomography. To see this, consider an arbitrary reduced density operator $ A_{\bm{p}}^\dagger A_{\bm{q}} \coloneqq a_{p_1}^\dagger \cdots a_{p_k}^\dagger a_{q_k} \cdots a_{q_1} $, where $ \bm{p}, \bm{q} \in \comb{n}{k} $. Such operators are diagonal in the computational basis only if $ \bm{p} = \bm{q} $. Informational completeness thus requires that there exists some $ U(u) $ that maps $ A_{\bm{p}}^\dagger A_{\bm{q}} $ to $ A_{\bm{r}}^\dagger A_{\bm{r}} $, for some $ \bm{r} \in \comb{n}{k} $. Because $ u \in \Alt(n) $, conjugation by $ U(u) $ simply permutes $ \bm{p} $ and $ \bm{q} $ independently. However, as permutations are bijective, it is not possible to permute both $ \bm{p} $ and $ \bm{q} $ to the same $ \bm{r} $ if $ \bm{p} \neq \bm{q} $.

Therefore, this ensemble will necessarily require operations beyond either the NC or Gaussian constraints. The simplest option for maintaining the low-depth structure of the basis rotations is to append Pauli measurements at the end of the circuit. Although the resulting circuit no longer preserves particle number, this addition incurs only a single layer of single-qubit gates. Specifically, we define the ensemble
\begin{equation}\label{eq:ncu_ensemble}
	\mathcal{U}_{\NC} \coloneqq \{ V \circ U(u) \mid V \in \Cl(1)^{\otimes n}, \,  u \in \Alt(n) \}.
\end{equation}
By virtue of introducing the notion of ``single-qubit'' gates, this method is dependent on the choice of fermion-to-qubit mapping. Let $ \loc(\Gamma_{\bm{\mu}}) $ denote the qubit locality of $ \Gamma_{\bm{\mu}} $ under some chosen mapping. While Pauli measurements incur a factor of $ 3^{\loc(\Gamma_{\bm{\mu}})} $ in the variance, the randomization over fermionic swap circuits effectively averages this quantity over all same-degree Majorana operators (rather than depending solely on the most nonlocal operator). Formally, we find that the shadow norm here is
\begin{equation}
	\sns{\Gamma_{\bm{\mu}}}{\NC}^2 = \E_{u \sim \Alt(n)} \l[ 3^{- \loc[ U(u)^\dagger \Gamma_{\bm{\mu}} U(u) ]} \r{]^{-1}}.
\end{equation}
Although this expression does not possess a closed form, the following theorem provides a universal upper bound.

\begin{theorem}\label{thm:NC_bounds}
	Under the ensemble $ \mathcal{U}_{\NC} $ defined in Eq.~\eqref{eq:ncu_ensemble}, the shadow norm obeys
	\begin{equation}
	\max_{\bm{\mu} \in \comb{2n}{2k}} \sns{\Gamma_{\bm{\mu}}}{\NC}^2 \leq \l. 9^k \binom{n}{2k} \middle/ \binom{n-k}{k} \r. = \O(n^k)
	\end{equation}
	for a fixed integer $ k $ and for all fermion-to-qubit mappings. Thus the method of classical shadows with $ \mathcal{U}_{\NC} $ estimates the $ k $-RDM to additive error $ \varepsilon $ with sample complexity
	\begin{equation}
	M = \O\l( \frac{n^k \log n}{\varepsilon^2} \r).
	\end{equation}
\end{theorem}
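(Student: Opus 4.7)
The plan is to prove the shadow-norm bound by lower-bounding the channel eigenvalue $\lambda_{\bm{\mu}} \coloneqq \E_{u \sim \Alt(n)}\bigl[3^{-\loc(U(u)^\dagger \Gamma_{\bm{\mu}} U(u))}\bigr]$ uniformly in $\bm{\mu}$, since the preceding discussion already establishes $\sns{\Gamma_{\bm{\mu}}}{\NC}^2 = \lambda_{\bm{\mu}}^{-1}$. The first step is to exploit that each $u \in \Alt(n)$ is a signed permutation matrix, so its adjoint action on the Majorana algebra simply permutes fermion-mode indices: $U(u)^\dagger \Gamma_{\bm{\mu}} U(u) = \pm \Gamma_{\sigma(\bm{\mu})}$ for the corresponding $\sigma \in \Alt(n)$, preserving the type of $\bm{\mu}$ (namely the partition of its Majorana indices into mode-paired pieces and mode-singleton pieces). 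Consequently, $\lambda_{\bm{\mu}}$ reduces to the uniform average of $3^{-\loc(\Gamma_{\sigma(\bm{\mu})})}$ over the $\Alt(n)$-orbit of $\bm{\mu}$, and in particular the mode support of $\sigma(\bm{\mu})$ is uniform over $s$-subsets of $[n]$, where $k \leq s \leq 2k$ is determined by the type of $\bm{\mu}$.

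Second, I would extract a lower bound by a favorable-events argument. Any $2k$-degree Majorana involves at most $2k$ distinct fermion modes, so its qubit-weight image under a fermion-to-qubit mapping is controlled by that mode support. Fixing a reference $2k$-subset $T \subseteq [n]$, I would count the permutations $\sigma$ that send the mode support $S$ of $\bm{\mu}$ entirely into $T$; a direct combinatorial computation shows that this fraction of $\Alt(n)$ equals $\binom{n-k}{k}/\binom{n}{2k}$, which, via the identity $\binom{n-k}{k}\binom{n}{k} = \binom{n}{2k}\binom{2k}{k}$, gives a clean expression independent of the type-dependent orbit size. For these favorable $\sigma$, a structural lemma on Majoranas localized to $2k$ modes yields $\loc(\Gamma_{\sigma(\bm{\mu})}) \leq 2k$, hence $3^{-\loc} \geq 9^{-k}$ and $\lambda_{\bm{\mu}} \geq \binom{n-k}{k}/(9^k \binom{n}{2k})$, inverting to the claimed shadow-norm bound.

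Third, the sample complexity follows from Eq.~\eqref{eq:cs_M}: the $|\bigcup_{j=1}^k \comb{2n}{2j}| = \O(n^{2k})$ observables contribute a $\log L = \O(\log n)$ factor for fixed $k$, so $M = \O(n^k \log n / \varepsilon^2)$. The main technical obstacle is the structural qubit-weight lemma in step two. For Jordan--Wigner and similar locality-preserving mappings the bound is immediate, since Majoranas supported on $\O(k)$ modes contribute at most $\O(k)$ nontrivial single-qubit Paulis modulo parity strings. Handling the universal statement---over arbitrary fermion-to-qubit mappings, which may be related to Jordan--Wigner by any global Clifford conjugation---requires a finer averaging argument that tracks how the $\Alt(n)$ mode-orbit sweeps through Clifford-equivalent Pauli strings, ensuring a sufficient fraction of favorable images regardless of the chosen mapping. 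Carefully executing this averaging, and tracking the precise combinatorial constants, is the crux of the supplementary proof.
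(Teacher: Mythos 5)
Your intended route is, in substance, the same as the paper's: Sec.~\ref{sec:ncu_calcs} establishes $\sns{\Gamma_{\bm{\mu}}}{\NC}^2 = \lambda_{\bm{\mu}}^{-1}$ with $\lambda_{\bm{\mu}} = \E_{u \sim \Alt(n)}[3^{-\loc(\Gamma_{\tilde{u}(\bm{\mu})})}]$, and the paper's final bound comes from an exact stars-and-bars evaluation under Jordan--Wigner, $\lambda_{\bm{\mu}} = \binom{n}{2k}^{-1}\binom{n-k}{k}\,9^{-k}\,{}_2F_1(k,2k-n;k-n;1/3)$, followed by $ {}_2F_1 \geq 1 $ --- which is exactly a ``favorable-events'' truncation that keeps only the minimal-locality configurations. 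However, your execution of that step contains two genuine errors. First, the event you describe --- permutations sending the mode support $S$ into a \emph{fixed} $2k$-subset $T$ --- has density $1/\binom{n}{2k} = \Theta(n^{-2k})$ when $|S|=2k$, not $\binom{n-k}{k}/\binom{n}{2k}$; used as stated it can only yield $\sns{\Gamma_{\bm{\mu}}}{\NC}^2 = \O(n^{2k})$. The density $\binom{n-k}{k}/\binom{n}{2k}$ belongs to a different event: that $u(S)$ equals \emph{some} union of $k$ disjoint nearest-neighbor pairs, of which there are $\binom{n-k}{k}$ configurations, each hit by $(2k)!\,(n-2k)!$ permutations. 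Second, your structural lemma --- that a Majorana operator supported on $2k$ modes has $\loc \leq 2k$ --- is false under Jordan--Wigner: $\gamma_0\gamma_{2q}$ is supported on two modes yet has locality $q+1$ because of the intervening $Z$ string. Locality is governed by \emph{adjacency} of the image modes, not their number; that is precisely why the disjoint-adjacent-pair configurations are the correct favorable event (the strings cancel pairwise, e.g.\ $\gamma_{2p}\gamma_{2p+2} \propto Y_p X_{p+1}$, so $\loc = 2k$ exactly and the factor $9^{-k}$ follows). Your aside that the Jordan--Wigner case is ``immediate\ldots modulo parity strings'' dismisses exactly the difficulty that this counting exists to handle.

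The second gap is the universality over fermion-to-qubit mappings, which is part of the theorem's statement but which you explicitly leave unexecuted (``requires a finer averaging argument\ldots the crux''). The paper never averages over Clifford-equivalent encodings; it removes the mapping-dependence \emph{before} any combinatorics, by asserting that Jordan--Wigner is the maximally nonlocal encoding and invoking Jensen's inequality to compare $\E[3^{-\loc}]$ with the average locality, so that the subsequent counting need only be done for Jordan--Wigner. Whatever one thinks of the brevity of that extremality argument, it is what carries the ``for all fermion-to-qubit mappings'' clause; without it (or a worked-out substitute), your proof as proposed establishes the bound only for the Jordan--Wigner encoding, and the sample-complexity conclusion in your step three --- which is otherwise correct, since $\log L = \O(k\log n)$ --- inherits that restriction.
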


We provide derivations for the above results in the SM, Sec.~\ref{sec:ncu_appendix}.
Note that we have fixed $ k $ as a constant here, so the asymptotic notation may hide potentially large prefactors depending on $ k $. To understand such details, we turn to numerical studies.

\emph{Numerical calculations.}---Instead of drawing a new circuit for each repetition, here we employ a simplification more amenable to practical implementation. Fixing some integer $ r \geq 1 $, we generate a random collection $ \{ U^{(j)} \sim \mathcal{U} \}_{j=1}^{K_r} $ of $ K_r $ unitaries such that all target observables are covered at least $ r $ times. We say a Majorana operator $ \Gamma_{\bm{\mu}} $ is covered by the measurement unitary $ U $ if $ U \Gamma_{\bm{\mu}} U^\dagger $ is diagonal in the computational basis. Because the ensembles considered here consist of Gaussian and Clifford unitaries, we can determine all covered operators efficiently. Additionally, for the $ \mathcal{U}_{\NC} $ calculations, the qubit mappings were automated through OpenFermion~\cite{openfermion}.

To achieve precision corresponding to $ S = \O(1/\varepsilon^2) $ samples per observable, one repeats each circuit $ \lceil S/r \rceil $ times. The total number of circuit repetitions for our randomized protocols is then $ \lceil S/r \rceil K_r $. For practical purposes, we fix $ r = 50 $ in this work (see Sec.~\ref{sec:hyperparameter} of the SM for further details). To compare against prior deterministic strategies, we compute $S \times C$ for each such strategy, where $C$ is the number of sets of commuting observables constructed by a given strategy.

\begin{figure*}
    \includegraphics[width=\textwidth]{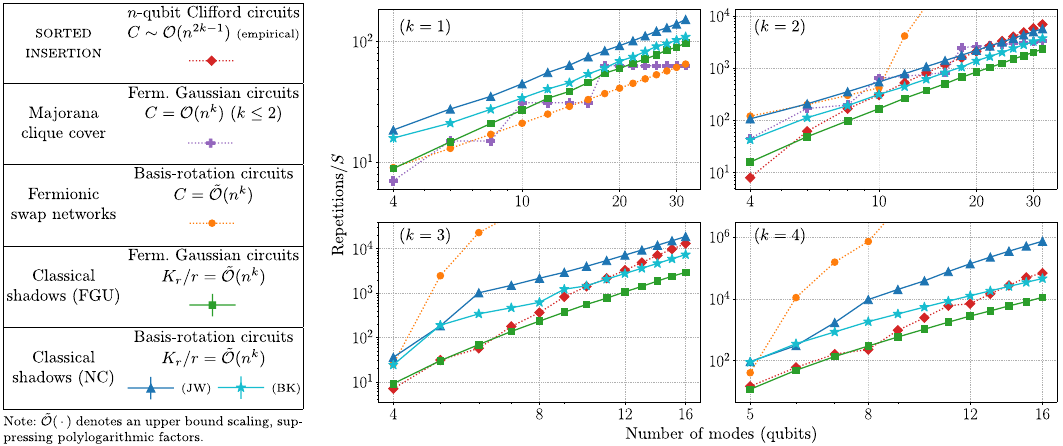}
    \caption{(Left)~Summary of the methods compared here, cataloging their required circuit types and scalings in the number of measurement settings. Because graph-based methods~\cite{jena2019pauli,yen2020measuring,gokhale2019minimizing,hamamura2020efficient} are resource intensive, we employ \textsc{sorted insertion}~\cite{crawford2019efficient} as a more tractable alternative. The Majorana clique cover~\cite{bonet2019nearly}, which employs the same class of fermionic Gaussian Clifford circuits as our classical shadows (FGU) unitaries, possesses optimal asymptotic scaling;~however, it exhibits jumps at powers of 2 due to a divide-and-conquer approach. Furthermore, the construction exists only for $ k \leq 2 $. The measurement strategy using fermionic swap networks is a generalization of the optimal 1-RDM strategy introduced in Ref.~\cite{arute2020hartree}, which we describe in Sec.~\ref{sec:rdmswap_appendix} of the SM. (Right)~Numerical performances  (log--log scale). Note that \textsc{sorted insertion} and the Majorana clique cover are equivalent for $ k = 1 $. Because our scheme uses randomization, we include error bars of 1 standard deviation, averaged over 10 instances. However, they are not visible at the scale of the plots, indicating the consistency of our method.}
    \label{fig:mainfig}
\end{figure*}

For the comparisons presented in Fig.~\ref{fig:mainfig}, we focus on the most competitive prior strategies applicable to fermionic RDM tomography. Because the 1-RDM has a relatively simple structure, optimal strategies are known~\cite{bonet2019nearly,arute2020hartree}, and so randomization underperforms for $ k = 1 $. However, the advantage of our $ \mathcal{U}_{\FGU} $-based method becomes clear for $ k \geq 2 $. When comparing against the Majorana clique cover, which features asymptotically optimal $ \O(n^2) $ scaling for the 2-RDM~\cite{bonet2019nearly}, we find a roughly twofold factor improvement by our approach.

For the $ \mathcal{U}_{\NC} $ case, we observe a trade-off between circuit size and measurement efficiency. As expected, the choice of fermion-to-qubit mapping matters here;~the Jordan--Wigner (JW) mapping performs worse than Bravyi--Kitaev (BK), as the former possesses more qubit nonlocality. Although more measurement settings are required compared to the $ \mathcal{U}_{\FGU} $ ensemble (e.g., a factor of $ {\sim} \, \text{2--5} $ under BK, depending on $ k $), each circuit itself requires only half the depth of general fermionic Gaussian circuits. Notably, however, $ \mathcal{U}_{\NC} $ classical shadows for the 2-RDM under the BK mapping is closely comparable to the Majorana clique cover.

\emph{Conclusions.}---We have adapted the framework of classical shadows to the efficient tomography of fermionic $k$-RDMs, applicable for all $k$. Numerical calculations demonstrate that our approach consistently outperforms prior strategies using measurement circuits of comparable sizes when $k \geq 2$, despite the logarithmic factor in the sample complexity (a consequence of rigorously bounding the worst-case probabilistic instances). The power of randomization here lies in avoiding the NP-hard problem of partitioning observables into commuting cliques~\cite{verteletskyi2020measurement,yen2020measuring,jena2019pauli,gokhale2019minimizing}. Instead, we show that a highly overlapping cover of the observables suffices to perform partial tomography efficiently, as a factor of $ \O(1/\varepsilon^2) $ repetitions is already required for this task.

An outlook for further applications is to adapt these ensembles, e.g., for Hamiltonian averaging. As expected, our method is less efficient in this context than those tailored for the task (see Sec.~\ref{sec:hamiltonian_appendix} of the SM for preliminary numerical calculations). Possible modifications may include biasing the distribution of unitaries~\cite{hadfield2020measurements,hillmich2021decision,hadfield2021adaptive,wu2021overlapped}, or derandomization techniques~\cite{huang2021efficient}.

We thank Hsin-Yuan (Robert) Huang and Charles Hadfield for helpful discussions. This work was supported partially by the National Science Foundation STAQ Project (No.~PHY-1818914),  QLCI Q-SEnSE Grant (No.~OMA-2016244), CHE-2037832, and the U.S.~Department of Energy, Office of Science, National Quantum Information Science Research Centers, Quantum Systems Accelerator. We acknowledge the use of high-performance computing resources provided by the UNM Center for Advanced Research Computing, supported in part by the National Science Foundation.

\onecolumngrid\newpage
\appendix

\tocless{\subsection*{\large{Supplemental Material}}}

(Revised:~October 3, 2022) An earlier version of this Supplemental Material contained an incorrect derivation of the variance for arbitrary observables (\cref{sec:arb_observable_appendix}). That section has been revised to provide an upper bound on the variance, which instead demonstrates its asymptotic scaling. The correct variance expression has been derived by Wan \emph{et al.}~\cite{wan2022matchgate} and O'Gorman~\cite{ogorman2022fermionic}, which we have used to amend our Hamiltonian averaging numerics (\cref{sec:hamiltonian_appendix}).

We thank these authors for bringing the error to our attention. The main conclusions of this work are unaffected by this revision.\\

\tableofcontents

\section{Additional notation}

First, we define some notation and preliminary concepts not discussed in the main text. For completeness, we generalize the definition of Majorana operators to include odd-degrees:
\begin{equation}
	\Gamma_{\bm{\mu}} \coloneqq (-i)^{\binom{k}{2}} \gamma_{\mu_1} \cdots \gamma_{\mu_k},
\end{equation}
where $ \bm{\mu} \in \comb{2n}{k} $. Where required, we may define the empty product ($ k = 0 $) as the identity, i.e., $ \Gamma_{\varnothing} \coloneqq \openone $. Then for each $ 0 \leq k \leq 2n $, we define the $ \R $-linear span of $ k $-degree Majorana operators on $ n $ fermion modes,
\begin{equation}
	\MA{n}{k} \coloneqq \spn\{\Gamma_{\bm{\mu}} \mid \bm{\mu} \in \comb{2n}{k} \},
\end{equation}
which is isomorphic (as a vector space) to $ \R^{\binom{2n}{k}} $. The algebra of all even-degree fermionic observables shall be denoted by
\begin{equation}
	\MA{n}{\mathrm{even}} \coloneqq \bigoplus_{k=0}^{n} \MA{n}{2k}.
\end{equation}
Due to the parity superselection rule~\cite{streater2000pct}, physical fermionic operators lie in $ \MA{n}{\mathrm{even}} $;~thus the notions of informational completeness for fermionic tomography are understood with respect to this algebra. To see this, suppose we interpret $ \ket{z} $ as a Fock basis state. Only products of occupation-number operators, $ \prod_p a_p^\dagger a_p $ (modulo anticommutation relations), have nonzero diagonal matrix elements in the Fock basis. Thus if $ U $ is a unitary generated by a fermionic Hamiltonian (hence respecting parity supersymmetry), only those $ O \in \MA{n}{\mathrm{even}} $ are able to satisfy $ \ev{z}{U O U^\dagger}{z} \neq 0 $. This argument also holds if one insists on viewing $ \ket{z} $ as a qubit computational basis state, since they are necessarily mapped to Fock basis states under a fermion-to-qubit encoding~\cite{steudtner2018fermion}.

The group of fermionic Gaussian unitaries $ \FGU(n) $ is generated by $ i\MA{n}{2} $, and its adjoint action on any $ k $-degree Majorana operator straightforwardly generalizes as
\begin{equation}\label{eq:gaussian_adjoint_action_k}
	U(Q)^\dagger \Gamma_{\bm{\mu}} U(Q) = \sum_{\bm{\nu}\in\comb{2n}{k}} \det[ Q_{\bm{\mu},\bm{\nu}} ] \Gamma_{\bm{\nu}},
\end{equation}
where $ Q_{\bm{\mu},\bm{\nu}} $ denotes the submatrix formed by taking the rows and columns of $ Q $ indexed by $ \bm{\mu} $ and $ \bm{\nu} $, respectively~\cite[Appendix A]{chapman2018classical}. 
This defines an adjoint representation $ \FGU(n) \to \SO[\mathfrak{su}(2^n)] \cong \SO(4^n - 1) $, understood in the sense that $ \FGU(n) \subseteq \SU(2^n) $. We will rather be interested in the orthogonal representation $ \Phi $ of $ \SO(2n) $ induced by this adjoint representation, through the group homomorphism $ U \colon \SO(2n) \to \FGU(n) $. That is, we define $ \Phi \colon \SO(2n) \to \SO(4^n - 1) $ by the matrix elements
\begin{equation}\label{eq:big_rep}
	[ \Phi(Q) ]_{\bm{\mu}\bm{\nu}} \coloneqq \det[ Q_{\bm{\mu},\bm{\nu}} ].
\end{equation}
By the Cauchy--Binet formula one may verify that this indeed defines an orthogonal matrix~\cite{chapman2018classical}. Additionally, $ \Phi $ inherits the homomorphism property from $ U $;~hence it is an orthogonal representation.

Since determinants are defined only for square matrices, $ \Phi $ possesses a natural decomposition as $ \Phi = \bigoplus_{k=1}^{2n} \phi_k $, where each $ \phi_k \colon \SO(2n) \to \SO(\MA{n}{k}) \cong \SO\l[ \binom{2n}{k} \r] $ is defined just as in \cref{eq:big_rep}, restricted a particular $ k $. These subrepresentations will be the main focus of our analysis.

Finally, because of their relation to the Clifford group, we make frequent use of permutations and their generalizations. We establish the relevant notation here. Let $ d,m > 0 $ be integers. We denote the symmetric group on $ d $ symbols by $ \Sym(d) $, which is faithfully represented by the group of $ d \times d $ permutation matrices. The alternating group $ \Alt(d) $ is the subgroup of all even parity permutations. The generalized symmetric group of cyclic order $ m $ over $ d $ symbols is defined via the wreath product, $ \Sym(m,d) \coloneqq \Z_m \wr \Sym(d) \equiv \Z_m^d \rtimes \Sym(d) $. Its faithful matrix representation is the group of $ d \times d $ generalized permutation matrices, wherein each nonzero matrix element can take on values from the $ m $th roots of unity. The determinant of such matrices is the sign of the underlying permutation, multiplied by all $ d $ nonzero elements. In particular, we shall refer to $ \Sym(2,d) $ as the group of signed permutation matrices, and denote its $ ({+1}) $-determinant subgroup by $ \Sym^{+}(2,d) $.

\section{\label{sec:fgu_appendix}Computations with the fermionic Gaussian Clifford ensemble}

We now derive the main results leading to \cref{thm:fgu_performance} of the main text. In \cref{sec:fgu_channel} we find an expression for the channel $ \mathcal{M}_{\FGU} $, showing that the permutational property of the Clifford group necessarily implies that the channel is diagonalized by the basis of Majorana operators [\cref{eq:fgu_sum_z}]. Then in \cref{sec:fgu_norm} we compute the corresponding eigenvalues of $ \mathcal{M}_{\FGU} $, which are directly related to the ensemble's shadow norm (\cref{lem:shadow_norm_bound}). In particular, \cref{lem:shadow_norm_bound} provides necessary and sufficient conditions for saturating the minimum value of the shadow norm under Gaussian Clifford measurements. In \cref{thm:tight_frame_bound} we explicitly calculate this minimum value, and finally with \cref{thm:G_irrep} we prove that both $ \FGU(n) \cap \Cl(n) $ and its $ \Alt(2n) $-generated subgroup $ \mathcal{U}_\FGU $ (the one presented in the main text) satisfy the necessary and sufficient conditions.

After proving these main results, in \cref{sec:class_shadow_appendix} we explicitly derive the expression for the classical shadows $ \hat{\rho}_{Q,z} $. For completeness, we show how to compute the shadow norm for an arbitrary observable in \cref{sec:arb_observable_appendix}. Finally, in \cref{sec:hoeffding} we show how the boundedness of classical shadow estimators and Bernstein's inequality allow one to avoid requiring median-of-means estimation, as mentioned in the main text.

\subsection{\label{sec:fgu_channel}The classical shadows channel}

Our goal is to find an analytic expression for the channel~\cite{huang2020predicting} $ \mathcal{M}_\FGU \colon \MA{n}{\mathrm{even}} \to \MA{n}{\mathrm{even}} $,
\begin{equation}
	\begin{split}
	\mathcal{M}_{\FGU}(O) &= \mathbb{E}_{U \sim \FGU(n) \cap \Cl(n)} \l[ \sum_{z\in\{0,1\}^n} \ev{z}{U O U^\dagger}{z} U^\dagger \op{z}{z} U \r]\\
	&= \mathbb{E}_{Q \sim \Sym^{+}(2,2n)} \l[ \sum_{z\in\{0,1\}^n} \ev{z}{U(Q) O U(Q)^\dagger}{z} U(Q)^\dagger \op{z}{z} U(Q) \r].
	\end{split}
\end{equation}
Since this is a linear map, we need only to evaluate it on a basis of $ \MA{n}{\mathrm{even}} $, the most natural choice being the Majorana operators. Distinguishing the Majorana operators which are diagonal (with respect to the computational basis) is highly important. For each $ 1 \leq k \leq n, $ we shall define the subset $ \diags{2n}{2k} \subseteq \comb{2n}{2k} $ of $ 2k $-combinations corresponding to the diagonal $ 2k $-degree Majorana operators. Formally,
\begin{equation}
	\diags{2n}{2k} \coloneqq \{ \bm{\mu} \in \comb{2n}{2k} \mid \ev{z}{\Gamma_{\bm{\mu}}}{z} \neq 0 \  \forall z \in \{0,1\}^n \}.
\end{equation}
Since there are $ \binom{n}{k} $ independent $ k $-fold products of occupation-number operators, each set has cardinality $ |\diags{2n}{2k}| = \binom{n}{k} $, so that
\begin{equation}
	\bigg| \bigcup_{k=1}^{n} \diags{2n}{2k} \bigg| = \sum_{k=1}^n \binom{n}{k} = 2^n - 1,
\end{equation}
which indeed matches the maximal number of simultaneously commuting Pauli operators~\cite{lawrence2002mutually} (e.g., all Pauli-$ Z $ operators of locality $ 1 $ to $ n $). For instance, under the Jordan--Wigner transformation, the corresponding sets are
\begin{equation}
\begin{split}
\diags{2n}{2} &\coloneqq \{ (p,p+1) \mid 0 \leq p \leq 2n-2 : p \text{ even} \},\\
\diags{2n}{4} &\coloneqq \{ (p,p+1,q,q+1) \mid 0 \leq p < q \leq 2n-2 : p,q \text{ even} \},\\
\diags{2n}{6} &\coloneqq \{ (p,p+1,q,q+1,r,r+1) \mid 0 \leq p < q < r \leq 2n-2 : p,q,r \text{ even} \},
\end{split}
\end{equation}
and so forth up to $ \diags{2n}{2n} $.

With this formalism, we can express basis states as
\begin{equation}
	\op{z}{z} = \frac{1}{2^n} \l( \openone + \sum_{j=1}^{n} \sum_{\bm{\mu} \in \diags{2n}{2j}} \ev{z}{\Gamma_{\bm{\mu}}}{z} \Gamma_{\bm{\mu}} \r),
\end{equation}
which undergo fermionic Gaussian evolution as
\begin{equation}\label{eq:UzzU}
	\begin{split}
	U(Q)^\dagger \op{z}{z} U(Q) &= \frac{1}{2^n} \l( \openone + \sum_{j=1}^{n} \sum_{\bm{\mu} \in \diags{2n}{2j}} \ev{z}{\Gamma_{\bm{\mu}}}{z} U(Q)^\dagger \Gamma_{\bm{\mu}} U(Q) \r)\\
	&= \frac{1}{2^n} \l( \openone + \sum_{j=1}^{n} \sum_{\bm{\mu} \in \diags{2n}{2j}} \ev{z}{\Gamma_{\bm{\mu}}}{z} \sum_{\bm{\nu} \in \comb{2n}{2j}} \det\l[ Q_{\bm{\mu},\bm{\nu}} \r] \Gamma_{\bm{\nu}} \r).
	\end{split}
\end{equation}
For the corresponding Born-rule probability factor, we use the fact that, since $ U $ is a group homomorphism, inverses are preserved [$ U(Q)^\dagger = U(Q^\T) $], and so for any $ \bm{\tau} \in \comb{2n}{2k} $ we have
\begin{equation}\label{eq:zUGUz}
	\begin{split}
	\ev{z}{U(Q) \Gamma_{\bm{\tau}} U(Q)^\dagger}{z} &= \sum_{\bm{\sigma} \in \comb{2n}{2k}} \det\l[ \l( Q^\T \r)_{\bm{\tau},\bm{\sigma}} \r] \ev{z}{\Gamma_{\bm{\sigma}}}{z}\\
	&= \sum_{\bm{\sigma} \in \diags{2n}{2k}} \det\l[ \l( Q^\T \r)_{\bm{\tau},\bm{\sigma}} \r] \ev{z}{\Gamma_{\bm{\sigma}}}{z}.
	\end{split}
\end{equation}
Now recall some basic properties of Majorana (equiv.~Pauli) operators. They are traceless,
\begin{equation}
	\sum_{z \in \{0,1\}^n} \ev{z}{\Gamma_{\bm{\sigma}}}{z} = \tr \Gamma_{\bm{\sigma}} = 0,
\end{equation}
and more generally they are Hilbert--Schmidt (trace) orthogonal,
\begin{equation}
	\tr\l( \Gamma_{\bm{\sigma}} \Gamma_{\bm{\mu}} \r) = 2^n \delta_{\bm{\sigma} \bm{\mu}}.
\end{equation}
In the case that $ \Gamma_{\bm{\sigma}} $ and $ \Gamma_{\bm{\mu}} $ are both diagonal, we have that
\begin{equation}
	\begin{split}
	\sum_{z \in \{0,1\}^n} \ev{z}{\Gamma_{\bm{\sigma}}}{z} \ev{z}{\Gamma_{\bm{\mu}}}{z} &= \sum_{z \in \{0,1\}^n} \ev{z}{\Gamma_{\bm{\sigma}} \Gamma_{\bm{\mu}}}{z}\\
	&= \tr\l( \Gamma_{\bm{\sigma}} \Gamma_{\bm{\mu}} \r) = 2^n \delta_{\bm{\sigma} \bm{\mu}}.
	\end{split}
\end{equation}
With these relations in hand, we multiply \cref{eq:UzzU,eq:zUGUz} and sum over all $ z\in\{0,1\}^n $ to obtain
\begin{equation}\label{eq:fgu_sum_z}
	\begin{split}
	\sum_{z\in\{0,1\}^n} \ev{z}{U(Q) \Gamma_{\bm{\tau}} U(Q)^\dagger}{z} U(Q)^\dagger \op{z}{z} U(Q) &= \frac{1}{2^n} \sum_{\bm{\sigma} \in \diags{2n}{2k}} \det\l[ \l( Q^\T \r)_{\bm{\tau},\bm{\sigma}} \r] \Bigg( \tr (\Gamma_{\bm{\sigma}}) \openone\\
	&\quad + \sum_{j=1}^{n} \sum_{\bm{\mu} \in \diags{2n}{2j}} 2^n \delta_{\bm{\sigma} \bm{\mu}} \sum_{\bm{\nu} \in \comb{2n}{2j}} \det\l[ Q_{\bm{\mu},\bm{\nu}} \r] \Gamma_{\bm{\nu}} \Bigg)\\
	&= \sum_{\substack{\bm{\sigma} \in \diags{2n}{2k} \\ \bm{\nu} \in \comb{2n}{2k}}} \det\l[ Q_{\bm{\sigma},\bm{\tau}} \r] \det\l[ Q_{\bm{\sigma},\bm{\nu}} \r] \Gamma_{\bm{\nu}}.
	\end{split}
\end{equation}
One may be tempted to use the Cauchy--Binet formula to evaluate the sum over $ \bm{\sigma} $;~crucially, however, the sum is restricted to $ \diags{2n}{2k} $, so the identity does not apply here. Instead, we will first evaluate the sum over $ \bm{\nu} $ by formalizing our notion of Gaussian Clifford transformations as degree-preserving permutations of Majorana operators. For generality, we state the following lemma with regards to any generalized permutation matrix.

\begin{lemma}\label{lem:unique_submatrix}
	Let $ Q \in \Sym(m,d) $ and fix $ \bm{\tau} \in \comb{d}{j} $, $ 1 \leq j \leq d $. Then there exists exactly one $ \bm{\sigma} \in \comb{d}{j} $ for which $ |{\det[ Q_{\bm{\sigma},\bm{\tau}} ]}| = 1 $;~otherwise, $ \det[ Q_{\bm{\sigma},\bm{\tau}} ] = 0 $. In particular, we have
	\begin{equation}\label{eq:subdet_lemma}
		\det[ ( Q^\dagger )_{\bm{\tau},\bm{\sigma}} ] \det[ Q_{\bm{\sigma},\bm{\nu}} ] = | {\det[ Q_{\bm{\sigma},\bm{\tau}} ]} | \, \delta_{\bm{\tau} \bm{\nu}}
	\end{equation}
	for all $ \bm{\nu} \in \comb{d}{j} $.
\end{lemma}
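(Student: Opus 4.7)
The plan is to exploit the rigid structure of generalized permutation matrices: each $Q \in \Sym(m,d)$ can be written as $Q_{ij} = \omega_j\,\delta_{i,\pi(j)}$, where $\pi \in \Sym(d)$ is the underlying permutation and each $\omega_j$ is an $m$th root of unity. Every row and column of $Q$ therefore contains exactly one nonzero entry, and this should pin down every nonvanishing subdeterminant combinatorially.

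First I would prove the existence-and-uniqueness claim. Fix $\bm{\tau} \in \comb{d}{j}$ and consider column $b$ of the submatrix $Q_{\bm{\sigma},\bm{\tau}}$: the only potentially nonzero entry of $Q$ in column $\tau_b$ sits in row $\pi(\tau_b)$. Hence if $\bm{\sigma}$ fails to contain every element of $\pi(\bm{\tau})$, some column of the submatrix is identically zero and $\det Q_{\bm{\sigma},\bm{\tau}} = 0$. Since $|\bm{\sigma}| = j = |\bm{\tau}|$, the only candidate for a nonzero subdeterminant is the $\bm{\sigma}$ obtained by arranging $\pi(\bm{\tau})$ in increasing order. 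For this choice $Q_{\bm{\sigma},\bm{\tau}}$ is itself a $j \times j$ generalized permutation matrix, whose determinant is a product of $j$ roots of unity times a sign, giving $|\det Q_{\bm{\sigma},\bm{\tau}}| = 1$.

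For the product identity in \cref{eq:subdet_lemma} I would apply this uniqueness to both factors simultaneously. The second factor $\det Q_{\bm{\sigma},\bm{\nu}}$ is nonzero only when $\bm{\nu}$ is the increasing arrangement of $\pi^{-1}(\bm{\sigma})$. Meanwhile $Q^\dagger$ also lies in $\Sym(m,d)$, with underlying permutation $\pi^{-1}$ and nonzero entries $\overline{\omega_j}$, so the first factor $\det (Q^\dagger)_{\bm{\tau},\bm{\sigma}}$ is nonzero only when $\bm{\sigma}$ is the increasing arrangement of $\pi(\bm{\tau})$. Composing, both factors are simultaneously nonzero iff $\bm{\sigma} = \pi(\bm{\tau})$ (sorted) and $\bm{\nu} = \pi^{-1}(\pi(\bm{\tau})) = \bm{\tau}$, producing the Kronecker $\delta_{\bm{\tau}\bm{\nu}}$. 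To pin down the magnitude in the $\bm{\nu} = \bm{\tau}$ case, I would use the entrywise identity $(Q^\dagger)_{\bm{\tau},\bm{\sigma}} = (Q_{\bm{\sigma},\bm{\tau}})^\dagger$, which immediately yields $\det[(Q^\dagger)_{\bm{\tau},\bm{\sigma}}]\det[Q_{\bm{\sigma},\bm{\tau}}] = |\det Q_{\bm{\sigma},\bm{\tau}}|^2$; since this magnitude lies in $\{0,1\}$, squaring is idempotent, matching the right-hand side $|\det Q_{\bm{\sigma},\bm{\tau}}|$.

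The only subtlety is purely bookkeeping: the sign incurred when sorting $\pi(\bm{\tau})$ (or $\pi^{-1}(\bm{\sigma})$) into increasing order contributes a $\pm 1$ to each individual subdeterminant, but it enters $\det Q_{\bm{\sigma},\bm{\tau}}$ and its complex conjugate $\det (Q^\dagger)_{\bm{\tau},\bm{\sigma}}$ identically and therefore cancels in the product. No conceptual obstacle beyond this careful index tracking is anticipated, so the lemma follows directly from the permutation-theoretic structure of $Q$.
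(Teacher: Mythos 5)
Your proposal is correct and takes essentially the same approach as the paper's proof: both rest on the one-nonzero-entry-per-column structure of generalized permutation matrices to force $\bm{\sigma}$ to be the (sorted) image of $\bm{\tau}$ under the underlying permutation, and both use the submatrix identity $(Q^\dagger)_{\bm{\tau},\bm{\sigma}} = (Q_{\bm{\sigma},\bm{\tau}})^\dagger$ together with conjugation of determinants to obtain the Kronecker delta and the idempotent magnitude $|\det|^2 = |\det|$. Your explicit parametrization $Q_{ij} = \omega_j\,\delta_{i,\pi(j)}$ merely makes rigorous the paper's slightly informal step that the uniqueness property ``generalizes from matrix elements to subdeterminants,'' which is a stylistic rather than substantive difference.
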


\begin{proof}
	By definition of permutation matrices, for each column $ q $ of $ Q $ there is exactly one row $ p $ for which $ Q_{pq} \neq 0 $. Furthermore, this row is unique to each column. This property generalizes from matrix elements to subdeterminants:~for each set of columns indexed by $ \bm{\tau} $, there is exactly one unique set of rows $ \bm{\sigma} $ for which $ Q_{\bm{\sigma},\bm{\tau}} $ has a nonzero element in each row. In other words, $ Q_{\bm{\sigma},\bm{\tau}} \in \Sym(m,j) $ and hence $ |{\det[ Q_{\bm{\sigma},\bm{\tau}} ]}| = 1 $ (recall that the determinant of these matrices is the sign of the underlying permutation multipled by $ m $th roots of unity). Otherwise, for all $ \bm{\sigma}' \neq \bm{\sigma} $, $ Q_{\bm{\sigma}',\bm{\tau}} $ possesses at least one row or column of all zeros, and hence has determinant $ 0 $.
	
	Next, we show that submatrices behave under conjugate transposition as $ ( Q_{\bm{\sigma},\bm{\tau}} )^\dagger = ( Q^\dagger )_{\bm{\tau},\bm{\sigma}} $, which can be seen by examining their matrix elements:
	\begin{equation}
	\begin{split}
		[ Q_{\bm{\sigma},\bm{\tau}} ]_{pq} &= Q_{\sigma_p \tau_q} = [ Q^\dagger ]_{\tau_q \sigma_p}^* = [ ( Q^\dagger )_{\bm{\tau},\bm{\sigma}} ]_{qp}^* = [ ( [ Q^\dagger ]_{\bm{\tau},\bm{\sigma}} )^\dagger ]_{pq}.
	\end{split}
	\end{equation}
	Since the determinant is invariant under transposition and preserves complex conjugation, $ \det[ ( Q^\dagger )_{\bm{\tau},\bm{\sigma}} ] = \det[ ( Q_{\bm{\sigma},\bm{\tau}} ) ]^* $. This gives us
	\begin{equation}
		\det[ ( Q^\dagger )_{\bm{\tau},\bm{\sigma}} ] \det[ Q_{\bm{\sigma},\bm{\nu}} ] = \det[ ( Q_{\bm{\sigma},\bm{\tau}} ) ]^* \det[ Q_{\bm{\sigma},\bm{\nu}} ].
	\end{equation}
	But since $ \bm{\tau} $ is the unique $ j $-combination for which $ \det[ ( Q_{\bm{\sigma},\bm{\tau}} ) ] \neq 0 $ for a fixed $ \bm{\sigma} $, the above expression can only be nonzero when $ \bm{\nu} = \bm{\tau} $. Equation~\eqref{eq:subdet_lemma} thus follows.
\end{proof}

\begin{remark}\label{rem:unique_submatrix}
    This uniqueness property of nonzero subdeterminants implies that the image of the orthogonal representation $ \Phi \colon \SO(2n) \to \SO(4^n - 1) $, when restricted to a subgroup $ G \subseteq \Sym^{+}(2,2n) \subset \SO(2n) $, lies in $ \Sym^{+}(2,4^n - 1) $. Its subrepresentations also satisfy $ \phi_k(G) \subseteq \Sym\l[2,\binom{2n}{k}\r] $. From a physical perspective, these are all straightforward consequences of the Clifford property imposed on our unitary ensemble.
\end{remark}

Applying \cref{lem:unique_submatrix} to \cref{eq:fgu_sum_z} reveals that the Majorana operators are in fact the eigenbasis of $ \mathcal{M}_\FGU $:
\begin{equation}\label{eq:M_eigenvalues_eq}
	\begin{split}
	\mathcal{M}_\FGU(\Gamma_{\bm{\tau}}) &= \E_{Q \sim \Sym^{+}(2,2n)} \l[ \sum_{z\in\{0,1\}^n} \ev{z}{U(Q) \Gamma_{\bm{\tau}} U(Q)^\dagger}{z} U(Q)^\dagger \op{z}{z} U(Q) \r]\\
	&= \E_{Q \sim \Sym^{+}(2,2n)} \l[ \sum_{\bm{\sigma} \in \diags{2n}{2k}} | {\det[ Q_{\bm{\sigma},\bm{\tau}} ]} | \r] \Gamma_{\bm{\tau}}.
	\end{split}
\end{equation}

\subsection{\label{sec:fgu_norm}The shadow norm}

To evaluate the eigenvalues of \cref{eq:M_eigenvalues_eq}, we invoke the theory of finite frames~\cite{han2000frames,waldron2018introduction}. We begin with the definition of a frame.

\begin{definition}
	Let $ V $ be a Hilbert space with inner product $ \langle \cdot, \cdot \rangle $. A \emph{frame} is a sequence $ \{ x_j \}_j \subset V $ which satisfies
	\begin{equation}
	\alpha \| v \|_V^2 \leq \sum_{j} | \langle x_j, v \rangle |^2 \leq \beta \| v \|_V^2 \quad \forall v \in V,
	\end{equation}
	for some real constants $ \alpha,\beta > 0 $ (called the \emph{frame bounds}). Here, $ \| v \|_V \coloneqq \sqrt{\langle v, v \rangle} $. A frame is called \emph{tight} if $ \alpha=\beta $.
\end{definition}

Recall that $ \det[ Q_{\bm{\sigma},\bm{\tau}} ] $ defines the matrix elements of the orthogonal representation $ \phi_{2k} \colon \SO(2n) \to \SO\l[ \binom{2n}{2k} \r] $. In order to demonstrate the optimality of our ensemble, we shall generalize \cref{eq:M_eigenvalues_eq} to take the average over any subgroup $ G \subseteq \Sym^{+}(2,2n) $, in which case we consider the restricted representations $ \phi_{2k}\big|_G \colon G \to \SO\l[ \binom{2n}{2k} \r] $. (Whenever the context is clear, we shall simply write $ \phi_{2k} $.) To simplify notation, we take $ \R^{\binom{2n}{2k}} $ as our representation space, spanned by the standard basis $ \{ e_{\bm{\mu}} \mid \bm{\mu} \in \comb{2n}{2k} \} $.

Consider the group orbit
\begin{equation}
	\phi_{2k}(G) e_{\bm{\sigma}} \coloneqq \{ \phi_{2k}(Q) e_{\bm{\sigma}} \mid Q \in G \}.
\end{equation}
A sufficient condition for the eigenvalues of $ \mathcal{M}_G $ to be nonzero (hence guaranteeing the existence of $ \mathcal{M}_G^{-1} $) is that $ \phi_{2k}(G) e_{\bm{\sigma}} $ be a frame for some $ \bm{\sigma} \in \diags{2n}{2k} $. Indeed, assuming the frame condition, we have
\begin{equation}
	\begin{split}
	\E_{Q \sim G} | {\det[ Q_{\bm{\sigma},\bm{\tau}} ]} | &= \frac{1}{|G|} \sum_{Q \in G} | {\det[ Q_{\bm{\sigma},\bm{\tau}} ]} |^2\\
	&= \frac{1}{|G|} \sum_{Q \in G} | \langle \phi_{2k}(Q)^\T e_{\bm{\sigma}} , e_{\bm{\tau}} \rangle |_2^2\\
	&\geq \frac{\alpha}{|G|} \| e_{\bm{\tau}} \|_2^2 > 0
	\end{split}
\end{equation}
for all $ \bm{\tau} \in \comb{2n}{2k} $. Note that the transpose is irrelevant since $ G $ is a group and $ \phi_{2k} $ an orthogonal representation. While $ \mathcal{M}_G $ may be made positive even without taking $ G $ to be a group, the group structure has desirable implications for the measurement complexity (shadow norm) of our scheme. (It is also generally easier to sample from a well-established group like the symmetric group, rather than some ad hoc subset of $ \Sym^{+}(2,2n) $.) To see this, we shall focus attention exclusively to tight frames, which is motivated by the following observation.

\begin{lemma}\label{lem:shadow_norm_bound}
	Let $ G \subseteq \Sym^{+}(2,2n) $ be a subset such that $ \phi_{2k}(G) e_{\bm{\sigma}} $ is a frame for all $ \bm{\sigma} \in \diags{2n}{2k} $, $ 1 \leq k \leq n $. Let $ \alpha_k,\beta_k > 0 $ be the cumulative frame bounds over all $ \bm{\sigma} $ for each $ k $, i.e.,
	\begin{equation}\label{eq:general_frame_for_eigenvalues}
		\alpha_k \leq \frac{1}{|G|} \sum_{\bm{\sigma} \in \diags{2n}{2k}} \sum_{Q \in G} | {\det[ Q_{\bm{\sigma},\bm{\tau}} ]} |^2 \leq \beta_k
	\end{equation}
	for all $ \bm{\tau} \in \comb{2n}{2k} $. Then the shadow norm associated with the ensemble $ G $ satisfies
	\begin{equation}
		\alpha_k / \beta_k^2 \leq \sns{\Gamma_{\bm{\tau}}}{G}^2 \leq \beta_k/\alpha_k^{2}.
	\end{equation}
	These bounds are saturated if and only if the cumulative frame is tight, i.e., $ \sns{\Gamma_{\bm{\tau}}}{G}^2 = \alpha_k^{-1} $.
\end{lemma}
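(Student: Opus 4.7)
The plan is to pin down the shadow norm exactly as $\lambda_{\bm\tau}^{-1}$, where $\lambda_{\bm\tau}$ is the eigenvalue of $\mathcal{M}_G$ on $\Gamma_{\bm\tau}$, and then deliberately split this factor as $\lambda_{\bm\tau}^{-2}\cdot\lambda_{\bm\tau}$ and estimate the two pieces separately using the frame bounds. First I would promote \eqref{eq:M_eigenvalues_eq} from $\Sym^{+}(2,2n)$ to an arbitrary subgroup $G \subseteq \Sym^{+}(2,2n)$: the derivation there used only linearity, Hilbert--Schmidt orthogonality of the Majorana basis, and \cref{lem:unique_submatrix}, all of which still apply, so $\mathcal{M}_G$ is diagonal on $\MA{n}{2k}$ with eigenvalue
\begin{equation*}
\lambda_{\bm\tau} = \frac{1}{|G|}\sum_{Q\in G}\sum_{\bm\sigma \in \diags{2n}{2k}} |\det[Q_{\bm\sigma,\bm\tau}]|.
\end{equation*}
The crucial observation is that \cref{lem:unique_submatrix} forces $|\det[Q_{\bm\sigma,\bm\tau}]| \in \{0,1\}$, so $|\det[Q_{\bm\sigma,\bm\tau}]| = |\det[Q_{\bm\sigma,\bm\tau}]|^2$, and the hypothesis \eqref{eq:general_frame_for_eigenvalues} reads directly as $\alpha_k \leq \lambda_{\bm\tau} \leq \beta_k$ for every $\bm\tau \in \comb{2n}{2k}$.

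Next I would evaluate \eqref{eq:var_shadows} for $O = \Gamma_{\bm\tau}$. Because $\mathcal{M}_G^{-1}(\Gamma_{\bm\tau}) = \lambda_{\bm\tau}^{-1}\Gamma_{\bm\tau}$, a factor of $\lambda_{\bm\tau}^{-2}$ pulls outside the expectation and maximization, leaving the task of evaluating $\E_{U \sim G,\ \ket{z}\sim U\sigma U^\dagger}\!\l[\ev{z}{U\Gamma_{\bm\tau} U^\dagger}{z}^2\r]$. Here the key computation is that \cref{lem:unique_submatrix} forces $U(Q)\Gamma_{\bm\tau} U(Q)^\dagger = \pm\Gamma_{\bm\nu(Q)}$ for a single $\bm\nu(Q) \in \comb{2n}{2k}$, so $\ev{z}{U(Q)\Gamma_{\bm\tau} U(Q)^\dagger}{z}^2$ equals $1$ when $\bm\nu(Q) \in \diags{2n}{2k}$ and $0$ otherwise, independently of $z$. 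The Born-rule sum $\sum_z \ev{z}{U\sigma U^\dagger}{z}$ therefore collapses to $1$ regardless of $\sigma$, the maximum over $\sigma$ is trivial, and averaging the indicator over $Q\sim G$ returns precisely $\lambda_{\bm\tau}$. Hence $\sns{\Gamma_{\bm\tau}}{G}^2 = \lambda_{\bm\tau}^{-1}$ as an exact equality. This state-independence step is where I anticipate the main subtlety; it relies essentially on the fact that Majorana operators are self-inverse combined with the permutation structure from \cref{lem:unique_submatrix}.

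To obtain the asymmetric bounds in the stated form, I would apply $\alpha_k \leq \lambda_{\bm\tau} \leq \beta_k$ to the two factors of $\lambda_{\bm\tau}^{-1} = \lambda_{\bm\tau}^{-2}\cdot\lambda_{\bm\tau}$ separately: combining $\lambda_{\bm\tau}^{-2} \leq \alpha_k^{-2}$ with $\lambda_{\bm\tau} \leq \beta_k$ gives the upper bound $\beta_k/\alpha_k^2$, and the symmetric pairing $\beta_k^{-2}\cdot\alpha_k$ gives the lower bound $\alpha_k/\beta_k^2$. Simultaneous saturation of both inequalities forces $\alpha_k^3 = \beta_k^3$, which is equivalent to $\alpha_k = \beta_k$, i.e., the cumulative frame is tight. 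In that case both bounds collapse to $\alpha_k^{-1}$, and the exact identity $\sns{\Gamma_{\bm\tau}}{G}^2 = \lambda_{\bm\tau}^{-1}$ combined with the squeeze $\lambda_{\bm\tau} = \alpha_k$ yields the \emph{iff} clause $\sns{\Gamma_{\bm\tau}}{G}^2 = \alpha_k^{-1}$, completing the proof.
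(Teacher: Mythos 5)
Your proposal is correct and follows essentially the same route as the paper's proof: both rely on \cref{lem:unique_submatrix} to show the channel is diagonal with eigenvalue $\lambda_{\bm\tau} = \E_{Q\sim G}\sum_{\bm\sigma\in\diags{2n}{2k}}|\det[Q_{\bm\sigma,\bm\tau}]|$, both establish that the inner Born-rule average is state- and $z$-independent, and both split $\lambda_{\bm\tau}^{-1}=\lambda_{\bm\tau}^{-2}\cdot\lambda_{\bm\tau}$ to apply the frame bounds asymmetrically. The only (welcome) difference is presentational: you isolate the exact identity $\sns{\Gamma_{\bm\tau}}{G}^2=\lambda_{\bm\tau}^{-1}$ before invoking the frame bounds, whereas the paper interleaves the two bounding steps and leaves that equality implicit in its saturation remark.
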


\begin{remark}
	In \cref{eq:general_frame_for_eigenvalues}, and hence in \cref{eq:M_eigenvalues_eq}, the actual frame being used is $ \phi_{2k}(G) (e_{\bm{\sigma}} |G|^{-1/2}) $, i.e., the ``orbit'' of the basis vector scaled by a factor of $ |G|^{-1/2} $. (We use scare quotes to indicate that, since we do not require $ G $ to be a group here, the corresponding set of vectors may not be an orbit proper.)
\end{remark}

\begin{proof}
	First, we recall the definition of the shadow norm:
	\begin{equation}
			\sns{\Gamma_{\bm{\tau}}}{G}^2 = \max_{\text{states } \rho} \l( \mathbb{E}_{Q \sim G} \l[ \sum_{z \in \{0,1\}^n} \ev{z}{U(Q) \rho U(Q)^\dagger}{z} \ev{z}{U(Q) \mathcal{M}^{-1}(\Gamma_{\bm{\tau}}) U(Q)^\dagger}{z}^2 \r] \r).
	\end{equation}
	Adapting \cref{eq:M_eigenvalues_eq} to the language of frames yields the inequalities
	\begin{equation}
		\beta_k^{-1} \leq \| \mathcal{M}_G^{-1}(\Gamma_{\bm{\tau}}) \| \leq \alpha_k^{-1},
	\end{equation}
	where we have used the fact that the spectral norm $ \| \cdot \| $ of the Majorana operators is $ 1 $. Thus
	\begin{equation}
		\begin{split}
		\sns{\Gamma_{\bm{\tau}}}{G}^2 &\leq \alpha_k^{-2} \max_{\text{states } \rho} \l( \mathbb{E}_{Q \sim G} \l[ \sum_{z \in \{0,1\}^n} \ev{z}{U(Q) \rho U(Q)^\dagger}{z} \ev{z}{U(Q) \Gamma_{\bm{\tau}} U(Q)^\dagger}{z}^2 \r] \r)\\
		&= \alpha_k^{-2} \max_{\text{states } \rho} \l( \mathbb{E}_{Q \sim G} \l[ \sum_{z \in \{0,1\}^n} \ev{z}{U(Q) \rho U(Q)^\dagger}{z} \l( \sum_{\bm{\mu} \in \comb{2n}{2k}} \det[ ( Q^\T )_{\bm{\tau},\bm{\mu}} ] \ev{z}{\Gamma_{\bm{\mu}}}{z} \r{)^2} \r] \r).
		\end{split}
	\end{equation}
	Let us examine the innermost bracketed term. Since $ \ev{z}{\Gamma_{\bm{\mu}}}{z} = \pm 1 $ when $ \bm{\mu} \in \diags{2n}{2k} $ and $ 0 $ otherwise, we can restrict the sum to run over $ \diags{2n}{2k} $. Furthermore, by \cref{lem:unique_submatrix}, we know that $ \det[ ( Q^\T )_{\bm{\tau},\bm{\mu}} ] $ is nonzero only for one particular $ \bm{\mu} $, which may or may not lie in $ \diags{2n}{2k} $. This means that the sum contains at most one nonzero term, which takes the value $ \pm 1 $, allowing us to ``transform'' the square into an absolute value:
	\begin{equation}\label{eq:norm_var_term}
		\l( \sum_{\bm{\mu} \in \comb{2n}{2k}} \det[ ( Q^\T )_{\bm{\tau},\bm{\mu}} ] \ev{z}{\Gamma_{\bm{\mu}}}{z} \r{)^2} = \sum_{\bm{\mu} \in \diags{2n}{2k}} | {\det[ Q_{\bm{\mu},\bm{\tau}} ]} |.
	\end{equation}
	Importantly, this quantity does not depend on $ z $, which makes the sum over $ z $ trivial (as well as the maximum over all states):
	\begin{equation}
		\sum_{z \in \{0,1\}^n} \ev{z}{U(Q) \rho U(Q)^\dagger}{z} = \tr \rho = 1.
	\end{equation}
	We are then left with taking the average over $ Q \sim G $ in \cref{eq:norm_var_term}, which is simply the original quantity of interest obeying the frame bounds [\cref{eq:general_frame_for_eigenvalues}]. Therefore
	\begin{equation}
		\begin{split}
			\sns{\Gamma_{\bm{\tau}}}{G}^2 &\leq \alpha_k^{-2} \E_{Q \sim G} \l[ \sum_{\bm{\mu} \in \diags{2n}{2k}} | {\det[ Q_{\bm{\mu},\bm{\tau}} ]} | \r]\\
			&\leq \alpha_k^{-2} \beta_k.
		\end{split}
	\end{equation}
	To obtain the lower bound, simply reverse the roles of $ \alpha_k $ and $ \beta_k $. Since the only inequalities invoked were those of the frame bounds, it follows that equality holds in both directions if and only if the frame is tight.
\end{proof}

\cref{lem:shadow_norm_bound} is useful in two ways:~first, it gives an estimate on the shadow norm for any valid subset of fermionic Gaussian Clifford unitaries, assuming one has bounds on the eigenvalues of $ \mathcal{M}_G $. Second, it tells us that those subsets which give rise to \emph{tight} frames exhibit optimal sample complexity in the sense of the shadow norm. Tight frames generated by the action of finite groups have been completely characterized through representation theory~\cite{vale2004tight,waldron2018introduction}. Below we restate the primary results relevant to our context, which will motivate us to restrict $ G $ to a group.

\begin{proposition}[{\cite[Theorems 6.3 and 6.5]{vale2004tight}}]
	\label{prop:tight_frames}
	Let $ H $ be a finite group, $ V $ a Hilbert space, and $ \varphi \colon H \to \U(V) $ a unitary representation. Then:
	\begin{enumerate}
		\item Every orbit $ \varphi(H)v $, $ v \in V \setminus \{0\} $, is a tight frame if and only if every orbit spans $ V $ [i.e., $ \varphi $ is an irreducible representation (irrep)].
		\item There exists $ v \in V \setminus \{0\} $ for which $ \varphi(H)v $ is a tight frame if and only if there exists $ w \in V \setminus \{0\} $ such that $ \spn(\varphi(H)w) = V $.
	\end{enumerate}
\end{proposition}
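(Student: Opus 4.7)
The plan is to convert the tight-frame condition into a statement about the \emph{frame operator} $S_v \colon V \to V$ defined by
\begin{equation}
S_v w \coloneqq \sum_{h \in H} \langle \varphi(h) v, w \rangle \, \varphi(h) v,
\end{equation}
so that $\sum_h |\langle \varphi(h) v, w \rangle|^2 = \langle w, S_v w \rangle$, and $\varphi(H) v$ is a tight frame with bound $c$ if and only if $S_v = c I_V$. The key observation is that $S_v$ always commutes with $\varphi(g)$ for every $g \in H$: reindexing the defining sum via $h \mapsto g^{-1} h$ and using left-invariance of the uniform sum over a group yields $\varphi(g) S_v \varphi(g)^\dagger = S_v$. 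Hence $S_v$ lies in the commutant of $\varphi$, and the whole argument proceeds by analyzing that commutant with Schur's lemma.

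For part~(1), one direction is immediate: any tight frame with positive lower bound must span $V$ (else a nonzero vector orthogonal to the span would violate the lower frame bound), so if every nonzero orbit is a tight frame then every such orbit spans. Conversely, if $\varphi$ is irreducible, Schur's lemma forces the commutant to be $\C I_V$, so $S_v = c_v I_V$ for some scalar, and tracing both sides gives $c_v = |H| \, \|v\|^2 / \dim V > 0$. Thus every nonzero orbit is automatically a tight frame.

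For part~(2), the forward direction again follows since tight frames span (take $w = v$). The reverse direction is the main obstacle, and requires descending into the isotypic structure of $V$. I would decompose $V \cong \bigoplus_\pi W_\pi \otimes \C^{m_\pi}$ with $\varphi$ acting as $\pi \otimes I_{m_\pi}$ on each block. Schur's lemma applied to the commutant now forces $S_v$ to act on the $\pi$-isotypic block as $I_{W_\pi} \otimes A_\pi(v)$ for some positive operator $A_\pi(v)$ on the multiplicity space $\C^{m_\pi}$, so $\varphi(H) v$ is a tight frame if and only if every $A_\pi(v)$ equals a common scalar multiple of the identity. Representing the component of $v$ in the $\pi$-block as a linear map $\tilde v_\pi \colon \C^{m_\pi} \to W_\pi$ under the natural isomorphism $W_\pi \otimes \C^{m_\pi} \cong \mathrm{Hom}(\C^{m_\pi}, W_\pi)$, a direct computation identifies $A_\pi(v) = (|H|/\dim W_\pi) \, \tilde v_\pi^\dagger \tilde v_\pi$, so tightness reduces to the condition that each $\tilde v_\pi$ be a scaled isometry with a common scaling.

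The cyclicity hypothesis is precisely what makes this construction possible. Using Burnside's theorem $\spn \pi(H) = \mathrm{End}(W_\pi)$, a short argument shows that $\spn(\varphi(H) w) = V$ if and only if every $\tilde w_\pi$ is injective, which forces $m_\pi \leq \dim W_\pi$ on every isotypic block that occurs in $V$. Under this dimension inequality I can choose an isometry $\tilde v_\pi \colon \C^{m_\pi} \hookrightarrow W_\pi$ in each block, rescale them to a common constant, and assemble them into a vector $v$ whose orbit is a tight frame. The technical point I expect to require the most care is verifying the identification $A_\pi(v) \propto \tilde v_\pi^\dagger \tilde v_\pi$ cleanly under the isotypic isomorphism, since one must track how the group average interacts separately with the representation space $W_\pi$ and the multiplicity space $\C^{m_\pi}$; everything else is either a direct application of Schur's lemma or an immediate consequence of the definition of a frame.
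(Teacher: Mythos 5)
The paper does not prove this proposition---it is imported verbatim from Vale and Waldron (Theorems 6.3 and 6.5 of the cited reference), so there is no in-paper proof to compare against; the paper's own machinery (\cref{prop:frame_op}, \cref{thm:tight_frame_bound}) only \emph{uses} it. Judged on its own terms, your argument is correct and is essentially the standard proof, matching the strategy of the cited source: tightness of $ \varphi(H)v $ is equivalent to the frame operator $ S_v $ being a positive scalar multiple of $ I_V $; group averaging places $ S_v $ in the commutant of $ \varphi $; Schur's lemma then settles part (1), and the blockwise isotypic refinement settles part (2). Your reduction of cyclicity to injectivity of each $ \tilde{w}_\pi $ (hence $ m_\pi \leq \dim W_\pi $) via Burnside's theorem, and the construction of $ v $ from scaled isometries sharing a common tightness constant, are both sound; the partial-trace identification $ A_\pi(v) \propto \tilde{v}_\pi^\dagger \tilde{v}_\pi $ that you flag as delicate is indeed correct, up to an irrelevant transpose arising from the isomorphism $ W_\pi \otimes \C^{m_\pi} \cong \mathrm{Hom}(\C^{m_\pi}, W_\pi) $.

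One caveat you should make explicit: the step ``Schur's lemma forces the commutant to be $ \C I_V $'' is valid only for complex representations, whereas the representations to which the paper applies this proposition are \emph{real} orthogonal ones, $ \phi_{2k} \colon G \to \SO\l[ \binom{2n}{2k} \r] $; over $ \R $ the commutant of an irreducible representation may be isomorphic to $ \C $ or $ \mathbb{H} $ rather than $ \R $. Your proof survives because $ S_v $ is self-adjoint and positive semidefinite: its eigenspaces are $ \varphi $-invariant, so irreducibility forces $ S_v $ to be scalar over any ground field---but it is this spectral argument, not the bare commutant statement, that should be invoked. Part (2) likewise needs minor bookkeeping over $ \R $ (multiplicity spaces become modules over the relevant division algebra); alternatively, note that the paper sidesteps part (2) entirely via \cref{lem:all_orbits_span}, which reduces everything to the irreducible case covered by part (1).
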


In our context, this means that if $ \phi_{2k}\big|_G $ is irreducible, then every orbit is tight, hence saturating the bound of \cref{lem:shadow_norm_bound}. Alternatively, if $ \phi_{2k}\big|_G $ is not an irrep, there may be only specific orbits which form tight frames. Fortunately, this complication does not arise in our setting due to the fact that we are dealing exclusively with (signed) permutation matrices. We formalize this notion with the following lemma. (Although we are only interested in the even-degree representations, we formulate the statement to apply to all $ 1 \leq k \leq 2n $ for completeness.)

\begin{lemma}\label{lem:all_orbits_span}
	Let $ G\subseteq\Sym^{+}(2,2n) $ be a group such that $ \spn(\phi_{k}(G) v_0) = \R^{\binom{2n}{k}} $
	for some nonzero $ v_0\in\R^{\binom{2n}{k}} $. Then $ \spn(\phi_{k}(G) v) = \R^{\binom{2n}{k}} $ for all nonzero $ v\in\R^{\binom{2n}{k}} $.
\end{lemma}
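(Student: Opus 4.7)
The plan is to reduce the lemma to proving that $\phi_k|_G$ is an irreducible representation, and then to deduce irreducibility from the cyclic-vector hypothesis together with the monomial structure of $\phi_k(G)$.

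First I would observe that $W_v \coloneqq \spn(\phi_k(G) v)$ is $G$-invariant for any $v$: closure under $\phi_k(Q')$ follows from $\phi_k(Q')\phi_k(Q)v = \phi_k(Q'Q)v \in W_v$. Consequently, the conclusion that $W_v = V \coloneqq \R^{\binom{2n}{k}}$ for every nonzero $v$ is equivalent to the statement that $\phi_k|_G$ has no proper nonzero invariant subspace, that is, $\phi_k|_G$ is irreducible. Indeed, if $\phi_k|_G$ is irreducible then every nonzero $W_v$ must equal $V$; conversely, any proper invariant $W \neq 0$ yields noncyclic $v \in W \setminus \{0\}$ with $W_v \subseteq W \subsetneq V$. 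So it suffices to show that the existence of the single cyclic vector $v_0$ forces $\phi_k|_G$ to be irreducible.

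To establish irreducibility, I would exploit the signed-permutation structure from \cref{rem:unique_submatrix}: each $\phi_k(Q)$ acts on the standard basis $\{e_{\bm{\mu}}\}$ as a signed permutation. The underlying (unsigned) permutation action partitions $\comb{2n}{k}$ into $G$-orbits $O_1,\ldots,O_s$, inducing an orthogonal $G$-invariant decomposition $V = \bigoplus_i V_{O_i}$ with $V_{O_i} \coloneqq \spn(e_{\bm{\mu}} : \bm{\mu}\in O_i)$. The cyclic-vector hypothesis forces the projection of $v_0$ onto each block $V_{O_i}$ to generate all of $V_{O_i}$ as a $G$-module; otherwise, some basis vector supported in a missing portion would never appear in $\spn(\phi_k(G)v_0)$. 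The plan is then to argue that this cross-block spanning property, combined with the $\det = +1$ constraint from $G \subseteq \Sym^{+}(2,2n)$, leaves no room for any further proper $G$-invariant subspace.

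The main obstacle is ruling out finer invariant subspaces within a single orbit block $V_{O_i}$: on each block, the signed-permutation action corresponds to a representation induced from a sign character of the point stabilizer, and such an induced representation can in principle decompose further. I expect the hypothesis $G \subseteq \Sym^{+}(2, 2n)$ to enter essentially at exactly this point — the $\det = +1$ condition on elements of $G$ translates into sign-coherence relations among the submatrix determinants defining $\phi_k(G)$, restricting the admissible stabilizer characters and thereby eliminating any intra-block splitting compatible with the existence of the global cyclic vector. Once this rigidity is verified (likely via a combinatorial or character-theoretic computation), irreducibility of $\phi_k|_G$ follows and the conclusion of the lemma is immediate from the first reduction.
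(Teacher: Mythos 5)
Your opening reduction is correct: since each $W_v=\spn(\phi_k(G)v)$ is $G$-invariant, the lemma is equivalent to the claim that a single cyclic vector forces $\phi_k|_G$ to be irreducible. The genuine gap is the step you explicitly defer --- ruling out invariant subspaces inside a single orbit block via the $\det=+1$ constraint --- and it cannot be filled, because that claim is false (and so, in fact, is the lemma as stated, for general subgroups). Take $2n=4$ and $k=1$, so that $\phi_1(Q)=Q$. Let $C\in\Sym^{+}(2,4)$ be the signed $4$-cycle $Ce_1=e_2$, $Ce_2=e_3$, $Ce_3=e_4$, $Ce_4=-e_1$; its determinant is the sign of the $4$-cycle times the product of nonzero entries, $(-1)\cdot(-1)=+1$. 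Set $G=\langle C\rangle\cong\Z_8$. The unsigned action is transitive (a single block in your decomposition), and $e_1$ is cyclic: its orbit $\{\pm e_1,\pm e_2,\pm e_3,\pm e_4\}$ spans $\R^4$. Yet $\phi_1|_G$ is reducible, as it must be since $G$ is abelian and real irreducible representations of abelian groups have dimension at most $2$: concretely, if $x\in\C^4$ is a $C$-eigenvector with eigenvalue $e^{i\pi/4}$, then $\spn\{\Re x,\Im x\}$ is a proper $G$-invariant plane, so the orbit of the nonzero vector $\Re x$ does not span $\R^4$. Thus the hypothesis of the lemma holds while its conclusion fails, and the determinant condition provides no intra-block rigidity. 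The cross-block half of your plan fails as well: with $Re_1=e_2$, $Re_2=-e_1$, the group $G=\langle R\oplus I_2,\ I_2\oplus R\rangle\subset\Sym^{+}(2,4)$ has the cyclic vector $e_1+e_3$, even though $\spn\{e_1,e_2\}$ and $\spn\{e_3,e_4\}$ are proper invariant subspaces.

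For comparison, the paper's own proof takes a different and more elementary route --- it restricts attention to standard basis vectors, notes via \cref{rem:unique_submatrix} that a basis vector's orbit spans if and only if the orbit is the entire basis modulo signs, and transports this across the basis by transitivity --- but it founders on the same rock. Its opening ``without loss of generality'' is refuted by the second example above (there $e_1+e_3$ is cyclic but no standard basis vector is), and its closing sentence, that the orbit of every nonzero vector then spans, is exactly the false implication from basis-vector cyclicity to irreducibility exhibited by the first example. So neither your approach nor the paper's can be completed as written. What is true, and what the paper's main results actually rest on, is the direct irreducibility of $\phi_k$ restricted to the two concrete groups $\Sym^{+}(2,2n)$ and $\Alt(2n)$, established by the character computation of \cref{thm:G_irrep}; combined with the first part of \cref{prop:tight_frames}, that alone yields tightness of every orbit, so the downstream theorems survive without this lemma.
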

\begin{proof}
	Without loss of generality, we can consider $ v_0 $ an arbitrary element of the standard basis. From \cref{rem:unique_submatrix}, we have $ \phi_k(G) \subseteq \Sym\l[2,\binom{2n}{k}\r] $;~therefore, $ \spn(\phi_k(G)v_0) = \R^{\binom{2n}{k}} $ if and only if the orbit is the entire basis (modulo signs):
	\begin{equation}
	\phi_k(G) v_0 = \{ e_{\bm{\mu}} \text{ and/or } {-}e_{\bm{\mu}} \mid \bm{\mu} \in \comb{2n}{k} \}.
	\end{equation}
	Now consider some other basis vector $ w_0 \neq v_0 $. Since there exists some $ g\in G $ such that $ \phi_k(g)v_0 = \pm w_0 $, it follows that
	\begin{equation}
	\begin{split}
		\phi_k(G)v_0 &= \phi_k(G) [\phi_k(g^{-1}) (\pm w_0)]\\
		&= \phi_k(G g^{-1}) (\pm w_0) = \phi_k(G) (\pm w_0).
	\end{split}
	\end{equation}
	Since the sign is irrelevant when taking the span, we see that the orbit of any basis vector spans the space. Hence the orbit of every nonzero vector does as well.
\end{proof}

This result allows us to ignore the second part of \cref{prop:tight_frames}, so that we only have to consider the irreducibility of $ \phi_{2k}\big|_G $. Furthermore, we do not have to worry about choosing some particular $ \bm{\sigma} \in \diags{2n}{2k} $ to generate our cumulative frame;~although the specific form of these tuples changes depending on the choice of fermion-to-qubit mapping, the above results tell us that the behavior of such tight frames is uniform across all of $ \comb{2n}{2k} $.

Enumerating all possible subgroups to search for the one which gives the largest frame bound (hence smallest shadow norm) is a highly impractical task. Fortunately, this is not necessary, as it turns out that \emph{all} irreps $ \phi_{2k}\big|_G $ yield the same frame bound. To see this, we introduce an alternative, equivalent formulation of frames based on the \emph{frame operator} $ T \colon V \to V $,
\begin{equation}
	T \coloneqq \sum_{j} \langle x_j, \cdot \, \rangle x_j,
\end{equation}
where $ \{x_j\}_j $ is a frame for $ V $.

\begin{proposition}[{\cite[Theorem 3]{cotfas2010finite}}]\label{prop:frame_op}
	Let $ H $ be a finite group, $ V $ a Hilbert space, and $ \varphi \colon H \to \U(V) $ an irreducible unitary representation. For any nonzero $ v \in V $, the frame operator for the orbit $ \varphi(H)v $ is
	\begin{equation}
		T = \frac{|\varphi(H)v|}{\dim V} \| v \|_V^2 \, \emph{\openone}_{V},
	\end{equation}
	where $ \emph{\openone}_{V} $ is the identity operator on $ V $.
\end{proposition}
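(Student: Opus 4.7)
The plan is to invoke Schur's lemma, which is the canonical tool for statements asserting that a group-averaged operator is a scalar multiple of the identity. The key intermediate object is the \emph{group-indexed} frame operator, which is naturally $H$-equivariant, and then to translate back to the orbit-indexed frame operator $T$ via the orbit--stabilizer theorem.

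First I would introduce
\begin{equation}
    \widetilde{T} \coloneqq \sum_{h \in H} \langle \varphi(h) v, \cdot \rangle \, \varphi(h) v,
\end{equation}
which is a positive self-adjoint operator on $V$. The crucial observation is that $\widetilde{T}$ intertwines $\varphi$: for any $g \in H$, using unitarity of $\varphi(g)$ and invariance of the sum under the reindexing $h \mapsto g^{-1} h$, one checks
\begin{equation}
    \varphi(g) \widetilde{T} \varphi(g)^\dagger = \sum_{h \in H} \langle \varphi(g^{-1} h) v, \varphi(g)^\dagger (\cdot) \rangle \, \varphi(h) v = \widetilde{T}.
\end{equation}
Since $\varphi$ is irreducible, Schur's lemma then forces $\widetilde{T} = c \, \mathbb{1}_V$ for some scalar $c \geq 0$.

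Next I would pin down $c$ by taking the trace. On one hand, $\mathrm{tr}(\widetilde{T}) = c \dim V$. On the other hand, using $\mathrm{tr}(\langle w, \cdot\rangle w) = \|w\|_V^2$ together with unitarity of each $\varphi(h)$,
\begin{equation}
    \mathrm{tr}(\widetilde{T}) = \sum_{h \in H} \|\varphi(h) v\|_V^2 = |H| \, \|v\|_V^2,
\end{equation}
so $c = |H| \|v\|_V^2 / \dim V$.

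Finally I would relate $\widetilde{T}$ to the orbit-indexed frame operator $T = \sum_{x \in \varphi(H)v} \langle x,\cdot\rangle x$. Summing over $H$ instead of over the orbit counts each orbit element exactly $|\mathrm{Stab}(v)|$ times, where $\mathrm{Stab}(v) = \{h \in H : \varphi(h) v = v\}$, so $\widetilde{T} = |\mathrm{Stab}(v)| \, T$. The orbit--stabilizer theorem gives $|\varphi(H) v| = |H| / |\mathrm{Stab}(v)|$, and combining yields
\begin{equation}
    T = \frac{|H|}{|\mathrm{Stab}(v)| \dim V} \|v\|_V^2 \, \mathbb{1}_V = \frac{|\varphi(H) v|}{\dim V} \|v\|_V^2 \, \mathbb{1}_V,
\end{equation}
as claimed. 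The only nontrivial step is the Schur-lemma application; everything else is bookkeeping (the intertwining computation and the orbit--stabilizer accounting), so I do not expect any genuine obstacle.
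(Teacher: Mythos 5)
The paper never proves \cref{prop:frame_op}---it is imported as a known result from Ref.~\cite{cotfas2010finite}, and the surrounding text remarks only that it is ``essentially a variant on Schur's lemma.'' Your proof is correct and is exactly that variant: the group-indexed operator $\widetilde{T}$, its invariance under conjugation by $\varphi(g)$, the appeal to Schur's lemma, the trace normalization $c = |H|\,\|v\|_V^2/\dim V$, and the orbit--stabilizer bookkeeping $\widetilde{T} = |\Stab(v)|\,T$ are all sound, and this is the standard proof of the cited theorem.

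One refinement matters for how the proposition is actually used downstream in this paper. The ensembles of interest act on \emph{real} spaces: in \cref{thm:tight_frame_bound} the proposition is applied with $V = \R^{\binom{2n}{2k}}$ and $\varphi = \phi_{2k}\big|_G$ an orthogonal representation. Over $\R$, the commutant form of Schur's lemma you invoke can fail---the commutant of a real irreducible representation need not reduce to $\R\,\openone_V$ (it can be isomorphic to $\C$ or to the quaternions). Your argument survives because $\widetilde{T}$ is self-adjoint, a fact you state at the outset but do not use at the crucial step: by the spectral theorem $\widetilde{T}$ has a real eigenvalue $\lambda$, the eigenspace $\ker(\widetilde{T}-\lambda\,\openone_V)$ is invariant under every $\varphi(g)$ because $\widetilde{T}$ commutes with all of them, and irreducibility then forces this nonzero invariant subspace to be all of $V$, i.e., $\widetilde{T} = \lambda\,\openone_V$. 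With that one-line substitution for the bare appeal to Schur, your proof covers both the complex case suggested by the proposition's statement and the real orthogonal case in which the paper applies it; everything else in your argument goes through unchanged.
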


In general, $ T = \alpha\openone_V $ if and only if the frame is tight (with frame bound $ \alpha $). In the context of tight frames generated by irreps, this result is essentially a variant on Schur's lemma. The utility of \cref{prop:frame_op} in particular is that the frame bound is given explicitly. Our claim that all irreducible subgroups yield the same shadow norm then follows.

\begin{theorem}\label{thm:tight_frame_bound}
	Let $ G \subseteq \Sym^{+}(2,2n) $ be irreducible with respect to $ \phi_{2k} $. Then
	\begin{equation}\label{eq:frame_eigenvalue}
	\begin{split}
		\E_{Q \sim G} \l[ \sum_{\bm{\sigma} \in \diags{2n}{2k}} | {\det[ Q_{\bm{\sigma},\bm{\tau}} ]} | \r] &\equiv \frac{1}{|G|} \sum_{Q \in G} \sum_{\bm{\sigma} \in \diags{2n}{2k}} | [ \phi_{2k}(Q) ]_{\bm{\sigma} \bm{\tau}} |^2\\
		&= \l. \binom{n}{k} \middle/ \binom{2n}{2k} \r.
	\end{split}
	\end{equation}
	for all $ \bm{\tau} \in \comb{2n}{2k} $.
\end{theorem}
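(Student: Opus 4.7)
The plan is to interpret the left-hand side of \cref{eq:frame_eigenvalue} as a diagonal entry of a cumulative frame operator on $V = \R^{\binom{2n}{2k}}$, then diagonalize via Schur's lemma. First, observe that the two sums in \cref{eq:frame_eigenvalue} are equal term-by-term: by \cref{lem:unique_submatrix} each $|{\det[Q_{\bm{\sigma},\bm{\tau}}]}| \in \{0,1\}$, so absolute value and square coincide, and $[\phi_{2k}(Q)]_{\bm{\sigma}\bm{\tau}} = \det[Q_{\bm{\sigma},\bm{\tau}}]$ by definition. It therefore suffices to evaluate $[T]_{\bm{\tau}\bm{\tau}}$, where
\begin{equation*}
T \coloneqq \frac{1}{|G|}\sum_{\bm{\sigma}\in\diags{2n}{2k}}\sum_{Q\in G}\phi_{2k}(Q)^{\T}\,e_{\bm{\sigma}}e_{\bm{\sigma}}^{\T}\,\phi_{2k}(Q).
\end{equation*}

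Next, apply the Schur twirl. Since $\phi_{2k}|_G$ is irreducible on $V$ by hypothesis, for every $v \in V$ one has
\begin{equation*}
\frac{1}{|G|}\sum_{Q\in G}\phi_{2k}(Q)^{\T}\,vv^{\T}\,\phi_{2k}(Q) = \frac{\|v\|_2^2}{\dim V}\,\openone_V,
\end{equation*}
which is the standard corollary of Schur's lemma and agrees with \cref{prop:frame_op} once the multiplicity $|G|/|\phi_{2k}(G)v|$ of distinct orbit elements in the group-indexed sum is accounted for. Setting $v = e_{\bm{\sigma}}$ gives a contribution of $\binom{2n}{2k}^{-1}\openone_V$ per $\bm{\sigma}$, and summing over the $|\diags{2n}{2k}| = \binom{n}{k}$ diagonal index tuples yields
\begin{equation*}
T = \frac{\binom{n}{k}}{\binom{2n}{2k}}\,\openone_V.
\end{equation*}
Reading off the $(\bm{\tau},\bm{\tau})$ entry produces the desired identity and simultaneously confirms $\bm{\tau}$-independence, so that the cumulative frame is tight with bound $\alpha_k = \binom{n}{k}/\binom{2n}{2k}$ in the sense of \cref{lem:shadow_norm_bound}.

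There is no substantive obstacle, since irreducibility is an assumed hypothesis; the work is essentially unpacking the Schur twirl and invoking the counts $|\diags{2n}{2k}| = \binom{n}{k}$ and $\dim V = \binom{2n}{2k}$. The one point meriting care is reconciling the group-indexed sum (with $|G|$ terms and repeats) with the set-indexed orbit frame of \cref{prop:frame_op}; this is handled by the multiplicity factor noted above, after which both presentations yield the same twirled operator.
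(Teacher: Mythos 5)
Your proposal is correct. It rests on the same underlying mechanism as the paper's proof---irreducibility of $\phi_{2k}\big|_G$ forces the group-averaged operator to be a multiple of the identity, with the constant fixed by a trace count---but your packaging is genuinely more direct. The paper invokes \cref{prop:frame_op}, which is a statement about the frame operator of the \emph{orbit} $\phi_{2k}(G)(e_{\bm{\sigma}}|G|^{-1/2})$ (a set of distinct vectors), and must then reconcile that orbit-indexed sum with the group-indexed sum appearing in \cref{eq:frame_eigenvalue}; this is exactly where the orbit--stabilizer theorem and the factor $|\Stab_G(e_{\bm{\sigma}})|$ enter, and the two factors cancel only at the end. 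Your version twirls directly over the group, $\frac{1}{|G|}\sum_{Q\in G}\phi_{2k}(Q)^\T e_{\bm{\sigma}}e_{\bm{\sigma}}^\T\phi_{2k}(Q) = \binom{2n}{2k}^{-1}\openone_V$, so the stabilizer never appears and the multiplicity bookkeeping you mention is needed only as a consistency check against \cref{prop:frame_op}, not as part of the argument. What the paper's route buys is a clean hand-off to the cited frame-theory literature; what yours buys is a shorter, self-contained computation. One shared subtlety worth a sentence in either treatment: the representation is real orthogonal, and Schur's lemma over $\R$ only says the commutant is a division algebra, so scalarity of the twirled operator is not automatic from irreducibility alone---it follows because the twirled operator is symmetric, hence its (real) eigenspaces are $G$-invariant subspaces, which must be trivial. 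The paper glosses over the same point when it applies a result stated for unitary representations, so this is not a gap relative to the paper's standard, but it is the one place where ``standard corollary of Schur's lemma'' is doing slightly more work than advertised.
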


\begin{proof}
	Consider the orbit $ \phi_{2k}(G) (e_{\bm{\sigma}} |G|^{-1/2}) $. Let $ \Stab_{G}(e_{\bm{\sigma}}) \coloneqq \{ Q \in G \mid \phi_{2k}(Q) e_{\bm{\sigma}} = e_{\bm{\sigma}} \} $ be its stabilizer subgroup. By the orbit--stabilizer theorem and Lagrange's theorem~\cite{grillet2007abstract},
	\begin{equation}
		| \phi_{2k}(G) (e_{\bm{\sigma}} |G|^{-1/2}) | = \frac{|G|}{| \Stab_{G}(e_{\bm{\sigma}}) |}.
	\end{equation}
	Using \cref{prop:frame_op} and recalling that the Hilbert space $ V $ of our frame is $ \R^{\binom{2n}{2k}} $, we see that the corresponding frame bound is
	\begin{equation}
	\begin{split}
		\alpha_{\bm{\sigma}} = \| T_{\bm{\sigma}} \| &= \frac{|G|}{| \Stab_{G}(e_{\bm{\sigma}}) |} \binom{2n}{2k}^{-1} \l\| \frac{e_{\bm{\sigma}}}{|G|^{1/2}} \r{\|_2^2}\\
		&= \frac{1}{| \Stab_{G}(e_{\bm{\sigma}}) |} \binom{2n}{2k}^{-1}.
	\end{split}
	\end{equation}
	One must be careful when handling this $ | \Stab_{G}(e_{\bm{\sigma}}) | $ term, since, when constructing the frame operator, we sum over all elements of the orbit, rather than of the group. However, the particular sum that we are interested in, namely \cref{eq:frame_eigenvalue}, \emph{is} over the entire group, and thus precisely double counts the elements of the stabilizer subgroup:
	\begin{equation}
	\begin{split}
	\frac{1}{|G|} \sum_{Q \in G} | [ \phi_{2k}(Q) ]_{\bm{\sigma} \bm{\tau}} |^2 &= \sum_{Q \in G} | \langle \phi_{2k}(Q)^\T e_{\bm{\sigma}} |G|^{-1/2} , e_{\bm{\tau}} \rangle |^2\\
	&= | \Stab_{G}(e_{\bm{\sigma}}) | \sum_{w \in \phi_{2k}(G) (e_{\bm{\sigma}} |G|^{-1/2})} | \langle w, e_{\bm{\tau}} \rangle |^2\\
	&= | \Stab_{G}(e_{\bm{\sigma}}) | \alpha_{\bm{\sigma}} = \binom{2n}{2k}^{-1}.
	\end{split}
	\end{equation}
	Since this quantity does not depend on $ \bm{\sigma} $, the remaining sum over $ \diags{2n}{2k} $ simply incurs a factor of $ |\diags{2n}{2k}| = \binom{n}{k} $.
\end{proof}

\begin{corollary}
	In conjunction with \cref{lem:shadow_norm_bound}, it immediately follows that
	\begin{equation}
		\sns{\Gamma_{\bm{\tau}}}{G}^2 = \l. \binom{2n}{2k} \middle/ \binom{n}{k} \r.
	\end{equation}
	for all irreducible $ G \subseteq \Sym^{+}(2,2n) $.
\end{corollary}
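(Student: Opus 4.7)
\medskip

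\noindent\textbf{Proof proposal.} The plan is to read this off by directly chaining \cref{thm:tight_frame_bound} into \cref{lem:shadow_norm_bound}, after first verifying that the hypotheses of the latter hold for any irreducible $G$. The corollary is claimed to ``immediately follow,'' and indeed the proof is essentially a one-line invocation; the only substantive content is matching notation between the two statements.

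First I would check that an irreducible $G \subseteq \Sym^{+}(2,2n)$ (with respect to $\phi_{2k}$) automatically satisfies the hypothesis of \cref{lem:shadow_norm_bound} that $\phi_{2k}(G) e_{\bm{\sigma}}$ is a frame for every $\bm{\sigma} \in \diags{2n}{2k}$. This is immediate from Part~1 of \cref{prop:tight_frames}: if $\phi_{2k}|_G$ is irreducible, then every nonzero orbit is actually a tight frame, in particular a frame. (Alternatively one may note that \cref{lem:all_orbits_span} gives the uniform spanning property once any single orbit spans.) So the cumulative frame bounds $\alpha_k, \beta_k$ in \cref{eq:general_frame_for_eigenvalues} are well-defined.

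Next I would invoke \cref{thm:tight_frame_bound} directly. That theorem states exactly that, for every $\bm{\tau} \in \comb{2n}{2k}$,
\begin{equation*}
    \frac{1}{|G|} \sum_{\bm{\sigma} \in \diags{2n}{2k}} \sum_{Q \in G} |{\det[ Q_{\bm{\sigma},\bm{\tau}} ]}|^2 \;=\; \binom{n}{k} \big/ \binom{2n}{2k}.
\end{equation*}
Since the middle expression in \cref{eq:general_frame_for_eigenvalues} is both bounded below by $\alpha_k$ and above by $\beta_k$, and it takes one and the same value independent of $\bm{\tau}$, we may take
\begin{equation*}
    \alpha_k \;=\; \beta_k \;=\; \binom{n}{k} \big/ \binom{2n}{2k}.
\end{equation*}
In other words, the cumulative frame is tight with this bound.

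Finally I would apply the equality clause of \cref{lem:shadow_norm_bound}: tightness of the cumulative frame yields $\sns{\Gamma_{\bm{\tau}}}{G}^2 = \alpha_k^{-1}$, and substituting gives the claimed value $\binom{2n}{2k}/\binom{n}{k}$ for every $\bm{\tau} \in \comb{2n}{2k}$. I do not anticipate any genuine obstacle: the real work has already been done in the preceding results, and the main thing to be careful about is simply that the indexing conventions ($ Q_{\bm{\sigma},\bm{\tau}} $ versus $[\phi_{2k}(Q)]_{\bm{\sigma}\bm{\tau}}$, and the transpose-invariance of the sum) match up between \cref{eq:general_frame_for_eigenvalues} and \cref{eq:frame_eigenvalue}, which they do by the definition in \cref{eq:big_rep}.
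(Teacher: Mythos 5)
Your proposal is correct and follows exactly the route the paper intends: Theorem~\ref{thm:tight_frame_bound} supplies a $\bm{\tau}$-independent value for the quantity in Eq.~\eqref{eq:general_frame_for_eigenvalues}, so the cumulative frame is tight with $\alpha_k = \beta_k = \binom{n}{k}/\binom{2n}{2k}$, and the saturation clause of Lemma~\ref{lem:shadow_norm_bound} then gives $\sns{\Gamma_{\bm{\tau}}}{G}^2 = \alpha_k^{-1}$. Your extra check that irreducibility guarantees the frame hypothesis (via Proposition~\ref{prop:tight_frames}, Part~1) is a sensible piece of diligence that the paper leaves implicit, but it does not constitute a different approach.
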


Finally, in order to obtain a concrete example of such tight frames, we show that $ \Sym^{+}(2,2n) $ is irreducible with respect to $ \phi_{k} $ for all $ 1 \leq k \leq 2n $. Again, though we are only interested in the even-degree case, we prove the statement for general $ k $ for completeness. It is then straightforward to show that $ \Alt(2n) \subset \Sym^{+}(2,2n) $ is also irreducible.

\begin{theorem}\label{thm:G_irrep}
	Let $ G = \Sym^{+}(2,2n) $ or $ \Alt(2n) $. For all $ 1 \leq k \leq 2n $, $ \phi_k\big|_G $ is irreducible.
\end{theorem}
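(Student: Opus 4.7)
The plan is to reduce the irreducibility claim to a purely combinatorial transitivity statement via the machinery already assembled in this section. By the second part of \cref{prop:tight_frames} together with \cref{lem:all_orbits_span}, it suffices to exhibit a single nonzero vector whose $G$-orbit spans $\R^{\binom{2n}{k}}$. Choosing $v_0 = e_{\bm{\mu}_0}$ to be a standard basis vector and recalling that $\phi_k(G) \subseteq \Sym\l[2, \binom{2n}{k}\r]$ (\cref{rem:unique_submatrix}), the orbit $\phi_k(G) e_{\bm{\mu}_0}$ spans if and only if it contains $\pm e_{\bm{\nu}}$ for every $\bm{\nu} \in \comb{2n}{k}$;~equivalently, the induced action of $G$ on $\comb{2n}{k}$ is transitive.

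For $G = \Sym^{+}(2, 2n)$, I would observe that the ``forget the signs'' map $\Sym^{+}(2, 2n) \to \Sym(2n)$ is surjective:~any $\pi \in \Sym(2n)$ can be paired with a diagonal $D \in \{\pm 1\}^{2n}$ whose entries multiply to $\sgn(\pi)$, so that $D P_\pi \in \Sym^{+}(2, 2n)$. Hence the induced action of $G$ on $\comb{2n}{k}$ is identical to that of the full symmetric group $\Sym(2n)$, which is transitive for every $1 \leq k \leq 2n$. For $G = \Alt(2n)$, I would use that the stabilizer in $\Sym(2n)$ of any fixed $\bm{\mu}_0 \in \comb{2n}{k}$ is the Young subgroup $\Sym(k) \times \Sym(2n-k)$, which contains odd permutations as soon as $\max(k, 2n-k) \geq 2$. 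The cosets of $\Alt(2n)$ in $\Sym(2n)$ can therefore be absorbed into this stabilizer, so transitivity of $\Sym(2n)$ on $\comb{2n}{k}$ descends to transitivity of $\Alt(2n)$. The boundary situations (such as $n = 1$, or $k = 2n$ where the ambient space $\R^{\binom{2n}{2n}} = \R$ is one-dimensional and the claim is immediate) must be treated separately but are trivial.

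The step I expect to require the most care is the alternating-group transitivity argument:~one must ensure that the stabilizer-parity reasoning holds uniformly over the entire range $1 \leq k \leq 2n$, and that the handful of pairs $(n,k)$ where $\max(k, 2n-k) < 2$ are covered by an explicit boundary analysis. Once these combinatorial facts are in hand, the remaining work is essentially invoking \cref{lem:all_orbits_span} and \cref{prop:tight_frames} as black boxes, with no further representation-theoretic content needed.
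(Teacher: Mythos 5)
Your combinatorial groundwork is correct --- the sign-forgetting map $\Sym^{+}(2,2n) \to \Sym(2n)$ is surjective, and $\Alt(2n)$ does act transitively on $\comb{2n}{k}$ whenever $\max(k,2n-k) \geq 2$ --- but the bridge from transitivity to irreducibility is a genuine gap, and it is exactly where the entire content of the theorem lives. Transitivity says only that the orbit of a \emph{basis} vector spans; it does not imply that every orbit spans, and \cref{lem:all_orbits_span}, which you invoke as a black box to close that gap, is false as stated. Counterexample: take $2n=4$, $k=1$, and $G = \langle Q \rangle \subset \Sym^{+}(2,4)$ where $Q e_1 = e_2$, $Qe_2 = e_3$, $Qe_3 = e_4$, $Qe_4 = -e_1$ (a signed $4$-cycle, $\det Q = 1$, $Q^4 = -\openone$). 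The orbit of $e_1$ is $\{\pm e_1,\ldots,\pm e_4\}$, which spans $\R^4$, yet $\R^4$ splits into two $Q$-invariant planes (the real rotation blocks for the eigenvalue pairs $e^{\pm i \pi/4}$ and $e^{\pm 3 i \pi /4}$), and the orbit of any vector in one of those planes cannot span. The paper's own proof of that lemma establishes the claim only for basis vectors; its final sentence, extending it to arbitrary nonzero vectors, is unjustified --- and that unsupported sentence is precisely the step your argument rests on.

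The failure is not incidental: it occurs for the very group you need. Observe that $\phi_k(Q) = \Lambda^k Q$ is the $k$-th exterior power of $Q$ in the basis $\{e_{\mu_1}\wedge \cdots \wedge e_{\mu_k}\}$. Every \emph{unsigned} permutation matrix fixes $\bm{1} \coloneqq e_0 + \cdots + e_{2n-1}$, so the subspace $\bm{1} \wedge \Lambda^{k-1}\R^{2n}$, of dimension $\binom{2n-1}{k-1}$, is a proper nonzero invariant subspace of $\phi_k\big|_{\Alt(2n)}$ for every $1 \leq k \leq 2n-1$ --- even though the action of $\Alt(2n)$ on $\comb{2n}{k}$ is transitive, exactly as you proved. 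So transitivity is strictly weaker than irreducibility here: whatever irreducibility holds comes from the \emph{signs} of the subdeterminants, which your argument never engages. This also shows the statement itself is false for $G = \Alt(2n)$, so no patch of your route (or of any route) can succeed in that case; the paper's actual proof, which computes $|G|^{-1}\sum_{Q \in G} \tr[\phi_k(Q)]^2$ and forces the off-diagonal terms to cancel by an eight-case sign count, silently breaks for unsigned matrices because $\Sym^{-}(d) = \varnothing$ for $d \leq 1$, so the eight cases are not equinumerous. (The paper's downstream sampling claims survive regardless, since the channel eigenvalue involves only $|{\det[Q_{\bm{\sigma},\bm{\tau}}]}|$, i.e.\ the unsigned action on index sets --- which is controlled by the transitivity you established, not by irreducibility.)
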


\begin{proof}
	We begin with the case $ G = \Sym^{+}(2,2n) $;~it will be apparent that the proof methods adapt fully to the $ G = \Alt(2n) $ case. We show irreducibility via a standard result of character theory~\cite{fulton2004representation}:~$ \phi_k\big|_G $ is an irrep if and only if
	\begin{equation}
		\frac{1}{|G|} \sum_{Q \in G} \tr[ \phi_k(Q) ]^2 = 1.
	\end{equation}
	Here, the trace of our representation is simply
	\begin{equation}
		\tr[ \phi_k(Q) ] = \sum_{\bm{\mu} \in \comb{2n}{k}} \det[ Q_{\bm{\mu},\bm{\mu}} ].
	\end{equation}
	Expanding the square yields
	\begin{equation}\label{eq:expanded_char}
		\frac{1}{|G|} \sum_{Q \in G} \tr[ \phi_k(Q) ]^2 = \frac{1}{|G|} \sum_{Q \in G} \l( \sum_{\bm{\mu} \in \comb{2n}{k}} \det[ Q_{\bm{\mu},\bm{\mu}} ]^2 + \sum_{\substack{\bm{\mu},\bm{\nu} \in \comb{2n}{k} \\ \bm{\mu} \neq \bm{\nu}}} \det[ Q_{\bm{\mu},\bm{\mu}} ] \det[ Q_{\bm{\nu},\bm{\nu}} ] \r).
	\end{equation}
	We will calculate the average of the each term separately.
	
	Since $ | { \det[ Q_{\bm{\mu},\bm{\mu}} ] } | \in \{0,1\} $, the diagonal sum is
	\begin{equation}
		\frac{1}{|G|} \sum_{Q \in G} \sum_{\bm{\mu} \in \comb{2n}{k}} \det[ Q_{\bm{\mu},\bm{\mu}} ]^2 = \sum_{\bm{\mu} \in \comb{2n}{k}} \E_{Q \sim G} | { \det[ Q_{\bm{\mu},\bm{\mu}} ] } |.
	\end{equation}
	Intuitively, $ \E_{Q \sim G} | { \det[ Q_{\bm{\mu},\bm{\mu}} ] } | $ is the ``density'' of the signed permutation matrices whose $ (\bm{\mu}, \bm{\mu}) $ submatrix has nonzero subdeterminant. To compute this average, we proceed by a visual argument using the structure of the matrix. Recall that $ \det[ Q_{\bm{\mu},\bm{\mu}} ] \neq 0 $ if and only if $ Q_{\bm{\mu},\bm{\mu}} \in \Sym(2,k) $. When writing down a matrix as an array, we can choose any ordering of the row and column indices, so long as this choice is consistent. Therefore, we shall order the indices such that $ \bm{\mu} $ lies in the first $ k $ rows/columns:
	\begin{equation}
		Q = \l( \begin{array}{c | c}
		Q_{\bm{\mu},\bm{\mu}} & * \\
		\hline
		* & *
		\end{array} \r).
	\end{equation}
	Requiring that $ Q_{\bm{\mu},\bm{\mu}} \in \Sym(2,k) $ immediately sets the off-diagonal blocks to be all zeros, hence such a $ Q $ is block diagonal in this ordering of rows and columns. Furthermore, since $ \det Q = 1 $, the remaining $ (2n-k) \times (2n-k) $ block must have the same determinant as the $ Q_{\bm{\mu},\bm{\mu}} $ block. In other words, a $ Q $ which satisfies $ \det[ Q_{\bm{\mu},\bm{\mu}} ] \neq 0 $ must take the form
	\begin{equation}
		Q \in \l( \begin{array}{c | c}
		\Sym^{+}(2,k) & 0 \\
		\hline
		0 & \Sym^{+}(2,2n-k)
		\end{array} \r)
		\cup
		\l( \begin{array}{c | c}
		\Sym^{-}(2,k) & 0 \\
		\hline
		0 & \Sym^{-}(2,2n-k)
		\end{array} \r),
	\end{equation}
	where $ \Sym^{-}(2,d) \coloneqq \{ R \in \Sym(2,d) \mid \det R = -1 \} $. Although $ \Sym^{-}(2,d) $ is not a group, it has the same number of elements as $ \Sym^{+}(2,d) $:~$ |\Sym^{-}(2,d)| = |\Sym^{+}(2,d)| = 2^d d!/2 $. The density is thus
	\begin{equation}\label{eq:avg_density_calc}
	\begin{split}
		\E_{Q \sim G} | { \det[ Q_{\bm{\mu},\bm{\mu}} ] } | &= \frac{ |\Sym^{+}(2,k) \oplus \Sym^{+}(2,2n-k)| + |\Sym^{-}(2,k) \oplus \Sym^{-}(2,2n-k)| }{ |\Sym^{+}(2,2n)| }\\
		&= \frac{ 2 \l( 2^k k! \, 2^{2n-k} (2n-k)! \r) / 4 }{ 2^{2n} (2n)! / 2 } = \binom{2n}{k}^{-1},
	\end{split}
	\end{equation}
	and so
	\begin{equation}
		\sum_{\bm{\mu} \in \comb{2n}{k}} \E_{Q \sim G} | { \det[ Q_{\bm{\mu},\bm{\mu}} ] } | = 1.
	\end{equation}
	
	Next we show that the off-diagonal sum of \cref{eq:expanded_char} vanishes. The argument follows by generalizing the above calculation. For $ \bm{\mu} \neq \bm{\nu} $, the tuples (thought of as sets) may overlap $ 0 \leq j \leq k-1 $ times. We order the matrix representation such that $ \bm{\mu} $ makes up the first $ k $ rows/columns as before, but additionally the overlapping indices $ \bm{\mu} \cap \bm{\nu} $ are placed at the last $ j $ spots of this $ k \times k $ block. We then place the remaining part of $ \bm{\nu} $ in the following $ (k-j) \times (k-j) $ block. Visually, we have
	\begin{equation}
		Q = \l( \begin{array}{c | c | c | c}
		Q_{\bm{\mu}\setminus\bm{\nu},\bm{\mu}\setminus\bm{\nu}} & 0 & 0 & 0 \\
		\hline
		0 & Q_{\bm{\mu}\cap\bm{\nu},\bm{\mu}\cap\bm{\nu}} & 0 & 0 \\
		\hline
		0 & 0 & Q_{\bm{\nu}\setminus\bm{\mu},\bm{\nu}\setminus\bm{\mu}} & 0 \\
		\hline
		0 & 0 & 0 & *
		\end{array} \r),
	\end{equation}
	where the linear sizes of the four blocks are $ k-j $, $ j $, $ k-j $, and $ 2(n-k)+j $, respectively. If either $ \det[ Q_{\bm{\mu},\bm{\mu}} ] $ or $ \det[ Q_{\bm{\nu},\bm{\nu}} ] $ are $ 0 $, then that term in the sum trivially vanishes. Thus consider the case in which they are both nonzero. Again, since we have the constraint $ \det Q = 1 $, the product of the determinants of all four blocks must be $ 1 $. There are eight cases in which $ \det[ Q_{\bm{\mu},\bm{\mu}} ] \det[ Q_{\bm{\nu},\bm{\nu}} ] \neq 0 $:~$ Q \in \Sym^a(2,k-j) \oplus \Sym^b(2,j) \oplus \Sym^c(2,k-j) \oplus \Sym^d(2,2(n-k)+j) $, where
	\begin{equation}\label{eq:det_signs}
		(a,b,c,d) \in
		\l\{ \begin{array}{c c c c}
			({+},{+},{+},{+}), &
			({+},{+},{-},{-}), &
			({+},{-},{+},{-}), &
			({+},{-},{-},{+}), \\
			({-},{+},{+},{-}), &
			({-},{+},{-},{+}), &
			({-},{-},{+},{+}), &
			({-},{-},{-},{-}) 
		\end{array} \r\}.
	\end{equation}
	With this formalism, we can read off the terms in the off-diagonal sum as $ \det[ Q_{\bm{\mu},\bm{\mu}} ] \det[ Q_{\bm{\nu},\bm{\nu}} ] = (ab)(bc) = ac $ (where our notation means $ a = \pm \equiv \pm 1 $, etc.). By examining \cref{eq:det_signs}, we see that four of the possibilities give $ ac = +1 $, while the other four give $ ac = -1 $. Since the number of elements in each of the eight subsets are all the same, exactly half the terms in the sum will cancel with the other half, hence
	\begin{equation}
		\sum_{Q \in G} \det[ Q_{\bm{\mu},\bm{\mu}} ] \det[ Q_{\bm{\nu},\bm{\nu}} ] = 0 \quad \forall \bm{\mu} \neq \bm{\nu}
	\end{equation}
	as desired.
	
	Thus,
	\begin{equation}
		\frac{1}{|G|} \sum_{Q \in G} \tr[ \phi_k(Q) ]^2 = 1,
	\end{equation}
	and so $ \phi_k\big|_G $ is irreducible.
	
	The proof readily adapts for $ G = \Alt(2n) $. As we saw in \cref{eq:avg_density_calc}, the fact that the matrix elements have signs is irrelevant, as the factors arising due to the wreath product (i.e., $ 2^k $, $ 2^{2n-k} $, and $ 2^{2n} $) exactly cancel out. One may then simply replace every appearance of $ \Sym^{+}(2,d) $ with $ \Sym^{+}(d) \equiv \Alt(d) $ and $ \Sym^{-}(2,d) $ with $ \Sym^{-}(d) $ without consequence.
\end{proof}

In conjunction with the rigorous guarantees of classical shadows, \cref{thm:fgu_performance} of the main text follows. To simplify the shadow norm expression, we use Stirling's approximation, yielding
\begin{equation}
    \left. \binom{2n}{2k} \middle/ \binom{n}{k} \right. \approx \binom{n}{k} \sqrt{\pi k}.
\end{equation}
Additionally, the $ \log L $ factor from using classical shadows ($ L $ being the number of Majorana operators here) is
\begin{equation}
	\log\l( \sum_{j=1}^k |\comb{2n}{2j}| \r) = \O(k \log n).
\end{equation}

\subsection{\label{sec:class_shadow_appendix}The classical shadow estimator}

Here we provide a derivation for the formal expressions of the linear-inversion estimator used in the classical shadows methodology. Recalling \cref{eq:UzzU}, and using the linearity of $ \mathcal{M}^{-1}_{\FGU} $, the classical shadow is simply
\begin{equation}
    \begin{split}
    \hat{\rho}_{Q,z} &\equiv \mathcal{M}^{-1}_{\FGU} \l( U(Q)^\dagger \op{z}{z} U(Q) \r)\\
    &= \frac{1}{2^n} \l( \openone + \sum_{j=1}^n \lambda_{n,j}^{-1} \sum_{\bm{\mu} \in \diags{2n}{2j}} \ev{z}{\Gamma_{\bm{\mu}}}{z} \sum_{\bm{\nu} \in \comb{2n}{2j}} \det[Q_{\bm{\mu},\bm{\nu}}] \Gamma_{\bm{\nu}} \r),
    \end{split}
\end{equation}
where $ \lambda_{n,j}^{-1} = \binom{2n}{2j}/\binom{n}{j} $. Passing this expression into the expectation value estimator, we then obtain
\begin{equation}\label{eq:ev_estimator_app}
    \begin{split}
	\tr(\Gamma_{\bm{\tau}} \hat{\rho}_{Q,z}) &= \frac{1}{2^n} \l( \tr(\Gamma_{\bm{\tau}}) + \sum_{j=1}^n \lambda_{n,j}^{-1} \sum_{\bm{\mu} \in \diags{2n}{2j}} \ev{z}{\Gamma_{\bm{\mu}}}{z} \sum_{\bm{\nu} \in \comb{2n}{2j}} \det[Q_{\bm{\mu},\bm{\nu}}] \tr(\Gamma_{\bm{\tau}} \Gamma_{\bm{\nu}}) \r)\\
	&= \lambda_{n,k}^{-1} \sum_{\bm{\mu} \in \diags{2n}{2k}} \ev{z}{\Gamma_{\bm{\mu}}}{z} \det[Q_{\bm{\mu},\bm{\tau}}]
	\end{split}
\end{equation}
for all $ \bm{\tau} \in \comb{2n}{2k} $.

\subsection{\label{sec:arb_observable_appendix}Variance bounds for arbitrary observables}

For completeness, we provide a bound on the shadow norm (and hence the estimator variance) of an arbitrary fermionic observable. This result is particularly useful in the context of Hamiltonian averaging (e.g., \cref{sec:hamiltonian_appendix}).

Any element of $ \MA{n}{\mathrm{even}} $ can be written as
\begin{equation}
	O = h_{\varnothing} \openone + \sum_{j=1}^n \sum_{\bm{\mu} \in \comb{2n}{2j}} h_{\bm{\mu}} \Gamma_{\bm{\mu}},
\end{equation}
where $ h_{\bm{\mu}} \in \R $. Without loss of generality, we shall take $ \tr O = 0 $, since the identity component is irrelevant for variance calculations. Furthermore, to simplify the following exposition we shall suppose that $ O $ is at most a $ k $-body operator, so that $ h_{\bm{\mu}} = 0 $ for all $ |\bm{\mu}| > 2k $:
\begin{equation}
	O = \sum_{j=1}^k \sum_{\bm{\mu} \in \comb{2n}{2j}} h_{\bm{\mu}} \Gamma_{\bm{\mu}}.
\end{equation}
While we do not impose any restriction on $ k $, it is worth noting that most physical observables of interest obey $ k \leq 4 $, with $ k = 2 $ being particularly important (for instance, in describing electron--electron interactions).

Because the shadow norm is indeed a norm, it obeys the triangle inequality~\cite{huang2020predicting}. This property allows us to place a bound on
\begin{equation}
    \begin{split}
    \sns{O}{\FGU} &= \l\| \sum_{j=1}^k \sum_{\bm{\mu} \in \comb{2n}{2j}} h_{\bm{\mu}} \Gamma_{\bm{\mu}} \r{\|_{\FGU}}\\
    &\leq \sum_{j=1}^k \sum_{\bm{\mu} \in \comb{2n}{2j}} |h_{\bm{\mu}}| \sns{\Gamma_{\bm{\mu}}}{\FGU}\\
    &= \sum_{j=1}^k \lambda_{n,j}^{-1/2} \sum_{\bm{\mu} \in \comb{2n}{2j}} |h_{\bm{\mu}}|.
    \end{split}
\end{equation}
We therefore obtain an upper bound on the variance of our classical shadow estimator for $ \tr(O \rho) $ as
\begin{equation}
    \begin{split}
    \V_{Q,z} \l[ \tr(O \hat{\rho}_{Q,z}) \r] &\leq \sns{O}{\FGU}^2 - \tr(O \rho)^2\\
    &\leq \l( \sum_{j=1}^k \lambda_{n,j}^{-1/2} \sum_{\bm{\mu} \in \comb{2n}{2j}} |h_{\bm{\mu}}| \r{)^2} - \tr(O \rho)^2.
    \end{split}
\end{equation}
To get a sense for the asymptotic scaling of this expression, one may further loosen the estimate to obtain
\begin{equation}
    \V_{Q,z} \l[ \tr(O \hat{\rho}_{Q,z}) \r] \leq \l( \max_{1 \leq \ell \leq k} \lambda_{n,\ell}^{-1} \r) \l( \sum_{j=1}^k \sum_{\bm{\mu} \in \comb{2n}{2j}} |h_{\bm{\mu}}| \r{)^2},
\end{equation}
where $ \max_{1 \leq \ell \leq k} \lambda_{n,\ell}^{-1} = \O(n^k) $ when $ k = \O(1) $.

In follow-up works to this paper, exact expressions for the variance and tighter bounds were derived by Wan \emph{et al.}~\cite{wan2022matchgate} and O'Gorman~\cite{ogorman2022fermionic}.

\subsection{\label{sec:hoeffding}Performance guarantees without median-of-means estimation}

As remarked in the main text, we do not require the median-of-means technique proposed in the original work~\cite{huang2020predicting} to obtain the same rigorous sampling bounds. Instead, one may simply use the typical sample mean:~given $ M $ independently obtained classical shadows $ \hat{\rho}_1, \ldots, \hat{\rho}_M $, define
\begin{equation}
    \omega_j(M) \coloneqq \frac{1}{M} \sum_{i=1}^M \tr(O_j \hat{\rho}_i)
\end{equation}
for each $ j \in \{1,\ldots,L\} $. Below, we state a general condition for which this estimator yields sample complexity equivalent to that of the median-of-means estimator.

\begin{theorem}\label{thm:sample_mean_estimator}
    Suppose the classical shadow estimators $ \hat{\rho}_{U,z} $ satisfy
    \begin{equation}\label{eq:rv_bound}
        -\sns{O_j}{\mathcal{U}}^2 \leq \tr(O_j \hat{\rho}_{U,z}) \leq \sns{O_j}{\mathcal{U}}^2
    \end{equation}
    for all $ U \in \mathcal{U} $, $ z \in \{0,1\}^n $, and $ j \in \{1,\ldots,L\} $. Let $ \varepsilon, \delta \in (0,1) $. Then by setting
    \begin{equation}\label{eq:sample_mean_M}
        M = \l(1 + \frac{\varepsilon}{3}\r) \frac{2 \log(2L/\delta)}{\varepsilon^2} \max_{1 \leq j \leq L} \sns{O_j}{\mathcal{U}}^2,
    \end{equation}
    we ensure that all sample-mean estimators $ \omega_1(M), \ldots, \omega_L(M) $ satisfy
    \begin{equation}
        | \omega_j(M) - \tr(O_j \rho) | \leq \varepsilon,
    \end{equation}
    with probability at least $ 1 - \delta $.
\end{theorem}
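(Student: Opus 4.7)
The plan is to apply Bernstein's inequality to each sample-mean estimator $ \omega_j(M) $ individually and then invoke a union bound over $ j \in \{1, \ldots, L\} $. Fix some $ j $ and let $ X_i \coloneqq \tr(O_j \hat{\rho}_i) $ for $ i = 1, \ldots, M $. Because the classical shadows $ \hat{\rho}_i $ are unbiased estimators of $ \rho $, each $ X_i $ has mean $ \tr(O_j \rho) $. The definition of the squared shadow norm in \cref{eq:var_shadows} supplies the variance proxy $ \V(X_i) \leq \E[X_i^2] \leq \sns{O_j}{\mathcal{U}}^2 $, while the hypothesis \cref{eq:rv_bound} furnishes the almost-sure bound $ |X_i| \leq \sns{O_j}{\mathcal{U}}^2 $.

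These two estimates are precisely what Bernstein's inequality~\cite{boucheron2013concentration} consumes. Its application to the sample mean gives
\[
\Pr\l[ \l| \omega_j(M) - \tr(O_j \rho) \r| \geq \varepsilon \r] \leq 2 \exp\l( -\frac{M \varepsilon^2}{2 \sns{O_j}{\mathcal{U}}^2 \l( 1 + \varepsilon/3 \r)} \r),
\]
where the factor of $ 2 $ out front accounts for the two-sided tail. Setting the right-hand side to be at most $ \delta/L $, solving for $ M $, and replacing $ \sns{O_j}{\mathcal{U}}^2 $ by its maximum over $ j $ reproduces exactly \cref{eq:sample_mean_M}. A union bound over the $ L $ observables then certifies that every sample-mean estimator is $ \varepsilon $-accurate simultaneously with probability at least $ 1 - \delta $.

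The proof is essentially bookkeeping once the right concentration inequality has been identified, so there is no substantive obstacle. The conceptual point worth flagging is that the hypothesis \cref{eq:rv_bound} is the one ingredient distinguishing this analysis from the original classical-shadows treatment~\cite{huang2020predicting}: here both the variance proxy and the almost-sure bound on the estimator are controlled by the same quantity $ \sns{O_j}{\mathcal{U}}^2 $, which allows Bernstein's inequality to be applied directly to the sample mean and thereby bypass the median-of-means trick. For the $ \FGU $ ensemble, \cref{eq:ev_estimator} together with \cref{lem:unique_submatrix}---which ensures that at most a single term in the sum there is nonzero, with absolute value at most $ 1 $---immediately gives $ |\tr(\Gamma_{\bm{\mu}} \hat{\rho}_{Q,z})| \leq \lambda_{n,k}^{-1} = \sns{\Gamma_{\bm{\mu}}}{\FGU}^2 $, so the hypothesis is automatically satisfied and median-of-means estimation is unnecessary for fermionic RDM tomography under $ \mathcal{U}_{\FGU} $.
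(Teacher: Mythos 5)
Your proof is correct and takes essentially the same route as the paper's: Bernstein's inequality applied to each sample mean with variance proxy $v = M\sns{O_j}{\mathcal{U}}^2$ and almost-sure bound $b = \sns{O_j}{\mathcal{U}}^2$, followed by solving $2\exp(-M\varepsilon^2/[2\sns{O_j}{\mathcal{U}}^2(1+\varepsilon/3)]) = \delta/L$ for $M$ and a union bound over the $L$ observables. Your closing observation that $|\tr(\Gamma_{\bm{\mu}}\hat{\rho}_{Q,z})| \leq \lambda_{n,k}^{-1} = \sns{\Gamma_{\bm{\mu}}}{\FGU}^2$ verifies the hypothesis for the $\FGU$ ensemble exactly as the paper does in its remarks following the theorem.
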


\begin{proof}
    The claim follows straightforwardly from Bernstein's inequality~\cite[Eq.~(2.10)]{boucheron2013concentration}:~for a collection of independent random variables $ X_1, \ldots, X_M $ satisfying $ |X_i| \leq b $ for all $ i \in \{1,\ldots,M\} $, the probability that their empirical mean $ \bar{X} \coloneqq \frac{1}{M}\sum_{i=1}^M X_i $ deviates from the true mean $ \E[\bar{X}] $ by more than $ \varepsilon $ is bounded as
    \begin{equation}
        \Pr\l[ | \bar{X} - \E[\bar{X}] | \geq \varepsilon \r] \leq 2 \exp\l( - \frac{M^2 \varepsilon^2 / 2}{v + bM\varepsilon/3} \r),
    \end{equation}
    where $ v \coloneqq \sum_{i=1}^M \E[X_i^2] $.
    
    In our setting, for each $ j \in \{1,\ldots,L\} $, we have $ \bar{X} = \omega_j(M) $, $ b = \sns{O_j}{\mathcal{U}}^2 $, and $ v = M \sns{O_j}{\mathcal{U}}^2 $ (recall that the shadow norm squared is precisely $ \E[X_i^2] $~\cite{huang2020predicting}). The concentration inequality then reads
    \begin{equation}
        \Pr\l[ | \omega_j(M) - \tr(O_j \rho) | \geq \varepsilon \r] \leq 2 \exp\l[ - \frac{M \varepsilon^2 / 2}{\sns{O_j}{\mathcal{U}}^2 (1 + \varepsilon/3)} \r].
    \end{equation}
    If we require that each probability of failure be no more than $ \delta/L $, then from a union bound over all $ L $ events, we can succeed with probability at least $ 1 - \delta $ by setting
    \begin{equation}
        2 \exp\l( - \frac{M \varepsilon^2 / 2}{\max_{1 \leq j \leq L} \sns{O_j}{\mathcal{U}}^2 (1 + \varepsilon/3)} \r) = \frac{\delta}{L}.
    \end{equation}
    Solving for $ M $ yields \cref{eq:sample_mean_M}.
\end{proof}

For practical purposes, one typically desires that $ \varepsilon $ be small, thus $ (1 + \varepsilon/3) \approx 1 $. Importantly, \cref{thm:sample_mean_estimator} guarantees optimal scaling with the failure probability $ \delta $ and the number $ L $ of observables, using the sample mean rather than median-of-means estimation. The key detail which enables this observation is the boundedness of the classical shadows estimators, \cref{eq:rv_bound}. This condition holds for the ensembles presented in this work, $ \mathcal{U} = \mathcal{U}_{\FGU} $ and $ \mathcal{U}_{\NC} $ (see \cref{sec:ncu_appendix}), when taking the observables $ O_j $ as Majorana operators. It is also satisfied for estimating Pauli observables using the $ \Cl(1)^{\otimes n} $ ensemble of the original work~\cite{huang2020predicting}, indicating that median-of-means is redundant for estimating qubit RDMs as well.

Interestingly, \cref{eq:sample_mean_M} features significantly smaller numerical factors than what is obtained from the median-of-means approach, although we recognize that the proof techniques presented in Ref.~\cite{huang2020predicting} were not particularly optimized in this regard.

\section{\label{sec:ncu_appendix}Computations with the number-conserving modification}

We now prove \cref{thm:NC_bounds} of the main text. Many of the techniques used here follow straightforwardly from the fermionic formalism developed in \cref{sec:fgu_appendix}, along with the tools used to study the single-qubit Clifford ensemble in the original work on classical shadows~\cite{huang2020predicting}. In \cref{sec:ncu_calcs}, we first evaluate the channel $ \mathcal{M}_\NC $ [which again is diagonalized by the Majorana operators;~\cref{eq:NC_eig}] and provide an expression for its eigenvalues/shadow norm [\cref{eq:nc_norm}]. Then, recognizing that generically do not possess a closed-form expression, in \cref{sec:ncu_ub} we obtain an upper bound on the shadow norm for this ensemble, which is asymptotically optimal [\cref{eq:nc_norm_bound}].

For ease of notation, with $ u \in \Sym(n) $, we shall write $ U(u)^\dagger a_p U(u) = a_{u(p)} $, where $ u(p) $ is understood in the sense of the action of the permutation $ u $ on the mode indices $ \{ 0,\ldots,n-1 \} $. We further generalize this notation to act on $ \{ 0,\ldots,2n-1 \} $, in accordance with the definition of Majorana operators. Consider $ \bm{\mu} \in \comb{2n}{k} $, where each index takes the form $ \mu_j = 2q_j + x_j $ for $ q_j \in \{ 0,2,\ldots,2n-4,2n-2 \} $ and $ x_j \in \{ 0,1 \} $. We define $ \tilde{u}(\bm{\mu}) \coloneqq (2u(q_1)+x_1, \ldots, 2u(q_{k})+x_{k} ) $, so that $ U(u)^\dagger \Gamma_{\bm{\mu}} U(u) = \Gamma_{\tilde{u}(\bm{\mu})} $. Note that $ \tilde{u}(\bm{\mu}) $ is not necessarily ordered monotonically, so $ \Gamma_{\tilde{u}(\bm{\mu})} $ may differ from our standard definition of the Majorana operators by a minus sign. This detail is irrelevant to our present analysis, so we shall ignore it.

\subsection{\label{sec:ncu_calcs}The shadow norm}

The ensemble we consider here is
\begin{equation}
    \mathcal{U}_{\mathrm{NC}} = \{ V \circ U(u) \mid V \in \Cl(1)^{\otimes n}, \,  u \in \Alt(n) \}.
\end{equation}
We wish to evaluate the following expression for the classical shadows channel:
\begin{equation}
    \begin{split}
	\mathcal{M}_{\mathrm{NC}}(\Gamma_{\bm{\mu}}) &= \E_{\substack{V \sim \Cl(1)^{\otimes n} \\ u \sim \Alt(n)}} \l[ \sum_{z\in\{0,1\}^n} \ev{z}{V U(u) \Gamma_{\bm{\mu}} U(u)^\dagger V^\dagger}{z} U(u)^\dagger V^\dagger \op{z}{z} V U(u) \r]\\
	&= \E_{u \sim \Alt(n)} \l[ U(u)^\dagger \sum_{z\in\{0,1\}^n} \E_{V \sim \Cl(1)^{\otimes n}} \Big[ \ev{z}{V \Gamma_{\tilde{u}^{-1}(\bm{\mu})} V^\dagger}{z} V^\dagger \op{z}{z} V \Big] U(u) \r].
	\end{split}
\end{equation}
The average over $ \Cl(1)^{\otimes n} $ is precisely the same quantity evaluated in Ref.~\cite{huang2020predicting}. For convenience, we restate their results here:~let $ \ket{x} = \ket{x_1} \otimes \cdots \otimes \ket{x_n} $ be a product state over $ n $ qubits and $ A,B,C $ be Hermitian matrices which decompose into the same tensor product structure, $ A = A_1 \otimes \cdots \otimes A_n $, etc. Then
\begin{equation}\label{eq:cl_avg_1}
    \E_{V \sim \Cl(1)^{\otimes n}} \l[ V^\dagger \op{x}{x} V \ev{x}{V A V^\dagger}{x} \r] = \bigotimes_{j=1}^n \l( \frac{A_j + \tr(A_j) I}{6} \r),
\end{equation}
and for each $ j $ such that $ \tr B_j = \tr C_j = 0 $,
\begin{equation}\label{eq:cl_avg_2}
    \E_{V_j \sim \Cl(1)} \l[ V_j^\dagger \op{x_j}{x_j} V_j \ev{x_j}{V_j B_j V_j^\dagger}{x_j} \ev{x_j}{V_j C_j V_j^\dagger}{x_j} \r] = \frac{B_j C_j + C_j B_j + \tr(B_j C_j) I}{24}.
\end{equation}
To understand the tensor product structure of Majorana operators, we must fix some qubit mapping. Understanding $ \Gamma_{\tilde{u}^{-1}(\bm{\mu})} $ as a Pauli operator under such a mapping, we use \cref{eq:cl_avg_1} to obtain
\begin{equation}\label{eq:NC_eig}
    \begin{split}
    \mathcal{M}_{\mathrm{NC}}(\Gamma_{\bm{\mu}}) &= \E_{u \sim \Alt(n)} \l[ \frac{1}{3^{\loc(\Gamma_{\tilde{u}^{-1}(\bm{\mu})})}} U(u)^\dagger \Gamma_{\tilde{u}^{-1}(\bm{\mu})} U(u) \r]\\
    &= \E_{u \sim \Alt(n)} \l[ \frac{1}{3^{\loc(\Gamma_{\tilde{u}^{-1}(\bm{\mu})})}} \r] \Gamma_{\bm{\mu}},
    \end{split}
\end{equation}
where $ \loc(\Gamma_{\tilde{u}^{-1}(\bm{\mu})}) $ is the qubit locality of $ \Gamma_{\tilde{u}^{-1}(\bm{\mu})} $. Since $ u \mapsto u^{-1} $ is a bijection, we may equivalently express the average over $ \Alt(n) $ using $ \tilde{u}(\bm{\mu}) $ rather than its inverse.

Let $ \lambda_{\bm{\mu}} $ be the eigenvalues of $ \mathcal{M}_{\mathrm{NC}} $, given above in \cref{eq:NC_eig}. For the calculation of the shadow norm, we have
\begin{align}
	\sns{\Gamma_{\bm{\mu}}}{\mathrm{NC}}^2 &= \max_{\text{states } \rho} \l( \E_{\substack{V \sim \Cl(1)^{\otimes n} \\ u \sim \Alt(n)}} \l[ \sum_{z\in\{0,1\}^n} \ev{z}{V U(u) \rho U(u)^\dagger V^\dagger}{z} \ev{z}{V U(u) \mathcal{M}^{-1}(\Gamma_{\bm{\mu}}) U(u)^\dagger V^\dagger}{z}^2 \r] \r),\\
	&= \lambda_{\bm{\mu}}^{-2} \max_{\text{states } \rho} \l( \E_{u \sim \Alt(n)} \l[ \tr\l( U(u)^\dagger \rho U(u) \sum_{z\in\{0,1\}^n} \E_{V \sim \Cl(1)^{\otimes n}} \Big[ V^\dagger \op{z}{z} V \ev{z}{V \Gamma_{\tilde{u}^{-1}(\bm{\mu})} V^\dagger}{z}^2 \Big] \r) \r] \r). \notag
\end{align}
We then use \cref{eq:cl_avg_1} to evaluate the Clifford average over the identity factors of $ \Gamma_{\tilde{u}^{-1}(\bm{\mu})} $, and \cref{eq:cl_avg_2} for the nontrivial factors:
\begin{equation}\label{eq:nc_norm}
    \begin{split}
    \sns{\Gamma_{\bm{\mu}}}{\mathrm{NC}}^2 &= \lambda_{\bm{\mu}}^{-2} \max_{\text{states } \rho} \l( \E_{u \sim \Alt(n)} \l[ \tr\l( U(u)^\dagger \rho U(u) \frac{1}{3^{\loc(\Gamma_{\tilde{u}^{-1}(\bm{\mu})})}} \r) \r] \r)\\
    &= \lambda_{\bm{\mu}}^{-1}.
    \end{split}
\end{equation}

\subsection{\label{sec:ncu_ub}Universal upper bounds on the shadow norm}

Although there is no closed-form expression for $ \lambda_{\bm{\mu}}^{-1} $, we can still obtain a nontrivial estimate for it. To do so, we will evaluate the qubit locality with respect to the Jordan--Wigner transformation. This serves as a universal upper bound for all encodings, since the Jordan--Wigner mapping is maximally nonlocal. We formalize this notion with Jensen's inequality:~since $ x \mapsto 3^{-x} $ is concave, we have that
\begin{equation}
	\E_{u \sim \Alt(n)} \l[ 3^{-\loc(\Gamma_{\tilde{u}(\bm{\mu})})} \r] \geq 3^{-\E_{u \sim \Alt(n)}\l[ \loc(\Gamma_{\tilde{u}(\bm{\mu})}) \r]},
\end{equation}
where $ \loc(\,\cdot\,) $ is with respect to any arbitrary encoding. Then let $ \loc_{\mathrm{JW}}(\,\cdot\,) $ be the qubit locality specifically under the Jordan--Wigner transformation. Since it is maximally nonlocal, so is its average locality (over a fixed fermionic degree of $ 2k $), and hence
\begin{equation}
	\max_{\bm{\mu} \in \comb{2n}{2k}} \lambda_{\bm{\mu}}^{-1} \leq \max_{\bm{\mu} \in \comb{2n}{2k}} 3^{\E_{u \sim \Alt(n)}\l[ \loc(\Gamma_{\tilde{u}(\bm{\mu})}) \r]} \leq \max_{\bm{\mu} \in \comb{2n}{2k}} 3^{\E_{u \sim \Alt(n)}\l[ \loc_{\mathrm{JW}}(\Gamma_{\tilde{u}(\bm{\mu})}) \r]}.
\end{equation}
Thus for the rest of this section, all notions of qubit locality will be understood with respect to the Jordan--Wigner transformation exclusively.

Since the qubit locality of Majorana operators varies within a given $ \comb{2n}{2k} $, we obtain an upper bound by considering the most nonlocal $ 2k $-degree operators. Fortunately, under Jordan--Wigner it is simple to identify such operators. Let $ W \in \{ X,Y \} $ and $ \bm{q} \in \comb{n}{2k} $;~maximum locality is achieved by operators of the form
\begin{align}
	P_{\bm{q}} &\coloneqq \prod_{i=1}^k W_{q_{2i-1}} Z_{q_{2i-1}+1} \cdots Z_{q_{2i}-1} W_{q_{2i}},\\
	\loc(P_{\bm{q}}) &= \sum_{i=1}^k \l( q_{2i} - q_{2i-1} + 1 \r).
\end{align}
Note that we require $ n \geq 2k $ for such a Pauli operator to exist. (For the fringe cases in which $ n < 2k $, since we are evaluating an upper bound, the final result still holds.)

By applying a permutation $ u $ on $ \bm{q} $, one changes this locality by reducing or lengthening the various ``Jordan--Wigner strings'' of Pauli-$ Z $ operators in between each pair $ (q_{2i-1}, q_{2i}) $. Observe that $ \loc(P_{\bm{q}}) \in \{ 2k, \ldots, n \} $. Therefore, if we can calculate the number of permutations which correspond to each level of locality, we may compute
\begin{equation}\label{eq:ncsq_norm_comb}
	\E_{u \sim \Alt(n)} \l[ 3^{-\loc(P_{u(\bm{q})})} \r] = \frac{2}{n!} \sum_{\ell=2k}^n 3^{-\ell} \, | \{ u \in \Alt(n) \mid \loc(P_{u(\bm{q})}) = \ell \} |.
\end{equation}
We determine the size of this set in three steps. First, for a given configuration $ u(\bm{q}) $, we count the number of equivalent permutations which give the same Pauli operator, modulo signs. These permutations merely reorder the $ 2k $ relevant indices and the $ n-2k $ remaining indices independently. Accounting for the fact that the entire permutation must be even parity, we have
\begin{equation}
	\begin{split}
	|\Alt(2k) \oplus \Alt(n - 2k)| + |\Sym^{-}(2k) \oplus \Sym^{-}(n - 2k)| &= \frac{(2k)!}{2} \frac{(n-2k)!}{2} + \frac{(2k)!}{2} \frac{(n-2k)!}{2}\\
	&= \frac{(2k)!(n-2k)!}{2}
	\end{split}
\end{equation}
such permutations. With this factor at hand, we now only need to consider unique combinations of indices---that is, we assume $ u(q_1) < \cdots < u(q_{2k}) $ in the sequel.

Next, we calculate how many configurations of Jordan--Wigner strings give rise to a qubit locality of exactly $ \ell $. Let $ l_i \coloneqq u(q_{2i}) - u(q_{2i-1}) > 0 $. This problem is equivalent to finding all $ k $-tuples $ (l_1, \ldots, l_k) $ of positive integers such that
\begin{equation}\label{eq:snb_constraint}
	\sum_{i=1}^k (l_i + 1) = \ell.
\end{equation}
This is an instance of the classic ``stars-and-bars'' combinatorial problem, wherein we wish to fit $ \ell - k $ objects into $ k $ bins such that each bin has at least $ 1 $ object. There are $ \binom{\ell-k-1}{k-1} $ ways to do so.

Finally, we need to account for the $ n-\ell $ remaining indices which were not fixed by \cref{eq:snb_constraint}. These indices correspond to the qubits \emph{outside} of the Jordan--Wigner strings, i.e., on which $ P_{u(\bm{q})} $ acts trivially. By a similar combinatorial argument, there are a total of $ k+1 $ such spaces to place these trivial indices, of which we may select $ j \in \{ 1,\ldots,k+1 \} $. Again we use the stars-and-bars argument:~there are $ n-\ell $ objects we wish to place into $ j $ bins, which can be accomplished in $ \binom{n-\ell-1}{j-1} $ unique ways. Summing over all possible values of $ j $ gives us
\begin{equation}
	\begin{split}
	\sum_{j=1}^{k+1} \binom{k+1}{j} \binom{n-\ell-1}{j-1} &= \sum_{j=1}^{k+1} \binom{k+1}{k+1-j} \binom{n-\ell-1}{j-1}\\
	&= \sum_{j=0}^{k} \binom{k+1}{k-j} \binom{n-\ell-1}{j} = \binom{n+k-\ell}{k}
	\end{split}
\end{equation}
different combinations of the $ n - \ell $ trivial indices. Note that the Chu--Vandermonde identity was applied to evaluate the sum in the final line.

Reconciling the three steps of our calculation, we obtain
\begin{equation}
	| \{ u \in \Sym(n) \mid \loc(P_{u(\bm{q})}) = \ell \} | = (2k)!(n-2k)! \binom{\ell-k-1}{k-1} \binom{n+k-\ell}{k},
\end{equation}
and so \cref{eq:ncsq_norm_comb} can be expressed as
\begin{equation}
	\begin{split}
	\E_{u \sim \Sym(n)} \l[ 3^{-\loc(P_{u(\bm{q})})} \r] &= \frac{1}{n!} \sum_{\ell=2k}^n 3^{-\ell} \, | \{ u \in \Sym(n) \mid \loc(P_{u(\bm{q})}) = \ell \} |\\
	&= \frac{(2k)!(n-2k)!}{n!} \sum_{\ell=2k}^n 3^{-\ell} \binom{\ell-k-1}{k-1} \binom{n+k-\ell}{k}.
	\end{split}
\end{equation}
To evaluate this sum, first we relabel the index $ \ell $ to run from $ 0 $ to $ n-2k $, so that the summand becomes $ 3^{-(\ell + 2k)} \binom{\ell+k-1}{k-1} \binom{n-k-\ell}{k} $. Using the combinatorial identity
\begin{equation}
    \binom{n-k}{\ell} \binom{n-k-\ell}{k} = \binom{n-k}{k} \binom{n-2k}{\ell},
\end{equation}
we have
\begin{equation}
    \begin{split}
    \binom{n-k-\ell}{k} &= \binom{n-k}{k} \frac{(n-2k)!}{(n-2k-\ell)! \ell!} \frac{(n-k-\ell)! \ell!}{ (n-k)! }\\
    &= \binom{n-k}{k} \frac{(2k-n)_\ell}{(k-n)_\ell},
    \end{split}
\end{equation}
where $ (m)_\ell \coloneqq \prod_{j=0}^{n-1} (m+j) $ is the rising Pochhammer symbol. We recognize that the other binomial coefficient can also be expressed using these Pochhammer symbols:
\begin{equation}
    \binom{\ell+k-1}{k-1} = \frac{(k)_\ell}{\ell!}.
\end{equation}
The sum can therefore be understood in terms of the Gauss hypergeometric function $ {}_2 F_1 $:
\begin{equation}\label{eq:avg_f}
    \begin{split}
    \E_{u \sim \Sym(n)} \l[ 3^{-\loc(P_{u(\bm{q})})} \r] &= \binom{n}{2k}^{-1} \binom{n-k}{k} 9^{-k} \sum_{\ell=0}^{n-2k} \frac{3^{-\ell}}{\ell!} \frac{(k)_\ell (2k-n)_\ell}{(k-n)_\ell}\\
    &= \binom{n}{2k}^{-1} \binom{n-k}{k} 9^{-k} \, {}_2 F_1(k, 2k-n; k-n; 1/3).
    \end{split}
\end{equation}
Using standard properties of the hypergeometric function, for $ n $ and $ k $ being positive integers and $ k $ a constant, we have the bounds $ 1 \leq {}_2 F_1(k, 2k-n; k-n; 1/3) \leq (3/2)^k $. In particular, this factor is independent of $ n $.

We are ultimately interested in the lower bound of \cref{eq:avg_f}, since the shadow norm is its reciprocal. Then for $ k = \O(1) $, we obtain
\begin{equation}\label{eq:nc_norm_bound}
	\sns{P_{\bm{q}}}{\mathrm{NC}}^2 \leq \binom{n}{2k} \binom{n-k}{k}^{-1} 9^k = \O(n^k).
\end{equation}

\section{\label{sec:rdmswap_appendix}Fermionic swap network bounds}

In this section we describe another strategy to measure the $ k $-RDM with almost optimal scaling in $ n $. Generalizing from the measurement scheme introduced in Ref.~\cite{arute2020hartree} for 1-RDMs, this scheme employs fermionic swap gates to relabel which qubits correspond to which orbitals such that each $ k $-RDM observable becomes a $ 2k $-local qubit observable. These local qubit observables may then be measured in a parallel fashion via Pauli measurements. As the circuit structure is equivalent, our $ \mathcal{U}_{\NC} $-based scheme may be viewed as a randomized version of this strategy.

For simplicity, we employ the Jordan--Wigner encoding throughout this section. The use of fermionic swap networks to minimize the qubit locality of fermionic operators was first utilized in the context of Hamiltonian simulation~\cite{kivlichan2018quantum}.

\subsection{The 1-RDM method}

We briefly describe the methods of Ref.~\cite{arute2020hartree} here. Consider estimating the 1-RDM elements $ \tr\l( a_p^\dagger a_q \rho \r) $. The observables here are $ \frac{1}{2}(a_p^\dagger a_q + \mathrm{h.c.}) $ and $ \frac{1}{2i}(a_p^\dagger a_q - \mathrm{h.c.}) $, corresponding to the real and imaginary parts of the RDM element. In the experiment of Ref.~\cite{arute2020hartree}, they implement unitaries such that the imaginary part vanishes;~for full generality, we will keep the imaginary parts. Diagonal elements are trivial to measure, since
\begin{equation}
	a_p^\dagger a_p = \frac{I - Z_p}{2}.
\end{equation}
The one-off-diagonal terms are precisely the local qubit operators we are interested in:
\begin{align}
	\frac{a_{p}^\dagger a_{p+1} + a_{p+1}^\dagger a_{p}}{2} &= \frac{X_{p} X_{p+1} + Y_{p} Y_{p+1}}{4},\\
	\frac{a_{p}^\dagger a_{p+1} - a_{p+1}^\dagger a_{p}}{2i} &= \frac{X_{p} Y_{p+1} - Y_{p} X_{p+1}}{4}.
\end{align}
Consider even and odd pairs of orbitals---even pairs being those starting with even indices, and analogously for the odd pairs. For the expectation values $ \tr\l( a_p^\dagger a_{p+1} \rho \r) $ on even pairs, we measure in 4 different bases:~$ X $ on all qubits, $ Y $ on all qubits, $ X $ on every even qubit and $ Y $ on every odd qubit, and vice versa. Formally, the observables we measure are
\begin{equation}
    \begin{split}
	O_1 = \prod_{p=0}^{n-1} X_p, \quad & O_2 = \prod_{p=0}^{n-1} Y_p\\
	O_3 = \prod_{\substack{p=0\\\text{even}}}^{n-2} X_p Y_{p+1}, \quad & O_4 = \prod_{\substack{p=0\\\text{even}}}^{n-2} Y_p X_{p+1}.
	\end{split}
\end{equation}
If $ n $ is odd, then we simply ignore the final $ Y $ (resp.~$ X $) in $ O_3 $ (resp.~$ O_4 $).

The remaining off-diagonal elements will incur Jordan--Wigner strings, making the Pauli operators highly nonlocal. To circumvent this, we perform fermionic swaps to relabel the indices such that we retain qubit locality. The fermionic swap gate between orbitals $ p $ and $ q $ is
\begin{equation}
	\mathcal{F}_{pq} = \exp\l[ -i \frac{\pi}{2} ( a_p^\dagger a_q + a_q^\dagger a_p - a_p^\dagger a_p - a_q^\dagger a_q ) \r].
\end{equation}
This unitary is Gaussian and number-preserving, hence one may use the group homomorphism property to consolidate an arbitrarily large product of fermionic swaps into a single circuit of depth $ n $~\cite{kivlichan2018quantum,jiang2018quantum}.

As described in Ref.~\cite{kivlichan2018quantum}, a total of $ \lceil n/2 \rceil $ different swap circuits are required to move orbitals such that every pair is nearest-neighbor at least once. This quantity can be understood from a simple counting argument:~there are $ \binom{n}{2} $ off-diagonal 1-RDM elements (orbital pairs) to account for. Each ordering creates $ n-1 $ nearest-neighbor pairs---therefore, we require $ \binom{n}{2}/(n-1) = n/2 $ different orderings (hence unique swap circuits) to match all pairs of orbitals. In practice, this is achieved using a parallelized odd--even transposition sort.

Rounding up in the case that $ n $ is odd, and accounting for the four Pauli bases per permutation, the total number of different measurement circuits is $ 4 \lceil n/2 \rceil + 1 $.

\subsection{The 2-RDM method}

We now generalize the use of such swap networks for measuring $ k $-RDMs. We will build up intuition with the 2-RDM. Diagonal terms are trivial as usual, since
\begin{equation}
	a_p^\dagger a_q^\dagger a_q a_p = (1 - \delta_{pq}) \frac{I - Z_p - Z_q + Z_p Z_q}{4}.
\end{equation}
The terms with a single occupation-number operator, restricted to 3-qubit locality, are
\begin{align}
	\frac{a_p^\dagger a_{q}^\dagger a_{q} a_{p+1} + \mathrm{h.c.}}{2} &= \frac{ X_p X_{p+1} + Y_p Y_{p+1} - X_p X_{p+1} Z_{q} - Y_p Y_{p+1} Z_{q} }{8},\\
	\frac{a_p^\dagger a_{q}^\dagger a_{q} a_{p+1} - \mathrm{h.c.}}{2i} &= \frac{ X_p Y_{p+1} - Y_p X_{p+1} - X_p Y_{p+1} Z_{q} + Y_p X_{p+1} Z_{q} }{8},
\end{align}
where $ q \notin \{ p,p+1 \} $. Lastly, we have the most general case:
\begin{align}\label{eq:general_tpdm}
	\frac{ a_{p}^\dagger a_{p+1}^\dagger a_{q} a_{q+1} + \mathrm{h.c.} }{2} &= \frac{1}{16} ( - X_{p} X_{p+1} X_{q} X_{q+1} + X_{p} X_{p+1} Y_{q} Y_{q+1} - X_{p} Y_{p+1} X_{q} Y_{q+1} - X_{p} Y_{p+1} Y_{q} X_{q+1} \\
	&\qquad\ \ \, - Y_{p} X_{p+1} X_{q} Y_{q+1} - Y_{p} X_{p+1} Y_{q} X_{q+1} + Y_{p} Y_{p+1} X_{q} X_{q+1} - Y_{p} Y_{p+1} Y_{q} Y_{q+1} ), \notag\\
	\frac{ a_{p}^\dagger a_{p+1}^\dagger a_{q} a_{q+1} - \mathrm{h.c.} }{2i} &= \frac{1}{16} ( - X_{p} X_{p+1} X_{q} Y_{q+1} - X_{p} X_{p+1} Y_{q} X_{q+1} + X_{p} Y_{p+1} X_{q} X_{q+1} - X_{p} Y_{p+1} Y_{q} Y_{q+1} \\
	&\qquad\ \ \, + Y_{p} X_{p+1} X_{q} X_{q+1} - Y_{p} X_{p+1} Y_{q} Y_{q+1} + Y_{p} Y_{p+1} X_{q} Y_{q+1} + Y_{p} Y_{p+1} Y_{q} X_{q+1} ). \notag
\end{align}
The $ a_{p}^\dagger a_{q}^\dagger a_{q+1} a_{p+1} $ terms feature the same Pauli operators, but with a different sign pattern in the linear combination.

The 3- and 4-local terms are best handled separately. For the 3-local terms, the $ q $th index is free to take any value different from $ p $ and $ p+1 $. We measure these qubits in the computational basis, and the $ p $ and $ (p+1) $th qubits in the same fashion as in the 1-RDM case. Then to obtain all combinations (triples) $ (p,p',q) $, we have to swap all pairs $ (p,p') $ into, say, the qubit ordering $ (0,1) $. This allows a single ordering to account for $ (n-2) $ triples. There are $ \binom{n}{2}(n-2) $ total triples to permute into, and so we require $ \binom{n}{2} $ different swap circuits. To measure all 8 terms, we require 4 different Pauli bases, thus $ 4\binom{n}{2} $ different measurement circuits.  In practice we anticipate using the parallel transposition sort of the $1$-RDM measurement with an additional $n-1$ measurements at each swap circuit to account for each $(p,p', q)$ triple as $n-2$ $(p, p', q)$ triples can be acquired simultaneously by measuring $p, p'$ in either $X$ or $Y$ and all other qubits in $Z$.

The 4-local terms require us to swap the orbital orderings into 2-combinations of pairs in order to measure all general $ a_{p}^\dagger a_{p'}^\dagger a_{q} a_{q'} $ terms. First, the number of nearest-neighbor pairs in $ \{0,\ldots,n-1\} $ is $ n-1 $. Then we count how many 2-combinations of these pairs we can construct. Note that locality between 2-combinations is not a constraint, since $ p $ and $ q $ do not have to be local. However, they must be disjoint, otherwise we are double counting the 3-local terms. Each swap circuit can account for $ \binom{n-2}{2} $ different pairs of pairs;~thus, since there are $ \binom{n}{2}\binom{n-2}{2} $ unique $ a_{p}^\dagger a_{p'}^\dagger a_{q} a_{q'} $ terms, exactly $ \binom{n}{2} $ swap circuits are required. This statement is proved by a simple combinatorial argument, which we defer to the generalization in \cref{sec:k-generalization}.

The 16 Pauli operators cannot be measured in 16 bases whenever $n > 7$. In particular, from Eq.~\eqref{eq:general_tpdm} we observe that only the $XXXX$ and $YYYY$ geminals can be measured simultaneously for $n > 7$. Therefore, in order to read off the 4-local Pauli observables, we rely on quantum overlap tomography (QOT)~\cite{cotler2020quantum} to provide asymptotic bounds.  Each $(p,p',q,q')$ term requires measuring almost all $4$-qubit marginals.  More precisely, we need only the marginal elements corresponding to expectation values of Pauli operators composed of $X$ and $Y$ operators.  QOT provides a bound for $4$-qubit bound that has $\mathcal{O}(\log(n))$ scaling whenever the perfect hash family is known.  Though large sets of the $(n,4)$-perfect hash families are documented, there remain significant gaps in $n$.  

A more general procedure that does not achieve the same asymptotic bound of Ref.~\cite{cotler2020quantum} is the construction of a suboptimal perfect hash family by bootstrapping from the binary partitioning scheme described in Ref.~\cite{bonet2019nearly}.  Using this approach, the leading order complexity, which upper bounds the true scaling, in the number of measurement settings is defined as the function EQOT (explicit quantum overlap tomography):
\begin{equation}\label{eq:qot_extende_scaling}
\mathrm{EQOT}(k, n) = 3^{k}(k - 1)\sum_{m = 0}^{\lceil \log(n) \rceil - 1} m^{k - 2}.
\end{equation}
This gives the number of partitions to measure all $k$-qubit marginals, multiplied by all $3^k$ bases for each partition. In the case of the 4-local 2-RDM terms, however, we do not need to measure in any Pauli basis containing a $Z$, so we can straightforwardly reduce the base of this prefactor to $2$.  Altogether, we can upper bound the number of circuit configurations as
\begin{equation}
M_{T}(\text{2-RDM}) = 1 + 4\binom{n}{2} + \binom{n}{2} \mathrm{EQOT}(4, n) \l(\frac{2}{3}\r{)^4}
\end{equation}
circuits to measure the full fermionic 2-RDM with this approach.  Again we emphasize that, though this $ \mathrm{EQOT}(4, n) $ scaling is not optimal, it will work for any $n$.  In the following section we prove the partition scaling for the component of the $k$-RDM with $2k$ distinct indices, and thus the complexity, given an optimal construction of the swap circuits.

\subsection{\label{sec:k-generalization}A $ k $-RDM generalization}

Each $ k $-RDM observable decomposes into up to $ 4^k $ Pauli operators, thus is it impractical to write out such terms for general $ k $ by hand. Nonetheless, we can still obtain a scaling estimate on how many measurement circuits are required to reach all $ k $-RDM elements. The following argument also applies for the 4-local terms of the 2-RDM in the previous section. Consider the asymptotically dominant terms, $ a_{p_1}^\dagger \cdots a_{p_k}^\dagger a_{p_k+1} \cdots a_{p_{2k}} $ where all $ p_i\neq p_j $ for $ i\neq j $. There are $ \binom{n}{k}\binom{n-k}{k} $ such index combinations. The number of $ k $-combinations of disjoint nearest-neighbor pairs taken from $ \{(0,1),(1,2),\ldots,(n-2,n-1)\} $ is equivalent to counting how many unique sets of $ k $ nonconsecutive integers from $ \{0,\ldots,n-2\} $ exist. This is a classic ``stars and bars'' combinatorial problem and has solution $ \binom{(n-1)-k+1}{k}=\binom{n-k}{k} $. We can see this by a visual argument:~write down $ (n-1)-k $ spaces where the unchosen numbers will be placed in order. Then there are $ (n-1)-k-1 $ gaps in between the spaces, plus the $ 2 $ endpoints, where the chosen numbers can be placed, of which we choose $ k $. Thus we require $ \binom{n}{k} $ different swap circuits, hence $ \Omega\l[\binom{n}{k} 4^k\r] $ unique measurement circuits, to reach all $ k $-RDM elements. Similar to the 2-RDM case, this lower bound is in general not achievable, as we require (E)QOT to measure the Pauli operators in parallel, thus incurring polylogarithmic factors. We can upper bound this scaling by counting circuit repetitions for each $k$-RDM element partitioned by the number of unique indices.

\subsubsection{Upper bounds for $k = 3, 4$}
To derive an upper bound  for arbitrary $n$ in terms of measurement configurations for the 3-RDM and 4-RDM, we can follow the same procedure as the 2-RDM:~count circuits for measuring RDM terms after partitioning based on the number of unique indices in each RDM element.  The 3-RDM is partitioned into terms with 3, 4, 5, 6 different indices.  This can be checked by building a basis for the unique 3-RDM elements indexed by a tuple $(p,q,r)$ with $p < q < r$.  The terms with only 3 unique indices are analogous to the 2-index case for the 2-RDM---e.g., the Jordan--Wigner transformation of the three index term corresponds to diagonal 3-RDM elements and thus involves only Pauli-$Z$ operators, as follows. All three index terms are of the form
\begin{equation}
a_{p}^{\dagger}a_{q}^{\dagger}a_{r}^{\dagger}a_{r}a_{q}a_{p} = \frac{-Z_{p} - Z_{q} - Z_{r} + Z_{p}Z_{q} + Z_{p}Z_{r} + Z_{q}Z_{r} - Z_{p}Z_{q}Z_{r}}{8}
\end{equation}
and can be measured in one permutation of qubits (the identity permutation).

The terms with four unique indices are similar to the 3-index 2-RDM case.  For example, consider the real component of a 4-index 3-RDM term,
\begin{equation}
a_{p}^{\dagger}a_{q}^{\dagger}a_{r}^{\dagger}a_{r}a_{q}a_{p + 1} + \mathrm{h.c.} = \frac{1}{8}[X_{p}X_{p + 1} + Y_{p}Y_{p+1} + (X_{p}X_{p+1} + Y_{p}Y_{p+1})( Z_{q}Z_{r} -Z_{q} - Z_{r} )].
\end{equation}
These terms can be measured in $n/2$ circuits for the $XX$ and $YY$ parts, followed by $\binom{n}{2}$ circuits for the $XXZ$ and $YYZ$ terms, using the fact that $XX + YY$ commutes with $ZZ$, allowing us to use the measurement circuit from Ref.~\cite{arute2020hartree}.  At each of the $n/2$ circuit configurations, all $n-1$ pairs must account for all other $Z$ operators.  Thus in total, counting real and imaginary parts, we have a total of $4 \binom{n}{2}$ qubit permutations to measure all 4-index terms.

The 5-index terms contain one $Z$ term and can be measured with swap circuits analogous to the 4-index case in the 2-RDM.  Consider the example of the real-component of the 5-index 3-RDM term,
\begin{equation}
a_{p}^{\dagger}a_{q}^{\dagger}a_{r}^{\dagger}a_{r}a_{q+1}a_{p+1} + a_{p+1}^{\dagger}a_{q+1}^{\dagger}a_{r}^{\dagger}a_{r}a_{q}a_{p} = \frac{1}{16}\left( A_{pq} + Z_{r}A_{pq} \right),
\end{equation}
where we define
\begin{equation}
\begin{split}
A_{pq} &\equiv (X_{p}X_{p+1}X_{q}X_{q+1} + X_{p}X_{p+1}Y_{q}Y_{q+1} + X_{p}Y_{p+1}X_{q}Y_{q+1} - X_{p}Y_{p+1}Y_{q}X_{q+1} \\
&\quad + Y_{p}X_{p+1}Y_{q}X_{q+1} - Y_{p}X_{p+1}X_{q}Y_{q+1}  + Y_{p}Y_{p+1}X_{q}X_{q+1} + Y_{p}Y_{p+1}Y_{q}Y_{q+1} ),
\end{split}
\end{equation}
with $r \notin \{p, p+1, q, q+1\}$.  We can obtain an upper bound for the number of unique measurements settings as $\binom{n}{k}M$, where $M$ is the complexity of measuring $5$-qubit marginals, again using the technique of Ref.~\cite{bonet2019nearly}. The scaling for $M$ is given in Eq.~\eqref{eq:qot_extende_scaling} for $k=5$ and has a prefactor of $3^{5}$ to measure all $X$, $Y$, and $Z$ terms.  Finally, the 6-index term involves 64 separate terms, including the imaginary terms, which is measured by constructing the $\binom{n}{k}$ swap circuits and using EQOT for each permutation on the $6$-qubit marginal terms which make up the 3-RDM element.  
\begin{figure}[t]
    \centering
    \includegraphics[width=0.7\columnwidth]{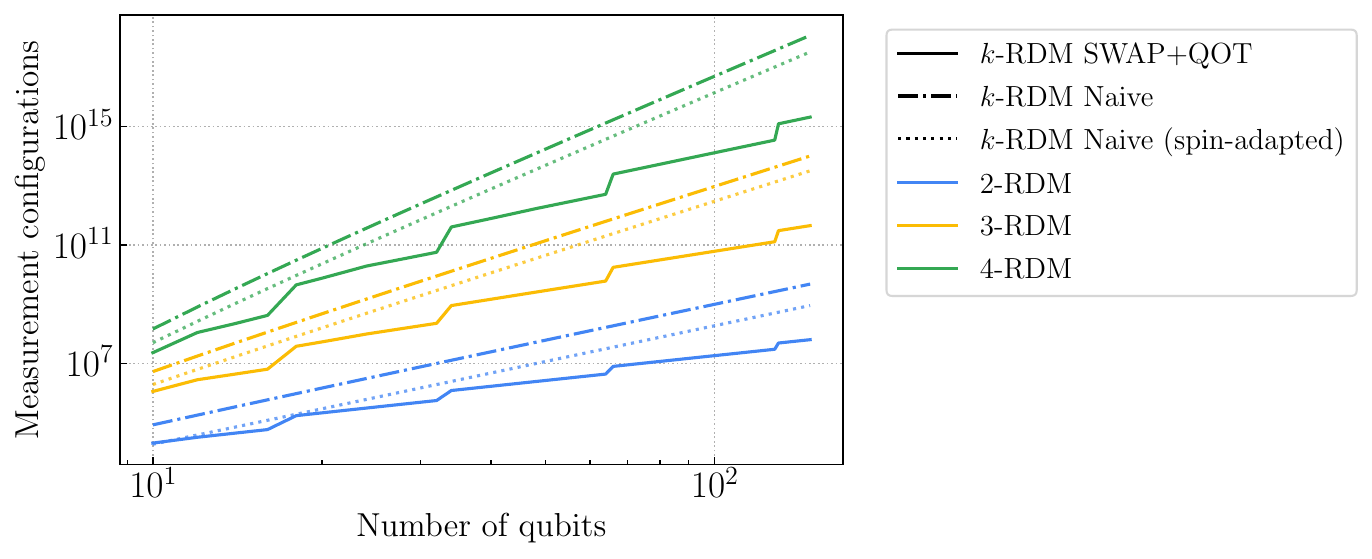}
    \caption{Scaling of the number of measurement configurations needed while measuring the $k$-RDM, for $k=2, 3, 4$ (blue, yellow, green), via the swap network protocol combined with quantum overlap tomography (solid lines) compared against the protocol where the unique upper triangle of the supermatrix representing the $k$-RDM is measured.  For the naive measurement strategy we scale the unique number  of terms in  the $k$-RDM by  $3^{2k}$ which is  obtained by counting products of $\sigma^{+},\sigma^{-},Z$. A constant-factor improvement is possible by taking advantage of $ S_z $-spin symmetry and only considering the spin-adapted blocks of the RDM supermatrix~\cite{rubin2018application}. The swap protocol upper bound is a quadratic improvement over the naive scaling. The solid curves corresponding to the SWAP+EQOT protocol are equivalent to the orange points in Fig.~\ref{fig:mainfig} of the main text.} 
    \label{fig:eqot_vs_naive}
\end{figure}

Overall, for the 3-RDM we can loosely bound the number of measurements as
\begin{equation}
M_{T}(\text{3-RDM}) = 1 + 4\binom{n}{2}  + \binom{n}{2}\mathrm{EQOT}(5)+ \binom{n}{3}\mathrm{EQOT}(6, n) \left(\frac{2}{3}\right)^{6},
\end{equation}
where each term corresponds to measuring the 3-, 4-, 5-, and 6-index terms of the 3-RDM, respectively. We note that this is an overestimate since Eq.~\eqref{eq:qot_extende_scaling} provides an upper bound to the $k$-qubit marginal measurement.  It is further loosened by the fact that we clearly do not need to measure \emph{all} $k$-qubit marginal terms.  

A similar accounting can be performed for the 4-RDM by breaking the unique 4-RDM elements into sets consisting of terms with 4, 5, 6, 7, and 8 unique indices.  This gives the upper bound on measurement configurations as
\begin{equation}
M_{T}(\text{4-RDM}) = 1 + 4\binom{n}{2} + \binom{n}{2}\mathrm{EQOT}(6, n) + \binom{n}{3}\mathrm{EQOT}(7, n) + \binom{n}{4}\mathrm{EQOT}(8, n) \left(\frac{2}{3}\right)^{8}.
\end{equation}

In \cref{fig:eqot_vs_naive} we plot the scaling of the swap network EQOT protocol against naively measuring the upper triangle of unique $k$-RDM elements in a supermatrix representation, as a function of the number of fermionic modes.

\section{\label{sec:numerics_appendix}Supplementary numerical calculations}

Here we provide some additional findings with our numerical studies. These are not essential to the primary results of the main text, but rather serve to explore some of the more subtler points of our partial tomography scheme.

\subsection{\label{sec:hyperparameter}Hyperparameter tuning}

As mentioned in the main text, we may control how many circuits to randomly generate by setting a hyperparameter $ r $ such that all $ K_r $ unitaries correspond to measurements of all observables at least $ r $ times each. Since increasing the sample size decreases the frequency of outlier events (i.e., some subset of observables being accounted for more often than the rest), we then expect that the value of $ K_r/r $ decrease as a function of $ r $. Indeed, this is a generic feature of randomization, as observed in the numerical results of the original work on classical shadows~\cite{huang2020predicting}. This effect is demonstrated, with the 2-RDM as an example, in \cref{fig:K_vs_r}. The behavior is consistent as expected.

\begin{figure}[h]
	\includegraphics[width=0.45\columnwidth]{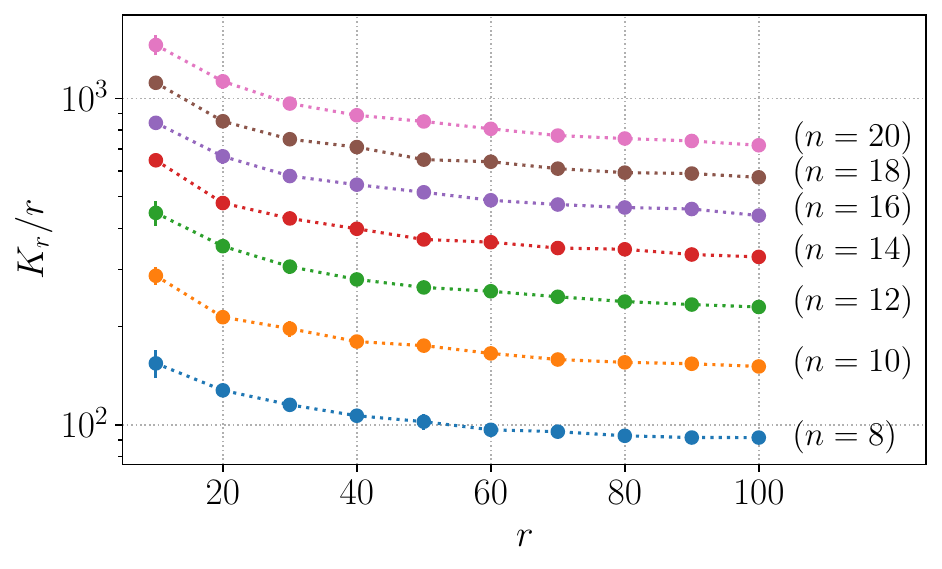}
	\caption{The relation between $ K_r/r $ and $ r $ for various numbers of modes under the $ \mathcal{U}_{\FGU} $ ensemble, with respect to covering $ 2 $-RDM observables. Since $ K_r $ is a random variable, we average over 10 randomly generated circuit collections for each $ r $ and indicate 1 standard deviation with uncertainty bars.}
	\label{fig:K_vs_r}
\end{figure}

\subsection{Realistic time estimates}

The main experimental difference between deterministic and randomized measurement schemes is the number of unique circuits one must run. In general, the number of random unitaries ($ K_r $) will be larger than the deterministic clique cover size ($ C $). Thus depending on the architecture, reprogramming the quantum device for each new circuit may incur a nontrivial overhead in the actual wall-clock time of the algorithm. Here, we show that for realistic experimental parameters, this consideration does not affect the results presented in the main text.

Suppose the quantum device can repeatedly sample a fixed circuit at a rate $ \fs $, but requires time $ \tl $ to load a new circuit. Then the total measurement time under the two paradigms are
\begin{align}
	T_{\mathrm{deter}} &= C \l( \frac{S}{\fs} + \tl \r), \label{eq:T_deter}\\
	T_{\mathrm{rand}} &= K_r \l( \frac{\lceil S/r \rceil}{\fs} + \tl \r), \label{eq:T_rand}
\end{align}
where we recall that $ S = \O(1/\varepsilon^2) $. Note that in the regime where $ S \gg \fs\tl $, we may directly compare $ C $ to $ K_r/r $, as in the main text. For hardware-dependent estimates, we take the specifications of the Google Sycamore chip as an example~\cite{arute2019quantum,arute2020hartree,arute2020quantum,arute2020observation}. The reported parameter values are $ \fs = 5 \times 10^3 $~Hz and $ \tl = 0.1 $~s~\cite{sung2020exploration}, and for \cref{fig:google_meas_times}, we set $ S = 2.5 \times 10^{5} $, in line with the number of shots taken to estimate 1-RDM elements in a recent Hartree--Fock experiment~\cite{arute2020hartree}. We observe no qualitative differences from the results of the main text.

\begin{figure}
	\includegraphics[width=\textwidth]{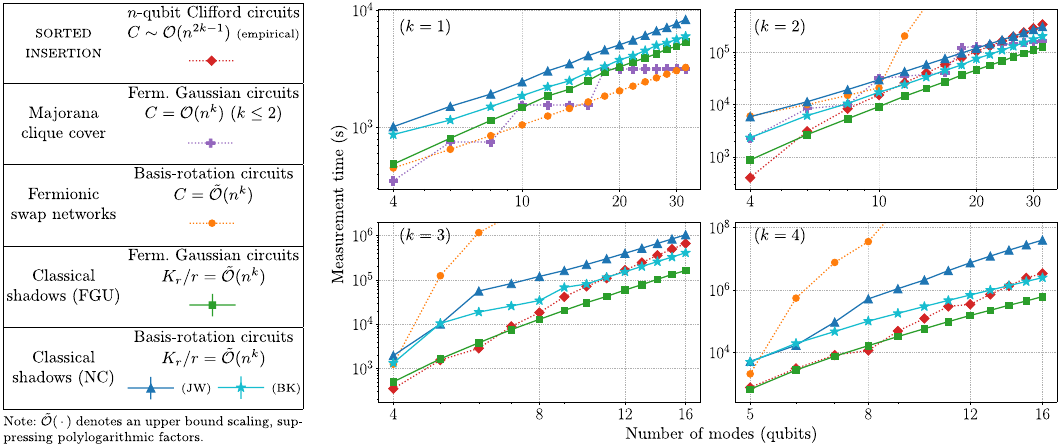}
	\caption{Measurement times for estimating $ k $-RDMs, calculated with \cref{eq:T_deter,eq:T_rand}, under the reported device parameters of the Google Sycamore chip~\cite{sung2020exploration} and with $ S = 2.5 \times 10^5 $. The underlying data is that of \cref{fig:mainfig} in the main text. For convenience, we reproduce the legend of the main text here.}
	\label{fig:google_meas_times}
\end{figure}

One may also study the performance of the different methods as a function of the target accuracy. In \cref{fig:time_scaling_with_S}, we show how $ T_{\mathrm{rand}} $ scales with $ S $ for the 2-RDM, using a few values of $ n $ as illustrative examples. Though the threshold at which randomization begins to outperform the other methods varies depending on $ n $ and $ k $, it typically lies below $ {\sim} \, 10^5 $, which is well under typical sampling requirements. Note that the $ S \gg \fs\tl $ regime corresponds to when $ T_{\mathrm{rand}} $ scales linearly with $ S $.

\begin{figure}
    \includegraphics[width=\textwidth]{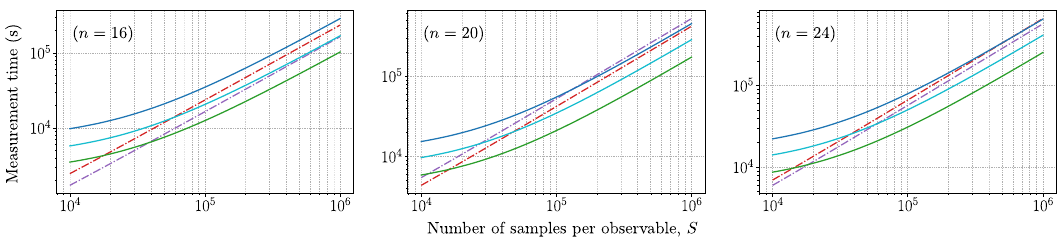}
    \caption{Examples of how the measurement times for our randomized schemes scale with the level of precision, as given by \cref{eq:T_rand}. We use the same device parameters here as in \cref{fig:google_meas_times}. For comparison, we also plot the linear scaling of the prior deterministic strategies (excluding the swap network method), which gives an indication for the values of $ S $ beyond which we obtain an advantage with our methods under this time-cost model. The colors correspond to those in the legend of \cref{fig:google_meas_times}.}
    \label{fig:time_scaling_with_S}
\end{figure}

\subsection{\label{sec:hamiltonian_appendix}Hamiltonian averaging}

In the context of estimating a single observable, whose expectation value is a linear combination of RDM elements, the number of circuit repetitions required is more properly determined by taking into account the coefficients of the terms and covariances between simultaneously measured terms~\cite{wecker2015progress,mcclean2016theory,rubin2018application}. This can be directly calculated from the (single-shot) variance of the corresponding estimator. Here we provide some preliminary numerical calculations in this context, with respect to $ \mathcal{U}_{\FGU} $. Although our uniformly distributed ensemble is not tailored for Hamiltonian averaging, these calculations provide a benchmark for potential improvement.

Without loss of generality, consider a traceless fermionic $ k $-body Hamiltonian
\begin{equation}
	H = \sum_{j=1}^k \sum_{\bm{\mu} \in \comb{2n}{2j}} h_{\bm{\mu}} \Gamma_{\bm{\mu}}, \quad  h_{\bm{\mu}} \in \R.
\end{equation}
For the numerical studies presented here, we consider a sample of molecular Hamiltonians (hence $ k = 2 $) at equilibrium nuclear geometry, obtained through OpenFermion~\cite{openfermion} interfaced with the Psi4 electronic structure package~\cite{psi4}. We used a minimal STO-3G orbital basis set to generate these Hamiltonians, except for the $ \mathrm{H}_2 $ molecule, which was represented in the 6-31G basis.

In Table~\ref{tab:H_variances} we compare our classical shadows (CS) $ \mathcal{U}_{\FGU} $ ensemble against two prominent measurement schemes for electronic-structure Hamiltonians:~a strategy based on a low-rank factorization of the coefficient tensor, termed basis-rotation grouping (BRG)~\cite{huggins2019efficient}, and a locally biased adaptation on classical shadows (LBCS)~\cite{hadfield2020measurements}. For reference, we also report the variance under standard classical shadows with Pauli measurements~\cite{huang2020predicting}. The expressions for the variances in terms of the Hamiltonian terms and a reference state $ \rho $ (taken here to be the ground state) are provided in their respective references. While we provide a state-independent upper bound in \cref{sec:arb_observable_appendix}, the exact variance expression for our $ \mathcal{U}_\FGU $ ensemble on arbitrary observables has been derived in Refs.~\cite{wan2022matchgate,ogorman2022fermionic}.

Note that, in order to compare fairly between the deterministic (BRG) and randomized methods (CS, LBCS), we reframe the deterministic measurement scheduling such that an equivalent variance quantity may be computed. This principle was also used in the numerical comparisons of Ref.~\cite{hadfield2020measurements}, which we generalize here. We decompose the target Hamiltonian as
\begin{equation}
    H = \sum_{\ell=1}^L O_\ell,
\end{equation}
where each $ \tr(O_\ell \rho) $ may be estimated by a single measurement setting (as defined by the given strategy). The optimal distribution of measurements then allocates a fraction
\begin{equation}\label{eq:optimal_fraction}
    p_\ell \coloneqq \frac{\sqrt{\V_{\rho}[O_\ell]}}{\sum_{j=1}^L \sqrt{\V_{\rho}[O_j]}}
\end{equation}
of the total measurement budget to the $ \ell $th setting~\cite{rubin2018application}. The variance here is simply the quantum-mechanical operator variance,
\begin{equation}
    \V_{\rho}[O_\ell] = \tr(O_\ell^2 \rho) - \tr(O_\ell \rho)^2.
\end{equation}
Recognizing $ \{p_\ell\}_\ell $ as a collection of positive numbers which sum to unity, we may recast the deterministic strategy into the language of randomization, where the unbiased estimator is given by
\begin{equation}
    \E_{\ell, \rho}\l[ \frac{1}{p_\ell} O_\ell \r] = \E_{\ell}\l[ \frac{1}{p_\ell} \tr(O_\ell \rho) \r] = \tr(H \rho).
\end{equation}
The variance of this estimator is therefore
\begin{equation}\label{eq:variance_general}
    \begin{split}
    \V_{\ell, \rho}\l[ \frac{1}{p_\ell} O_\ell \r] &= \E_{\ell, \rho}\l[ \frac{1}{p_\ell^2} O_\ell^2 \r] - \E_{\ell, \rho}\l[ \frac{1}{p_\ell} O_\ell \r{]^2}\\
    &= \sum_{\ell=1}^L \frac{1}{p_\ell} \tr(O_\ell^2 \rho) - \tr(H \rho)^2.
    \end{split}
\end{equation}
Note that this analysis does not actually require a randomization the deterministic strategy, but merely normalizes the measurement allocations so as to produce an equivalent figure of merit.

\begin{table}
    \caption{Variances of Hamiltonian averaging estimators under various strategies, reported in units of $ \mathrm{Ha}^2 $. The expressions for the CS (Pauli) and LBCS variances may be found in their original references, although the fundamental formula for all table entries is given by \cref{eq:variance_general}, where the decomposition into measurable terms $ O_\ell $ and probabilities $ p_\ell $ are determined by the particular method. For CS (FGU), the explicit expression is provided in Refs.~\cite{wan2022matchgate,ogorman2022fermionic}. The reference state used here is the exact ground state.}
    \begin{tabular}{c c c c c}
    \toprule
    {} & \multicolumn{4}{c}{Methods}\\
    \cmidrule{2-5}
    Molecule (qubits) & CS (Pauli)~\cite{huang2020predicting} & LBCS~\cite{hadfield2020measurements} & BRG~\cite{huggins2019efficient} & CS (FGU)\\
    \midrule
    $ \text{H}_2 $ (8) & 51.4 & 17.5 & 22.6 & 69.6\\
    $ \text{LiH} $ (12) & 266 & 14.8 & 7.0 & 155\\
    $ \text{BeH}_2 $ (14) & 1670 & 67.6 & 68.3 & 586\\
    $ \text{H}_2\text{O} $ (14) & 2840 & 257 & 6559 & 8440\\
    $ \text{NH}_3 $ (16) & 14400 & 353 & 3288 & 5846\\
    \bottomrule
    \end{tabular}
	\label{tab:H_variances}
\end{table}

While \cref{eq:optimal_fraction} provides the optimal distribution of measurements, one may use any distribution in its place. In particular, because $ \rho $ is unknown, a state-independent approximation to $ \{p_\ell\}_\ell $ is often a more practical option;~tighter bounds may be obtained by a classically tractable approximation to the true, unknown state. For simplicity, in \cref{tab:H_variances} we take $ \rho $ as the exact ground state within the model chemistry. We note that Ref.~\cite{huggins2019efficient} showed evidence that the discrepancy between using the exact ground state and the state obtained from a configuration interaction with single and double excitations (CISD) calculation is negligible in this context.

Our main takeaway from \cref{tab:H_variances} is that, similar to how Pauli measurements can be dramatically improved via solving an optimization problem targeted specifically at minimizing this variance~\cite{hadfield2020measurements}, our classical shadows method similarly has a large room for improvement. Since only LBCS performs such an optimization, it is perhaps not too surprising that it is the most efficient approach here, despite only employing Pauli measurements. Encouragingly, such a biasing (or similar optimization-based extensions) may also be applied to our classical shadows ensemble in principle.
It should be noted that these results assume the noiseless case;~for instance, the BRG strategy additionally offers resilience to device errors and the ability to postselect on particle number, which have the effect of reducing noise-induced contributions to the variance~\cite{huggins2019efficient}.

\bibliography{references}

\end{document}